\begin{document}

\bibliographystyle{alpha}
\newcommand{\cn}[1]{\overline{#1}}
\newcommand{\e}[0]{\epsilon}
\newcommand{\EE}{\ensuremath{\mathbb{E}}}
\newcommand{\qq}[1]{(q;q)_{#1}}
\newcommand{\A}{\ensuremath{\mathcal{A}}}
\newcommand{\GT}{\ensuremath{\mathbb{GT}}}
\newcommand{\link}{\ensuremath{Q}}
\newcommand{\PP}{\ensuremath{\mathbb{P}}}
\newcommand{\frakP}{\ensuremath{\mathfrak{P}}}
\newcommand{\frakQ}{\ensuremath{\mathfrak{Q}}}
\newcommand{\frakq}{\ensuremath{\mathfrak{q}}}
\newcommand{\R}{\ensuremath{\mathbb{R}}}
\newcommand{\Rplus}{\ensuremath{\mathbb{R}_{+}}}
\newcommand{\C}{\ensuremath{\mathbb{C}}}
\newcommand{\Z}{\ensuremath{\mathbb{Z}}}
\newcommand{\N}{\ensuremath{\mathbb{N}}}
\newcommand{\Weyl}[1]{\ensuremath{\mathbb{W}}^{#1}}
\newcommand{\Zgzero}{\ensuremath{\mathbb{Z}_{>0}}}
\newcommand{\Zgeqzero}{\ensuremath{\mathbb{Z}_{\geq 0}}}
\newcommand{\Zleqzero}{\ensuremath{\mathbb{Z}_{\leq 0}}}
\newcommand{\Q}{\ensuremath{\mathbb{Q}}}
\newcommand{\T}{\ensuremath{\mathbb{T}}}
\newcommand{\Y}{\ensuremath{\mathbb{Y}}}
\newcommand{\M}{\ensuremath{\mathbf{M}}}
\newcommand{\MM}{\ensuremath{\mathbf{MM}}}
\newcommand{\W}[1]{\ensuremath{\mathbf{W}}_{(#1)}}
\newcommand{\WM}[1]{\ensuremath{\mathbf{WM}}_{(#1)}}
\newcommand{\Zsd}{\ensuremath{\mathbf{Z}}}
\newcommand{\Fsd}{\ensuremath{\mathbf{F}}}
\newcommand{\symBM}{\ensuremath{\mathbf{W}}}
\newcommand{\symFE}{\ensuremath{\mathbf{S}}}

\newcommand{\const}{\ensuremath{\textrm{const}}}

\newcommand{\I}{\ensuremath{\mathbf{i}}}

\newcommand{\Real}{\ensuremath{\mathrm{Re}}}
\newcommand{\Imag}{\ensuremath{\mathrm{Im}}}
\newcommand{\re}{\ensuremath{\mathrm{Re}}}

\newcommand{\Sym}{\ensuremath{\mathrm{Sym}}}

\newcommand{\bfone}{\ensuremath{\mathbf{1}}}

\newcommand{\whitenoise}{\ensuremath{\mathscr{\dot{W}}}}
\newcommand{\alphaW}[1]{\ensuremath{\mathbf{\alpha W}}_{(#1)}}
\newcommand{\alphaWM}[1]{\ensuremath{\mathbf{\alpha WM}}_{(#1)}}
\newcommand{\malpha}{\ensuremath{\hat{\alpha}}}
\newcommand{\walpha}{\ensuremath{\alpha}}
\newcommand{\edge}{\textrm{edge}}
\newcommand{\dist}{\textrm{dist}}

\newcommand{\OO}[0]{\Omega}
\newcommand{\F}[0]{\mathfrak{F}}
\newcommand{\poly}[0]{R}
\def \Ai {{\rm Ai}}
\def \Pf {{\rm Pf}}
\def \sgn {{\rm sgn}}
\def \SS {\mathcal{S}}
\newcommand{\poles}{\mathbb{A}}
\def \ss {\mathcal{X}}
\newcommand{\var}{{\rm var}}

\def \leftp {\stackrel{+}{\longleftarrow}}
\def \leftm {\stackrel{-}{\longleftarrow}}
\def \leftpm {\stackrel{\pm}{\longleftarrow}}
\def \leftmp {\stackrel{\mp}{\longleftarrow}}
\def \rightp {\stackrel{+}{\longrightarrow}}
\def \rightm {\stackrel{-}{\longrightarrow}}
\def \rightpm {\stackrel{\pm}{\longrightarrow}}
\def \rightmp {\stackrel{\mp}{\longrightarrow}}

\newcommand{\Res}[1]{\underset{{#1}}{\mathbf{Res}}}
\newcommand{\Resq}[1]{\underset{{#1}}{\mathbf{Res}^q}}
\newcommand{\Resc}[1]{\underset{{#1}}{\mathbf{Res}^c}}
\newcommand{\Resfrac}[1]{\mathbf{Res}_{{#1}}}
\newcommand{\Sub}[1]{\underset{{#1}}{\mathbf{Sub}}}
\newcommand{\Subq}[1]{\underset{{#1}}{\mathbf{Sub}^q}}
\newcommand{\Subc}[1]{\underset{{#1}}{\mathbf{Sub}^c}}

\newcommand{\ul}[2]{\underline{#1}_{#2}}
\newcommand{\qhat}[1]{\widehat{#1}^{q}}
\newcommand{\La}[0]{\Lambda}
\newcommand{\la}[0]{\lambda}
\newcommand{\ta}[0]{\theta}
\newcommand{\w}[0]{\omega}
\newcommand{\ra}[0]{\rightarrow}
\newcommand{\vectoro}{\overline}
\newtheorem{theorem}{Theorem}[section]
\newtheorem{partialtheorem}{Partial Theorem}[section]
\newtheorem{conj}[theorem]{Conjecture}
\newtheorem{lemma}[theorem]{Lemma}
\newtheorem{proposition}[theorem]{Proposition}
\newtheorem{corollary}[theorem]{Corollary}
\newtheorem{claim}[theorem]{Claim}
\newtheorem{formal}[theorem]{Critical point derivation}
\newtheorem{experiment}[theorem]{Experimental Result}
\newtheorem{prop}{Proposition}

\def\todo#1{\marginpar{\raggedright\footnotesize #1}}
\def\change#1{{\color{green}\todo{change}#1}}
\def\note#1{\textup{\textsf{\Large\color{blue}(#1)}}}

\theoremstyle{definition}
\newtheorem{remark}[theorem]{Remark}

\theoremstyle{definition}
\newtheorem{example}[theorem]{Example}

\theoremstyle{definition}
\newtheorem{definition}[theorem]{Definition}

\theoremstyle{definition}
\newtheorem{definitions}[theorem]{Definitions}


\newcommand{\FF}{\mathcal R}
\newcommand{\ga}{\gamma}
\newcommand{\ep}{\epsilon}
\newcommand{\s}{\mathcal S}
\newcommand{\sss}{\mathsf{set}}
\newcommand{\p}{\mathsf{prev}}
\newcommand{\aRp}{\xrightarrow[]{+}}
\newcommand{\aRm}{\xrightarrow[]{-}}
\newcommand{\aLp}{\xleftarrow[]{+}}
\newcommand{\aLm}{\xleftarrow[]{-}}
\newcommand{\pp}{\mathcal P}
\newcommand{\q}{\mathcal Q}

\newcommand{\bj}{\mathbf j}


\begin{abstract}
We study the partition function of two versions of the continuum directed polymer in $1+1$ dimension. In the full-space version, the polymer starts at the origin and is free to move transversally in $\R$, and in the half-space version, the polymer starts at the origin but is reflected at the origin and stays in $\R_{-}$. The partition functions solves the stochastic heat equation in full-space or half-space with mixed boundary condition at the origin; or equivalently the free energy satisfies the Kardar-Parisi-Zhang equation.

We derive exact formulas for the Laplace transforms of the partition functions. In the full-space this is expressed as a Fredholm determinant while in the half-space this is expressed as a Fredholm Pfaffian. Taking long-time asymptotics we show that the limiting free energy fluctuations scale with exponent $1/3$ and are given by the GUE and GSE Tracy-Widom distributions. These formulas come from summing divergent moment generating functions, hence are not mathematically justified.

The primary purpose of this work is to present a mathematical perspective on the polymer replica method which is used to derive these results. In contrast to other replica method work, we do not appeal directly to the Bethe ansatz for the Lieb-Liniger model but rather utilize nested contour integral formulas for moments as well as their residue expansions.
\end{abstract}

\title{Directed random polymers via nested contour integrals}
\author[A. Borodin]{Alexei Borodin}
\address{A. Borodin,
Massachusetts Institute of Technology,
Department of Mathematics,
77 Massachusetts Avenue, Cambridge, MA 02139-4307, USA, and Institute for Information Transmission Problems, Bolshoy Karetny per. 19, Moscow 127994, Russia}
\email{borodin@math.mit.edu}

\author[A. Bufetov]{Alexey Bufetov}
\address{A. Bufetov,
Massachusetts Institute of Technology,
Department of Mathematics,
77 Massachusetts Avenue, Cambridge, MA 02139-4307, USA, and International Laboratory of Representation Theory and Mathematical Physics, National Research University Higher School of Economics, Moscow, Russia}
\email{alexey.bufetov@gmail.com}

\author[I. Corwin]{Ivan Corwin}
\address{I. Corwin, Columbia University,
Department of Mathematics,
2990 Broadway,
New York, NY 10027, USA,
and Clay Mathematics Institute, 10 Memorial Blvd. Suite 902, Providence, RI 02903, USA}
\email{ivan.corwin@gmail.com}

\maketitle

\setcounter{tocdepth}{2}
\tableofcontents
\hypersetup{linktocpage}

\section{Introduction}
The replica method for studying directed polymers, pioneered by Kardar in 1987 \cite{K}, has received a great deal of attention recently (cf. \cite{Dot,CDR,CDprl,ProS1,ProS2,ProSpoComp,ImSa,ImSaKPZ,CDlong,ImSaKPZ2,Dot2,Dot3,Dot4,Dot5,ISS}). In this paper we present a version of this method which ultimately leads to both the GUE and GSE Tracy-Widom distributions (for full-space and half-space polymers, respectively). Our aim is primarily pedagogical. We strive for mathematical clarity and do away with certain assumptions used in previous works (such as completeness of the Bethe ansatz and the evaluation of eigenfunction norms).

In this introduction we briefly highlight the central feature of our version of the polymer replica method -- the use of nested contour integrals and their residue expansions (rather than direct application of the Bethe ansatz) -- and briefly explain its connection to earlier work. The final result of our computations provides formulas characterizing the distributions of the partition function of full-space and half-space continuum directed random polymers.

We have attempted to provide computations which are as mathematically sound as possible (without getting into too many technicalities). Thus, up to a few technical assumptions on uniqueness of solutions to the delta Bose gas, our computations of the polymer partition function moments are mathematically rigorous. In the course of this computation all stated lemmas and propositions are accompanied by proofs.

Unfortunately, the moments of the partition function {\it do not} uniquely characterize the one-point distribution. Therefore, our final step of recovering the partition function's Laplace transform from the (divergent) moment generating function is, at its heart, unjustifiable. In fact, this issue plagues all of the aforementioned replica method works.  We proceed nevertheless and by an illegal application of a Mellin-Barnes summation trick we convert this divergent generating function into a convergent series of integrals. In the case of the full-space polymer this series can be matched to the formula derived independently and in parallel in \cite{ACQ,SaSp,Dot,CDR} and proved in \cite{ACQ} (see also \cite{BCF} for a second proof). In the half-space polymer there are no corresponding rigorous results yet with which to compare the answer. The replica method work of \cite{LD} (see also \cite{LDexpand}) also deals with the half-space polymer in the limit as our parameter $a$ characterizing the interaction of the polymer with the wall goes to infinity (the wall becomes absorbing). We do not take this limit and hence do not compare our results to those derived in \cite{LD}. However, let us remark that in \cite{LD}, the authors recover the predicted GSE Tracy-Widom distribution.

We believe that the most convincing argument for why this unjustified procedure produces the correct answer is that it is a shadow of a totally parallel procedure which can be performed rigorously on a suitable $q$-deformed regularization of our present model. For the full-space case this has been done in \cite{BorCor,BCS, BorCordiscrete, CorPetpush,Corhahn,CSS, CorPetdual, BarCorHahn, BarCorBeta}. Taking a suitable $q\to 1$ limit of the final formulas from these works provides a rigorous derivation of the Laplace transform formula we non-rigorously derive herein. A parallel treatment of the half-space case has not yet been performed (see Section \ref{mathrig} for more on rigorous mathematical work related to this paper).

\subsection{Nested contour integrals}

The polymer replica method relies upon the fact that the joint moments (at a fixed time $t$ and different spatial locations $x_1,\ldots, x_k$) of the partition function for the directed polymer (introduced in Section \ref{modelssec}) satisfy certain closed systems of evolution equations (see Section \ref{mappingsec}). These systems go by the name of the delta Bose gas or the Lieb-Liniger model with two-body delta interaction (we will use both names interchangeably), and variants of them hold in relation to both the full-space and half-space polymers. They are known to be integrable, which means that solving them can be reduced to solving a system of $k$ free one-body evolution equations subject to $k-1$ two-body boundary conditions (see Definitions \ref{Akmbs} and \ref{Bkmbs} for these systems).

The typical approach employed to solve the free evolution equation with $k-1$ two-body boundary conditions is to try to diagonalize the system (for instance, via the Bethe ansatz). The Bethe ansatz produces eigenfunctions, but does not a priori provide the knowledge of the relevant subspace of eigenfunctions on which to decompose the initial data as well as the knowledge of the norms of the eigenfunctions. Though we remark more on this approach below in Section \ref{relationBAsec}, it is not the route we follow. Instead, for delta function initial data we directly solve the system via a single $k$-fold nested contour integral (see Lemmas \ref{solnlemma2} and \ref{solnlemma} in Section \ref{solnsec}). This solution is easily checked to satisfy the desired system by simple residue calculus.

The type of result one hopes to get from the spectral approach is recovered by deforming the nested contours in these formulas to all coincide with the same contour $\I \R$. The eigenfunctions, their norms and the relevant subspace all come immediately out of the residue expansion coming from the poles crossed during these contour deformations (see Section \ref{relationBAsec} for more on this). 

In Section \ref{modelssec} we introduce the full-space continuum directed random polymer partition function $Z(t,x)$ as well as the half-space analog $Z^a(t,x)$ where $a$ determines an energetic cost/rewards for the polymer paths reflected at the origin ($a>0$ corresponds to an energetic cost or repulsive interaction and $a<0$ corresponds to an energetic reward or attractive interaction). We define the joint moments for these partition functions as
\begin{equation*}
\bar{Z}(t;\vec{x}) :=\EE\big[Z(t,x_1)\cdots Z(t,x_k)\big],   \qquad\qquad \bar{Z}^a(t;\vec{x}) :=\EE\big[Z^a(t,x_1)\cdots Z^a(t,x_k)\big]
\end{equation*}
where $\vec{x}= (x_1,\ldots, x_k)$ is assumed to be ordered as $x_1\leq \cdots \leq x_k$ for the full-space case and as $x_1\leq \cdots \leq x_k\leq 0$ for the half-space case.

We record equations (\ref{Aknci}) and (\ref{Bknci}) which are nested contour integral formulas for these joint moments. In the full-space case we show that
\begin{equation}\label{Aknciintro}
\bar{Z}(t;\vec{x}) =  \int_{\alpha_1-\I \infty}^{\alpha_1+\I\infty} \frac{dz_1}{2\pi \I} \cdots \int_{\alpha_k-\I \infty}^{\alpha_k+\I\infty} \frac{dz_k}{2\pi \I} \prod_{1\leq A<B\leq k} \frac{z_A-z_B}{z_A-z_B-1} \, \prod_{j=1}^{k} e^{\frac{t}{2} z_j^2 + x_jz_j},
\end{equation}
where we assume that $\alpha_1>\alpha_2 + 1 > \alpha_3 + 2> \cdots > \alpha_k + (k-1)$. And in the half-space case we show that
\begin{equation}\label{Bknciintro}
\bar{Z}^{a}(t;\vec{x}) = 2^k\int_{\alpha_1-\I \infty}^{\alpha_1+\I\infty} \frac{dz_1}{2\pi \I} \cdots \int_{\alpha_k-\I \infty}^{\alpha_k+\I\infty} \frac{dz_k}{2\pi \I} \prod_{1\leq A<B\leq k} \frac{z_A-z_B}{z_A-z_B-1}\, \frac{z_A+z_B}{z_A+z_B-1} \, \prod_{j=1}^{k} e^{\frac{t}{2} z_j^2 + x_jz_j}\frac{z_j}{z_j+a},
\end{equation}
where we assume that $\alpha_1>\alpha_2 + 1 > \alpha_3 + 2> \cdots > \alpha_k + (k-1)$ and $\alpha_k=\max(-a+\e,0)$ for $\e>0$ arbitrary. 

We call the above expressions nested contour integrals since the contours respect a certain infinite version of nesting (so as to avoid poles coming from the denominator). In the full-space case this formula seems to have first appeared in 1985 work of Yudson \cite{Yudson}. As solutions of the delta Bose gas (or Yang's system) with general type root systems (the above formulas correspond with type $A$ and type $BC$ root systems, respectively) such formulas appeared in 1997 work of Heckman-Opdam \cite{HO}.

Using Cauchy's theorem and the residue theorem we may deform the contours in both expressions until they all coincide with $\I \R$. In the course of these deformations we encounter first-order poles coming from the terms $z_A-z_B-1$ (and in the half-space case, also $z_A+z_B-1$ and $z_j+a$ participate) in the denominator. It is this expansion into residue subspaces which replaces the spectral decomposition (see Section \ref{relationBAsec}). For the half-space case, when $a=0$ the faction $\frac{z}{z+a} = 1$ which considerably simplifies the analysis versus $a\neq 0$. We do not presently attempt to work out the $a\neq 0$ residue expansion (see Remark \ref{nota}).

In order to develop the expansion of our nested contour formulas (in both the full-space and half-space cases, as well as in various $q$-deformed cases -- cf. Proposition \ref{321}) there are three parallel steps:

\noindent {\bf Step 1:} We identify the residual subspaces which arise in such an expansion. In terms of a meta-formula (letting NCI represent the words ``Nested contour integral''):
\begin{equation*}
\textrm{NCI} = \sum_{\substack{I\in \textrm{Residual}\\\textrm{subspaces}}} \int_{\I \R}\cdots\int_{\I \R} \Res{I}\big(\textrm{integrand}\big),
\end{equation*}
where $\Res{I}$ represents taking the residue along the residual subspace indexed by $I$, and where the integrals on the right-hand side are over the variables which remain after computing the residues. As an example, consider (\ref{Aknciintro}) for $k=2$. Then if we choose $\alpha_2=0$ from the start, we must deform the $z_1$ contour to $\I \R$. Regarding $z_2$ as fixed along $\I \R$, as we deform the $z_1$ contour we necessarily encounter a pole at $z_1=z_2+c$ and thus the nested contour integral is expanded into two terms -- one in which the integrals of $z_1$ and $z_2$ are both along $\I \R$ and the second in which only the $z_2$ integral remains and the integrand is replaced by the residue at $z_1=z_2+c$.

In general, due to the Vandermonde term in the numerator of the integrands we consider, the residual subspaces we encounter are indexed by certain strings of residues, as well as partitions $\lambda\vdash k$ which identify the sizes of the strings.

\noindent {\bf Step 2:} We show that via the action of the symmetric group (in the full-space case) or the hyperoctahedral group (in the half-space case) we can transform our residual subspaces into a canonical form, only indexed by a partition $\lambda\vdash k$. Moreover, even though only certain elements of these groups arise from such transformations, we readily check that all other group elements lead to zero residue contribution. Thus, we can rewrite our sum over residual subspaces as a sum over $\lambda\vdash k$ and the symmetric or hyperoctahedral group. Using $G$ to denote either of these groups, we arrive at our second meta-formula (we have suppressed certain constants arising from group symmetries)
\begin{equation*}
\textrm{NCI} = \sum_{\lambda\vdash k} \,\sum_{\sigma\in G}\, \int_{\I \R}\cdots\int_{\I \R} \Res{\lambda}\big(\sigma(\textrm{integrand})\big).
\end{equation*}
Now $\Res{\lambda}$ represents taking the residue along the canonical form residual subspace corresponding with $\lambda$.

\noindent {\bf Step 3:} Due to the form of the integrand we may rewrite it as a $G$-invariant function, times a remainder function which does not contain any of the poles presently relevant. This allows us to reach our final meta-formula:
\begin{equation}\label{astar}
\textrm{NCI} = \sum_{\lambda\vdash k} \,\int_{\I \R}\cdots\int_{\I \R} \Res{\lambda}\big(\textrm{$G$-invariant part}\big)\, \Sub{\lambda}\big(\sum_{\sigma\in G} \sigma(\textrm{remainder part})\big).
\end{equation}
Here $\Sub{\lambda}$ represents substitution or restriction of a function to the canonical form residual subspace corresponding with $\lambda$. In the cases we deal with in this paper, the residue term above can be explicitly evaluated, and the substitution term is generally left as is, though simplified considerably in certain cases (such as when the locations $x_i\equiv 0$).

In the full-space case, Proposition \ref{Akcprop} and its proof substantiate the above outlined steps. This type of residue expansion for the delta-Bose gas goes back at least to \cite{HO}, and this particular proposition is already present in \cite[Proposition 3.2.1]{BorCor} and \cite[Lemma 7.3]{BCPS1}. In the half-space case, we provide Conjecture \ref{Bkcprop} which explains what we believe to be the manner through which this expansion works. We additionally provide some evidence for the conjecture in Section \ref{sec:one-string}. This conjecture involves some rather subtle cancelations of residues which at first appear to complicate the situation. 

\subsection{Relation to Bethe ansatz, Plancherel theory, and Macdonald symmetric functions}\label{relationBAsec}

The Bethe ansatz goes back to Bethe's 1931 solution (i.e. diagonalization) of the spin $1/2$ Heisenberg XXX spin chain \cite{Bethe}. Lieb-Liniger diagaonlized their eponymous model (in the repulsive case) in 1963 utilizing this eigenfunction ansatz \cite{LL} (it appears that this was the first application of the ansatz after Bethe's original work and the approach was essentially rediscovered by Lieb-Liniger). Soon after, McGuire formulated the string hypotheses for the attractive case of the Lieb-Liniger model \cite{McGuire}. A great deal of work on this model (and its half-space variant) ensued, most notably in \cite{Gaudin, Yang1, Yang2} (see \cite{GutkinSuther} for a review of this early work).

The (nested) contour integrals for solutions of the Lieb-Liniger model with arbitrary initial conditions play a prominent role in \cite{HO}\footnote{Specifically, the Plancherel formula for the repulsive case is given therein as Theorem 1.3. Page 14 then explains why this formula holds in both repulsive and attractive cases, with nested contours in the attractive case. Theorem 3.13 is the residue computation when the contours come together, but it is not complete as the constants are not computed exactly (Heckman-Opdam did not need the exact values).}. Their form can be traced back to the classical works of Harish-Chandra in the 1950's on harmonic analysis on Riemannian symmetric spaces (see \cite{Hel} and references therein). In \cite{HO} they are used to prove the completeness of the Bethe ansatz eigenfunctions. The space of functions in which they work does not contain the $\delta$ initial condition with which this paper is concerned. However, their type of contour shifting arguments can be extended to this case well, and this extension is central for the present paper. In \cite{BCPS1,BCPS2}, $q$-deformed versions of the nested contour integrals are utilized to prove Plancherel theories for more general classes of eigenfunctions (related to the $q$-Boson and higher-spin six vertex models).

The connection to the Bethe ansatz can also be seen in the formulas above: the $\Sub{\lambda}$ part of (\ref{astar}) (denoted $E_{\vec{x}}^c$ later in the text) turns out to be an eigenfunction of the rewritten Lieb-Liniger model Hamiltonian (see Definitions \ref{Akmbs} and \ref{Bkmbs}), while the $\Res{\lambda}$ term incorporates the inverse squared norm of the eigenfunction and its pairing with the initial condition. Our choice of the $\delta$ initial condition yields simple expressions that eventually allow us to access the asymptotics we need.

In earlier work on Macdonald processes \cite{BorCor}, $(q,t)$-deformed versions of these nested contour integrals arose and independently led to the formulas above (in the full-space case). In \cite{BorCor} these formulas encoded the application of the Macdonald (first) difference operators to multiplicative functions. The reasons why iterating Macdonald difference operators is related to the same nested integrals as arise in \cite{HO} are still unclear to us (see however some exploration into this in \cite[Lemma 6.1]{BorCordiscrete}. 

\subsection{Recent rigorous mathematical work on positive temperature directed polymers}\label{mathrig}

Even though the replica method for the continuum directed polymer falls short on mathematical rigor, it can be put on a solid mathematical ground by appealing to certain discretizations which preserve the model's Bethe ansatz solvability. Let us work for the moment just with full-space systems. Presently, the most general class of such discretizations are known of as the {\it higher-spin vertex models} (see \cite{BorodinR,CorPetdual} for more details). These stochastic interacting particle systems are solved in \cite{CorPetdual} by an analog of the replica method known of as {\it Markov dualities} (see \cite{BorodinPetrovhigher} for a different approach than duality to study these systems). There are many degenerations of these higher-spin vertex models and the duality / replica method applies to all of them. These degenerations include interacting particle systems (such as the stochastic six vertex model \cite{BCG}, ASEP \cite{IS, BCS}, Brownian motions with skew reflection \cite{SaSpBM}, $q$-Hahn TASEP \cite{Pov,Corhahn,BarCorHahn}, discrete time $q$-TASEP \cite{BorCordiscrete}, $q$-TASEP \cite{BCS}, $q$-pushASEP \cite{CorPetpush}) and directed polymer models (such as the Beta polymer \cite{BarCorBeta}, inverse Beta polymer \cite{LeDoussalThierry2}, log-gamma polymer \cite{LeDoussalThierry1}, strict-weak polymer \cite{CSS}, semi-discrete Brownian polymer \cite{BCS}). In some cases  the moments do determine the distribution and the duality / replica method can be rigorously performed, while other cases suffer a similar fate in that higher moments growth too fast, or even become infinite. However, once the distribution (or Laplace transform) is computed rigorously, if the model converges to the continuum polymer (as shown in various cases \cite{BG,ACQ, AKQ, CorTsai}) then taking a limit of the distribution function or Laplace transform provides a rigorous derivation of the continuum formula.

The approach taken in \cite{TW1,TW2,TW3} by Tracy and Widom in studying ASEP involves using Bethe ansatz to directly study transition probabilities and eventually extra marginal distributions from these formulas. While this approach bares some similarity to the duality / replica method, it does not have a clear degeneration to, for instance, the level of the continuum directed polymer. This work did, however, provide the first means to rigorously study the continuum directed polymer \cite{ACQ}. Besides the duality / replica method and Tracy and Widom's approach, the Macdonald process \cite{BorCor} approach has also proved quite fruitful in providing rigorous results on exact formulas for directed polymer models -- for further references, see the review \cite{ICICM} and reference therein. There exist other probabilistic methods which provide scaling exponents (though not exact distributions) for directed polymer models -- see for example, \cite{SeppLog} in the context of the log-gamma polymer.

For half-space systems, much less has been done. Notably, for ASEP with finitely many particles, transition probabilities have been computed \cite{TWhalfspace}, but no one-point marginal distribution formulas have been extracted. For the half-space log-gamma polymer, an analysis (analogous to the full-space case results of \cite{OCon,COSZ}) has been undertaken in \cite{OSZ}. A Laplace transform formula is conjectured therein, though does not seem to be readily accessible to any asymptotics. For zero-temperature polymers (i.e. last passage percolation) there are considerably more results as explained in Section \ref{LPP}.

\subsection{Conventions}\label{partsec}
We write $\Z_{\geq 0} = \{0,1,\ldots\}$ and $\Z_{>0}=\{1,2,\ldots\}$. When we perform integrals along vertical complex contours (such as $\I \R$) we will always assume that the contour is slightly to the right of the specified real part (i.e. $\I\R+\e$) so as to avoid any poles on said contour.

A {\it partition} is a sequence $\lambda=(\lambda_1,\lambda_2,\ldots)$ of non-negative integers with finitely many nonzero entries, such that $\lambda_1\geq\lambda_2\geq \cdots\geq 0$. The {\it length} $\ell(\lambda)$ is the number of non-zero $\lambda_i$ in $\lambda$ and the weight $|\lambda| = \lambda_1+\lambda_2+\cdots$. If $|\lambda|=k$ then we say that $\lambda$ {\it partitions} $k$, which is written $\lambda\vdash k$.  An alternative notation for $\lambda$ is $\lambda= 1^{m_1}2^{m_2}\cdots$ where $m_i$ represents the multiplicity of $i$ in the partition $\lambda$.

Let us also define the $q$-Pochhammer symbol (cf. Chapter 10 of \cite{AAR}) as $(a;q)_n = (1-a)(1-qa) \cdots (1-q^{n-1}a)$ and the Pochhammer symbol (or rising factorial\footnote{While this is the standard notation, we note that Wikipedia records $(a)_n$ as the falling factorial.}) as $(a)_{n} = a(a+1)\cdots (a+n-1)$.

\subsection{Outline}
In Section \ref{modres} we introduce our polymer models (along with some background) and state our main results. In Section \ref{mappingsec} we explain how the problem of computing the joint moments of the polymer partition function is mapped to a many body system. In Section \ref{solnsec} we solve that many body system via nested contour integral formulas. In Section \ref{expsec} we show how these formulas expand into residues when the nested contours are taken together to $\I \R$ (this is a conjecture in the half-space case). In Section \ref{mgfsec} we use the expansion of the moment formulas from earlier to compute the Laplace transform of the full-space and half-space polymer partition functions. We also perform long-time asymptotics and in the half-space case explain how we recognize the GSE Tracy-Widom distribution. Finally, in Section \ref{proofsec} we provide proofs of the residue expansion formulas, with the half-space conjecture being proved modulo Claim \ref{claim:Res-are-zero}.  In Section \ref{partsec} we provide notions concerning partitions and in Section \ref{Asymsec}, we provide background converning the symmetric and hyperoctahedral groups. Evidence for Claim \ref{claim:Res-are-zero} is subsequently provided in Section \ref{sec:one-string}.

\subsection{Acknowledgements}
The authors have benefited from discussions with Pierre Le Doussal, Herbert Spohn and Jeremy Quastel during the Simons Symposium on the KPZ equation. We especially thank Le Doussal for further discussions regarding the Bethe ansatz bound states. A. Borodin was partially supported by the NSF grant DMS-1056390. A. Bufetov was partially supported by the RFBR grant 13-01-12449, and by the Government of the Russian Federation within the framework of the implementation of the 5-100 Programme Roadmap of the National Research University Higher School of Economics.
I. Corwin was partially supported by the NSF through grant DMS-1208998 as well as by Microsoft Research through the Schramm Memorial Fellowship, by the Clay Mathematics Institute through the Clay Research Fellowship, by the Institute Henri Poincare through the Poincare Chair, and by the Packard Foundation through a Packard Fellowship in Science and Engineering.

\section{Models and results}\label{modres}

Models of directed polymers in disordered media in $1+1$ dimensions provide a unified framework for studying a variety of physical and mathematical systems as well as serve as a paradigm in the general study of disordered systems. Physically they provide models of domain walls of Ising type models with impurities \cite{HuHe,LFC}, vortices in superconductors \cite{BFG}, roughness of crack interfaces \cite{HHR}, Burgers turbulence \cite{FNS}, and interfaces in competing bacterial colonies \cite{HHRN} (see also the reviews \cite{HHZ} or \cite{FH} for more applications). Mathematically, they are closely related to stochastic (partial) differential equations \cite{BQS,ACQ}, stochastic optimization problems (including important problems in bio-statistics \cite{SM,HL,MMN,SAY} and operations research \cite{BBSSS}), branching Markov processes in random environments \cite{Bra}, as well as certain aspects of integrable systems and combinatorics (which will be discussed below). The free energy of the polymer partition function is related to the Kardar-Parisi-Zhang (KPZ) equation \cite{KPZ} whose spatial derivative is the stochastic Burgers equation \cite{FNS}. These are representatives of a large universality class of growth models and interacting particle systems (see the review \cite{ICreview}).

The physical motivation of the present work is to study the probability distribution of the free energy (logarithm of the partition function) of the continuum directed polymer in the full-space, as well as in the presence of a hard reflecting / absorbing wall. We compute the Laplace transform (hence also the probability distribution) of the partition function in both geometries (though for the half-space case we only work with $a=0$). Taking a large time limit we show that in both geometries the free energy fluctuations scale with exponent $1/3$, though they display different limiting probability distributions -- the GUE (written $F_{{\rm GUE}}$ or $F_2$) versus GSE (written $F_{{\rm GSE}}$ or $F_4$) Tracy-Widom distributions. In the context of last passage percolation with full and half-space geometries, these scalings and distributions first arose in the work of Johansson \cite{KJ} and Baik-Rains \cite{BaikRains}.

\subsection{The models}\label{modelssec}

Let $\xi(t,x)$ represent a Gaussian space-time white noise in $1+1$ dimension with covariance $\left\langle \xi(s,y)\xi(t,x)\right\rangle = \delta(s-t)\delta(y-x)$ (this is understood in the sense of a generalized function -- see \cite{ACQ,W} for more details) and let $\EE$ represent the expectation with respect to this random noise. We define the full-space and half-space polymer partition functions via the expectation of the exponential of path integrals through this noise field. This definition is properly made sense of either through smoothing the noise or through a chaos series expansion. To indicate either one of these (equivalent) renormalization procedures we use the notation of a Wick exponential $:\exp:$. The polymer partition function may equivalently be defined as the solution to a stochastic heat equation with suitable symmetry constraint in the half-space case\footnote{From a mathematical perspective, the equivalence of these two formulations has only been shown in the full-space case \cite{BC}, however, one expects the same methods to apply in the half-space case.}.

\subsubsection{Full-space polymer}

Let $\mathcal{E}$ represent the expectation of a one dimensional Brownian motion $b(\cdot)$ with $b(0)=0$.  Define the {\it full-space continuum directed random polymer partition function} as
\begin{equation*}
Z(t,x) = \mathcal{E}\left[:\exp:\left\{\int_0^t \xi(s,b(s))ds \right\} \delta(b(t)=x) \right].
\end{equation*}
See Section \ref{fspsec} or \cite{ACQ} for details on how to make mathematical sense of this definition.

Equivalently, $Z(t,x)$ may be defined as the solution to the (well-posed) stochastic heat equation on $\R$ with multiplicative noise and delta initial data
\begin{equation*}
\frac{d}{dt} Z(t,x) = \frac{1}{2} \frac{d^2}{dx^2} Z(t,x) + \xi(t,x) Z(t,x), \qquad Z(0,x) = \delta_{x=0}.
\end{equation*}

\subsubsection{Half-space polymer}

Let $\mathcal{E}^{R}$ represent the expectation of a one dimensional Brownian motion $b(\cdot)$ reflected so as to stay on the left of the origin and started from $b(0)=0$. Then for all $a\in \R$, define the {\it half-space continuum directed random polymer partition function with parameter $a$} as
\begin{equation*}
Z^{a}(t,x) = \mathcal{E}^R\left[:\exp:\left\{\int_0^t \big(\xi(s,b(s)) - a \delta_{b(s)=0}\big)ds \right\} \delta(b(t)=x)\right].
\end{equation*}
The integral of $\delta_{b(s)=0}$ is the local time of the Brownian motion at the origin.

Equivalently, $Z^a(t,x)$ may be defined as the solution to the (well-posed) stochastic heat equation on $(-\infty,0)$ with Robin (i.e. mixed Dirichlet and von Neumann) boundary condition at the origin
\begin{equation*}
\frac{d}{dt} Z^a(t,x) = \frac{1}{2} \frac{d^2}{dx^2} Z^a(t,x) + \xi(t,x) Z^a(t,x), \qquad \left(\frac{d}{dx}+a\right) Z^a(t,x)\big\vert_{x\to 0^-} = 0, \qquad Z^a(0,x) = \delta_{x=0}.
\end{equation*}
The initial data above means that for continuous bounded test functions $f:(-\infty,0)\to \R$,
$$\lim_{t\to 0} \int_{-\infty}^{0} f(x) Z^a(t,x) dx = f(0).$$

The behavior of this partition function should depend on the sign of $a$. When $a$ is positive, the polymer measure will favor paths which tend to avoid touching the origin -- something which can be done at relatively minor entropic cost. However, when $a$ is negative, the polymer paths will be rewarded for staying near the origin. Another way to see this difference is through a Feynman-Kac representation (cf. \cite{Mol,BCLyapunovpaper}) in which when $a$ is positive, the Brownian paths are killed at a rate given by the local time multiplied by $a$, whereas when $a$ is negative, the Brownian paths duplicate at rate given by the local time multiplied by $|a|$.

This difference in path behavior should be reflected in the asymptotic behavior of the partition function. For $a\geq 0$, it is reasonable to expect that the asymptotic behavior of $\log Z^a$ should be independent of the magnitude of $a$. On the other hand, when $a<0$, one expects a localization transition in which the asymptotic growth behavior of $\log Z^a$ changes from that of $\log Z^0$.

In this article we derive moment formulas for $Z^a$ valid for all $a\in \R$, however we restrict our attention to the Laplace transform of the $a=0$ partition function only as $a\neq 0$ introduces some additional complications (see Remark \ref{nota}).

\subsection{Laplace transform formulas}\label{Laplacesec}

The output of the calculations we present are the following Laplace transform formulas. Though our method is not mathematically rigorous, the full-space formula matches the result proved in \cite{ACQ,BCF}. No such rigorous derivation exists for the half-space formula as of yet. The partition function in question are positive random variables\footnote{This fact is proved in \cite{M} and more recently in \cite{Gregorio} for the full-space case. A proof does not seem to exist presently for the half-space case, though one expects the methods should extend without much difficulty.}, hence their Laplace transforms uniquely characterize their distributions.

\subsubsection{Full-space polymer}\label{fullspacelapsec}
We summarize our calculation with the following result. For $\zeta\in \C$ with $\Real(\zeta)<0$
\begin{equation}\label{fullspacedet}
\EE\left[e^{\zeta Z(t,0)}\right] =  1 + \sum_{L=1}^{\infty} \frac{(-1)^L}{L!} \int_{0}^{\infty} dx_1\cdots \int_{0}^{\infty} dx_L \det\left[K_{\zeta}(x_i,x_j)\right]_{i,j=1}^{L}
\end{equation}
where the kernel is given by
\begin{equation*}
K_{\zeta}(x,x') =\int_{-\infty}^{\infty} dr \frac{1}{1 + \exp\left\{(\frac{t}{2})^{1/3}(r + u)\right\}} \Ai(x-r)\, \Ai(x'-r),
\end{equation*}
and where $\zeta$ and $u$ are related according to
\begin{equation*}
\log(-\zeta) = -\left(\frac{t}{2}\right)^{1/3} u + \frac{t}{24}.
\end{equation*}

The right-hand side of (\ref{fullspacedet}) may be written in terms of a Fredholm determinant as
$$\det(I-K_{\zeta})_{L^2([0,\infty))}$$
where the integral operator $K_{\zeta}$ acts on $L^2([0,\infty))$ via its kernel $K_{\zeta}(x,x')$ given above.

As pointed out in \cite{ACQ}, by invariance of space-time white noise under affine shifts, the marginal distribution of $Z(t,x)$ is equal to the marginal distribution of $p(t,x) Z(t,0)$. Here $p(t,x) = (2\pi)^{-1/2} e^{-x^2/2t}$ is the standard heat kernel.

The distribution of $Z(t,0)$ was first discovered independently and in parallel in \cite{ACQ,SaSp,Dot,CDR} with a mathematically rigorous proof of the formula given in \cite{ACQ} (and later a different proof given in \cite{BCF}). The formula above agrees with the rigorously proved result.

\subsubsection{Half-space polymer}\label{halfspacelapsec}
Our half-space polymer calculation for $a=0$ similarly leads to the following result. (It should be noted that along the way, we utilize a residue expansion formula of which we do not have a complete proof. This expansion formula is stated as Conjecture \ref{Bkcprop}.) For $\zeta\in \C$ with $\Real(\zeta)<0$,

\begin{equation}\label{halfspacedet}
\EE\left[e^{\frac{\zeta}{4} Z^0(t,0)}\right] =  1 + \sum_{L=1}^{\infty}\frac{(-1)^L}{L!} \int_{-\infty}^{\infty} dr_1\cdots \int_{-\infty}^{\infty} dr_L \prod_{k=1}^{L} \frac{\zeta}{e^{r_j}-\zeta} \Pf\left[ K(r_i,r_j)\right]_{i,j=1}^{L}
\end{equation}
where the above Pfaffian is of a $2L\times 2L$ sized matrix composed of $2\times 2$ blocks $K^{(t)}(r,r')$ with components
\begin{eqnarray*}
K^{(t)}_{11}(r,r') &=& \frac{1}{4} \int_0^{\infty} dx \int_{-\tfrac{1}{2} -\I\infty}^{-\tfrac{1}{2}+\I\infty} \frac{dw_1}{2\pi \I} \int_{-\tfrac{1}{2} -\I\infty}^{-\tfrac{1}{2}+\I\infty} \frac{dw_2}{2\pi \I} \frac{-w_1+w_2}{w_1 w_2} \frac{1}{F^{(t)}(w_1) F^{(t)}(w_2)} e^{-r w_1-r'w_2} e^{xw_1+xw_2},\\
K^{(t)}_{12}(r,r') &=& \frac{1}{4} \int_0^{\infty} dx \int_{-\tfrac{1}{2} -\I\infty}^{-\tfrac{1}{2}+\I\infty} \frac{dw}{2\pi \I} \int_{\tfrac{1}{4} -\I\infty}^{\tfrac{1}{4}+\I\infty} \frac{ds}{2\pi \I} \frac{-w-s}{w} \frac{F^{(t)}(s)}{F^{(t)}(w)} e^{-r w+r's} e^{xw-xs},\\
K^{(t)}_{22}(r,r') &=& \frac{1}{4} \int_0^{\infty} dx \int_{\tfrac{1}{4} -\I\infty}^{\tfrac{1}{4}+\I\infty} \frac{ds_1}{2\pi \I}\int_{\tfrac{1}{4} -\I\infty}^{\tfrac{1}{4}+\I\infty} \frac{ds_2}{2\pi \I} (s_1-s_2) F^{(t)}(s_1)F^{(t)}(s_2) e^{r s_1+r's_2} e^{-xs_1-xs_2},
\end{eqnarray*}
and $K^{(t)}_{21}(r,r') = - K^{(t)}_{12}(r',r)$. In the above we have used
\begin{equation*}
F^{(t)}(w)  =\frac{\Gamma(w)}{\Gamma(w+\frac{1}{2})} e^{\frac{t}{2} \left(\frac{s^3}{3} - \frac{s}{12}\right)}.
\end{equation*}

This formula can be manipulated further to appear closer to that of (\ref{fullspacedet}). However, unlike in the full-space case, there is no simple transformation which maps this $x=0$ result to a general $x$ result. Though almost all of our work applied to the general $x$ case, we presently do not have a suitable analog of the identity (\ref{Bkfacsym}) for $x\neq 0$.


\subsection{Long-time asymptotic distributions}\label{asydistsec}

\subsubsection{Full-space polymer}

We may rewrite (\ref{fullspacedet}) in a suggestive manner for taking the $t\to \infty$ asymptotics:

\begin{equation*}
\EE\left[  \exp\left\{-\exp\left\{ \left(\frac{t}{2}\right)^{1/3}\left[\frac{\log Z(t,0)  + \frac{t}{24}}{\left(\frac{t}{2}\right)^{1/3}} \, -u \right]\right\}\right\} \right]   =  \det(I-K_{u})_{L^2([0,\infty))}
\end{equation*}
with kernel
\begin{equation*}
K_{u}(x,x') =\int_{-\infty}^{\infty} dr \frac{1}{1 + \exp\left\{(\frac{t}{2})^{1/3}(r + u)\right\}} \Ai(x-r)\, \Ai(x'-r).
\end{equation*}
Since $e^{-e^{\lambda x}} \to \mathbf{1}_{x<0}$ as $\lambda\to +\infty$ we see that as $t\to \infty$, the left-hand side becomes the expectation of an indicator function (hence a probability), while the right-hand side also has a clear limit since as $t \to \infty$,
$$
\frac{1}{1 + \exp\left\{(\frac{t}{2})^{1/3}(r + u)\right\}} \to \mathbf{1}_{r+u<0}.
$$

The output of these observations is the following limit distribution result\footnote{These asymptotic calculations can be performed in an entirely mathematically rigorous manner  -- see \cite{ACQ,BCF}.}:
\begin{equation*}
\lim_{t\to \infty} \PP\left(\frac{\log Z(t,0)  + \frac{t}{24}}{\left(\frac{t}{2}\right)^{1/3}} \leq u \right)  = \det(I-K_{{\rm Ai}})_{L^2((u,\infty])} = F_{{\rm GUE}}(u)
\end{equation*}
where $F_{{\rm GUE}}(u)$ is the GUE Tracy-Widom distribution, given in terms of the above Fredholm determinant with
\begin{equation*}
K_{{\rm Ai}}(x,x') = \int_{0}^{\infty} dr \Ai(x+r)\Ai(x'+r).
\end{equation*}

\subsubsection{Half-space polymer}\label{hsplim}
Just as in the full-space Laplace transform, we may rewrite the left-hand side of (\ref{halfspacedet}) in a suggestive manner for taking the $t\to \infty$ asymptotics:
\begin{equation*}
\EE\left[  \exp\left\{-\frac{1}{4}\exp\left\{ \left(\frac{t}{2}\right)^{1/3}\left[\frac{\log Z^0(t,0)  + \frac{t}{24}}{\left(\frac{t}{2}\right)^{1/3}} \, -u \right]\right\}\right\} \right].
\end{equation*}
The above expression converges to the limiting probability distribution of $\log Z^0(t,0)$ (centered by $t/24$ and scaled by $(t/2)^{1/3}$) as $t\to \infty$. The calculation of Section \ref{RHSder} shows that the right-hand side of (\ref{halfspacedet}) has a clear limit as well. Combining these facts we find that
\begin{equation}\label{almostGSE}
\lim_{t\to \infty} \PP\left(\frac{\log Z^0(t,0)  + \frac{t}{24}}{\left(\frac{t}{2}\right)^{1/3}} \leq u \right)  = 1 + \sum_{L=1}^{\infty} \frac{(-1)^L}{L!} \int_{u}^{\infty} dr_1 \cdots \int_{u}^{\infty} dr_{L} \Pf\left[K^{\infty}(r_i,r_j)\right]_{i,j=1}^{L}
\end{equation}
where the above Pfaffian is of a $2L\times 2L$ sized matrix composed of $2\times 2$ blocks $K^{\infty}(r,r')$ with components
\begin{eqnarray}\label{Kkern}
\nonumber K^{\infty}_{11}(r,r') &=& \frac{1}{4} \int_0^{\infty} dx \int_{-\tfrac{1}{2} -\I\infty}^{-\tfrac{1}{2}+\I\infty} \frac{dw_1}{2\pi \I} \int_{-\tfrac{1}{2} -\I\infty}^{-\tfrac{1}{2}+\I\infty} \frac{dw_2}{2\pi \I} (w_2-w_1) e^{-\frac{w_1^3}{3} + w_1(r+x)} e^{-\frac{w_2^3}{3}+w_2(r'+x)},\\
K^{\infty}_{12}(r,r') &=& \frac{1}{4} \int_0^{\infty} dx \int_{-\tfrac{1}{2} -\I\infty}^{-\tfrac{1}{2}+\I\infty} \frac{dw}{2\pi \I} \int_{\tfrac{1}{4} -\I\infty}^{\tfrac{1}{4}+\I\infty} \frac{ds}{2\pi \I} \frac{w+s}{s} e^{-\frac{w^3}{3} + w(r+x)} e^{\frac{s^3}{3} - s(r'+x)},\\
\nonumber K^{\infty}_{22}(r,r') &=& \frac{1}{4} \int_0^{\infty} dx \int_{\tfrac{1}{4} -\I\infty}^{\tfrac{1}{4}+\I\infty} \frac{ds_1}{2\pi \I}\int_{\tfrac{1}{4} -\I\infty}^{\tfrac{1}{4}+\I\infty} \frac{ds_2}{2\pi \I} \frac{s_1-s_2}{s_1s_2} e^{\frac{s_1^3}{3}-s_1(r+x)} e^{\frac{s_2^3}{3} - s_2(r'+x)},
\end{eqnarray}
and $K^{\infty}_{21}(r,r') = - K^{\infty}_{12}(r',r)$.
It is shown in Section \ref{RHSdist} that this Fredholm Pfaffian is equal to the GSE Tracy-Widom distribution. Hence we conclude that
\begin{equation*}
\lim_{t\to \infty} \PP\left(\frac{\log Z^0(t,0)  + \frac{t}{24}}{\left(\frac{t}{2}\right)^{1/3}} \leq u \right)  = F_{{\rm GSE}}(u).
\end{equation*}

We expect that as long as the interaction at the origin is repulsive (i.e., $a\geq 0$), its strength does not effect the asymptotic free energy fluctuations. We do not, however, include an analysis of the general $a>0$ case herein due to extra complications which arise in its analysis (see Remark \ref{nota}).

\subsection{Comparison to Johansson and Baik-Rains last passage percolation work}\label{LPP}
Last passage percolation (LPP) is a discrete time and space zero temperature polymer. For geometric weight distributions, the model is exactly solvable via methods of determinantal and Pfaffian point processes. We recover a full-space result due to Johansson \cite{KJ} and a corresponding half-space result due to Baik-Rains \cite{BaikRains}.

For the full-space set up, let $w_{n,x}$ (with $n\in \Z_{\geq 0}$ and $x\in \Z$) be independent geometric random variables with parameter $p\in (0,1)$. Let $\pi(\cdot)$ represent the trajectory of a simple symmetric random walk started from $\pi(0)=0$. Then define the full-space last passage time
$$
L(N) = \max_{\pi:\pi(2N)=0} \left(\sum_{n=0}^{2N} w_{n,\pi(n)}\right).
$$
For the half-space set up, let $w_{n,x}$ (with $n\in \Z_{\geq 0}$ and $x\in \Z_{< 0}$) be independent geometric random variables with parameter $p\in (0,1)$ and let $w_{n,0}$ be independent geometric random variables with parameter $\alpha \sqrt{p}\in (0,1)$. Then define the half-space last passage time with parameter $\alpha$ as
$$
L(N) = \max_{\pi:\pi(2N)=0, \pi(\cdot)\leq 0} \left(\sum_{n=0}^{2N} w_{n,\pi(n)}\right).
$$

The following theorem is due to Johansson \cite{KJ} in the full-space case and Baik-Rains \cite{BaikRains} in the half-space case.
\begin{theorem}
Setting
$$\eta(p) = \frac{2\sqrt{p}}{1-\sqrt{p}}, \qquad \rho(p) = \frac{p^{1/6} (1+\sqrt{p})^{1/3}}{1-\sqrt{p}},$$
then in the full-space case
$$
\lim_{N\to \infty} \PP\left(\frac{L(N)-\eta(p)N}{\rho(p) N^{1/3}}\leq u\right) = F_{{\rm GUE}}(u)
$$
and in the half space case
$$
\lim_{N\to \infty} \PP\left(\frac{L(N)-\eta(p)N}{\rho(p) N^{1/3}}\leq u\right) =
\begin{cases}
F_{{\rm GSE}}(u) & 0\leq \alpha<1\\
F(u;w) & \alpha = 1 - \frac{2 w}{\rho(p) N^{1/3}}\\
0 & \alpha>1.
\end{cases}
$$
\end{theorem}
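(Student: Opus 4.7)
The plan is to recognize both the full-space and half-space last passage times as observables of exactly solvable discrete ensembles and then to extract the Tracy-Widom limits via steepest descent. In the full-space case, the Robinson-Schensted-Knuth correspondence maps the array of geometric weights $(w_{n,x})$ to a pair of semistandard Young tableaux whose common shape is distributed according to a Schur measure on partitions with two specializations determined by the row and column parameters. The key identity is that $L(N)$ coincides with the length $\lambda_1$ of the longest row of this random partition, and $\lambda_1$ can be realized as the right-most particle in a determinantal point process whose correlation kernel is expressed via Meixner (or Krawtchouk, depending on normalization) orthogonal polynomials.

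Given this representation, I would write $\PP(L(N)\leq u)$ as the Fredholm determinant of the Meixner kernel $K_N$ on the set $\{u+1,u+2,\ldots\}$ and then apply the Deift-Zhou steepest descent analysis to the associated Riemann-Hilbert problem, or equivalently to the double integral representation of the Meixner kernel. Under the scaling $L(N) = \eta(p)N + \rho(p)N^{1/3}u$, the rescaled kernel converges pointwise (with the right exponential decay estimates) to the Airy kernel, and the Fredholm determinant converges to $F_{\mathrm{GUE}}(u)$, giving the first claim.

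For the half-space case, the appropriate replacement for RSK is a symplectic analogue of the Schensted insertion (Sagan-Worley), which sends the triangular array of weights to a single tableau whose shape is distributed according to a Pfaffian analogue of the Schur measure. The parameter $\alpha$ controlling the weights on the boundary $x=0$ enters the Pfaffian correlation kernel only through a rank-one boundary term, in direct analogy with the extra column in the Baik-Ben Arous-P\'ech\'e perturbed Wishart ensemble. A saddle-point analysis of the resulting Pfaffian kernel produces three regimes depending on the position of the simple pole at $w=1/\alpha$ relative to the critical point of the exponent. For $0\leq\alpha<1$ the pole sits strictly below the critical contour and contributes nothing to the leading order; the kernel converges to the GSE Pfaffian kernel essentially of the form displayed in (\ref{Kkern}) and the limit is $F_{\mathrm{GSE}}(u)$. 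For $\alpha>1$ the pole crosses the saddle, $L(N)/N$ converges to a strictly larger deterministic constant, and the event $\{L(N)\leq \eta(p)N+\rho(p)N^{1/3}u\}$ has probability tending to $0$. In the critical window $\alpha=1-2w/(\rho(p)N^{1/3})$ the pole sits at distance $O(N^{-1/3})$ from the critical point and produces the one-parameter crossover family $F(u;w)$.

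The main obstacle is the half-space Pfaffian asymptotic analysis. The full-space part reduces to a well-understood Meixner steepest descent, but the Pfaffian case requires first identifying the correct symplectic analogue of RSK, then diagonalizing the resulting Pfaffian measure in terms of skew-orthogonal polynomials (or equivalently, a suitable $2\times 2$ matrix Riemann-Hilbert problem), and finally carrying out a uniform asymptotic analysis as the boundary pole migrates through the critical point. The delicate part is the uniformity of this asymptotics across the three regimes, which is what gives rise to the BBP-type transition at $\alpha=1$ and the crossover distribution $F(u;w)$.
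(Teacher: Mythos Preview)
The paper does not prove this theorem; it is quoted from the literature (Johansson \cite{KJ} for the full-space case and Baik--Rains \cite{BaikRains} for the half-space case) solely to provide a zero-temperature point of comparison for the continuum polymer results derived elsewhere in the paper. There is therefore no ``paper's own proof'' to compare against.

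That said, your outline is a fair high-level sketch of how those original references actually proceed: RSK (or its symmetrized variant) realizes $L(N)$ as the first row of a random partition drawn from a Schur/Pfaffian-Schur type measure, the resulting point process is determinantal (full-space) or Pfaffian (half-space) with an explicit Meixner-type kernel, and edge scaling via steepest descent on the double contour integral representation yields the Airy kernel or its Pfaffian analogue, with the boundary parameter $\alpha$ producing the BBP-type transition. One caveat: in Baik--Rains the half-space analysis is not literally carried out through skew-orthogonal polynomials and a matrix Riemann--Hilbert problem as you suggest; they instead pass through a Toeplitz/Hankel determinant identity and analyze the associated orthogonal polynomials on the unit circle. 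Your Pfaffian-kernel route is closer in spirit to later treatments. Either way, since the present paper simply cites the result, your proposal is not competing with anything here.
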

The definition of the distribution $F(u;w)$ above is given in \cite[Definition 2]{BaikRains}.

The full-space result agrees with our positive temperature analog. The half-space result likewise agrees, as we are in the regime of a repulsive origin (here presented by $\alpha<\sqrt{p}$). There is a regime of $\sqrt{p}\leq \alpha\leq 1$ when the origin provides extra reward over the other weights, but the limiting statistics remain unchanged (this corresponds with the path not being sufficiently rewarded for the entropic cost of staying localized near the origin), and a regime of $\alpha = 1 - \frac{2 w}{\rho(p) N^{1/3}}$ when the origin strength becomes critical. Beyond this, the maximizing path visits the origin a macroscopic proportion of the time and the reason for the 0 limit above is because the law of large number centering by $\eta(p)$ becomes insufficient. In this regime the law of large numbers increases and the fluctuations are Gaussian.

We expect the same sort of transition should occur in the half-space continuum polymer we consider herein. In order to see the critical behavior we must tune $a<0$ appropriately. Since we do not presently deal with asymptotics for this $a<0$ regime, we cannot yet access this behavior.

\section{Mapping to many body system}\label{mappingsec}
The basic observation explained in this section is that joint moments of the polymer partition function solve certain nice closed evolution equations. This at least goes back to the independent work of Kardar \cite{K} and Molchanov \cite{Mol}.

\subsection{Full-space polymer}\label{fspsec}

Consider $\vec{x} = (x_1,\ldots,x_k) \in \R^k$ and define
$$\bar{Z}(t;\vec{x}) = \EE\left[\prod_{i=1}^{k} Z(t,x_i)\right],$$
where recall that $Z(t;x)$ is the full-space polymer partition function and $\EE$ represents the expectation with respect to the white noise $\xi$. Then the basic observation one makes is that
\begin{equation*}
\frac{d}{dt} \bar{Z}(t;\vec{x}) = H^1 \bar{Z}(t;\vec{x}), \qquad \bar{Z}(0;\vec{x}) = \prod_{i=1}^{k} \delta(x_i=0)
\end{equation*}
where $H^c$, for general $c\in \R$, is given by
\begin{equation*}
H^c = \frac{1}{2} \sum_{i=1}^{k} \frac{d^2}{dx_i^2}  + \frac{c}{2} \sum_{i\neq j} \delta(x_i-x_j=0).
\end{equation*}
Such a statement also holds for solutions of the stochastic heat equation with generic initial data, with $\bar{Z}(0;\vec{x})$ modified accordingly.

The above system is the imaginary time Lieb-Liniger model \cite{LL} with two body delta interaction and coupling constant $-c$. It is also called a delta Bose gas, as one restricts to Bosonic solutions which are symmetric in relabeling the indices of $\vec{x}$. When the coupling constant $-c$ is negative (i.e. $c>0$) the model is called attractive, whereas when it is positive the model is called repulsive. The partition function moments correspond with the attractive case.

The fact that these moments satisfy the above system can be shown through smoothing the noise $\xi$ in space, replicating the path integrals defining the partition function (which now make sense), interchanging the $k$-fold replica expectation with the Gaussian expectation (which can now be evaluated in terms of local times) and finally taking away the smoothing. This procedure is performed in a mathematically rigorous manner in Bertini-Cancrini \cite[Proposition 2.3]{BC} where they show that

$$
\bar{Z}(t;\vec{x}) = \prod_{i=1}^{k} p(t,x_i)  \, \mathcal{E}^{k} \left[ \exp\left\{\frac{1}{2} \sum_{i\neq j} L_t(b_i-b_j)\right\}\right]
$$
where $p(t,x)$ is the transition probability for a Brownian motion to go from 0 to $x$ in time $t$, $\mathcal{E}^k$ is the expectation over $k$ independent Brownian bridges $b_1(\cdot),\ldots, b_k(\cdot)$ which start at 0 at time 0 and end at $x_1,\ldots, x_k$ (respectively), and $L_t(b_i-b_j)$ is the intersection local time of $b_i-b_j=0$ over the time interval $[0,t]$. From this formula and the Feynman-Kac representation, one readily checks that $\bar{Z}(t;\vec{x})$ satisfies the desired evolution equation above\footnote{We note, however, that we are not aware of a uniqueness result for this delta Bose gas which includes the delta initial data we consider.}.

Notice that $\bar{Z}(t;\vec{x}) = \bar{Z}(t;\sigma\vec{x})$ for any permutation $\sigma$ hence it suffices to compute $\bar{Z}(t;\vec{x})$ for $\vec{x} \in W(A_k)$. Here $W(A_k)$ is the type $A_k$ Weyl chamber $x_1\leq x_2\leq \cdots \leq x_k$ (see Section \ref{Asymsec}).

\begin{definition}\label{Akmbs}
We say that $u^c:\R_{\geq 0} \times \R^k\to \R$ solves the {\it type $A_k$ free evolution equation with $k-1$ boundary conditions and coupling constant $c$} if
\begin{enumerate}
\item For all $t>0$ and $\vec{x}\in \R^k$,
$$\frac{d}{dt} u^c(t;\vec{x}) = \frac{1}{2} \sum_{i=1}^{k} \frac{d^2}{dx_i^2} u^c(t;\vec{x});$$
\item For all $t>0$ and $\vec{x}\in W(A_k)$ such that $x_i=x_{i+1}$,
$$ \left(\frac{d}{dx_i} - \frac{d}{dx_{i+1}} -c\right) u^c(t;\vec{x}) = 0;$$
\item For all continuous bounded $L^2$ functions $f:W(A_k)\to \R$,
    $$ \lim_{t\to 0} k!\int_{W(A_k)}f(\vec{x}) u^c(t;\vec{x})  =f(0).$$
\end{enumerate}
\end{definition}
Using the local time representation for $\bar{Z}(t;\vec{x})$ one finds that restricted to $\vec{x}\in W(A_k)$, $$\bar{Z}(t;\vec{x})=u^1(t,\vec{x}).$$
While there is little doubt of the above equality, in fact (to our knowledge) this observation has not been made in a rigorous manner and also relies upon an assumption of uniqueness of solutions for the above system.

This reduction to the form of a free evolution equation with $k-1$ boundary conditions is a hallmark of integrability and will be the starting point for our analysis. In fact, all replica method works relies on this rewriting of the delta Bose gas. It is worth noting that if the two body delta interaction term is replaced by a smoothed version (as corresponds to the case of spatially smoothed noise) there is no reduction to such an integrable form. However, there does exist discrete space versions of the polymer as well as a $q$-deformed discrete version of the polymer for which this integrability persists (see, for example, \cite{BCS}). For these discretizations, the questions of uniqueness are easily overcome.

\subsection{Half-space polymer}

Consider $\vec{x} = (x_1,\ldots,x_k) \in \R^k$ and define
$$\bar{Z}^a(t;\vec{x}) = \EE\left[\prod_{i=1}^{k} Z^a(t,x_i)\right],$$
where recall that $Z^a(t;x)$ is the half-space polymer partition function with parameter $a$. Though this is only defined for $x_i< 0$, we can extend it to $x_i\in \R$ by setting declaring $Z^a(t,x)=Z^a(t,-x)$. Of course, this extended version of $Z^a(t,x)$ solves a stochastic heat equation on $\R$ (with the Robin jump condition  for the derivative at 0), but now with doubled initial data $\prod_{i=1}^{k} 2\delta(x_i=0)$. In fact, (up to the factor of two) this corresponds with considering a full-space polymer with noise $\xi$ which is symmetric through the origin (i.e., $\xi(t,x)=\xi(t,-x)$) and which has an energetic contribution from the polymer path local time at the origin.

Just as in the full-space polymer, these moments solve nice closed evolution equations:
\begin{equation*}
\frac{d}{dt} \bar{Z}^a(t;\vec{x}) = H^{1,a} \bar{Z}(t;\vec{x}), \qquad \bar{Z}^a(0;\vec{x}) = \prod_{i=1}^{k} 2 \delta(x_i=0),
\end{equation*}
where $H^{c,a}$, for general $c,a\in \R$, is given by
\begin{equation*}
H^{c,a} = \frac{1}{2} \sum_{i=1}^{k} \frac{d^2}{dx_i^2}  + \frac{c}{2} \sum_{i\neq j} \delta(x_i-x_j=0) - a \sum_{i=1}^{k} \delta(x_i=0).
\end{equation*}
One restricts to solutions which are invariant with respect to the action of hyperoctahedral  group $BC_k$ (see Section \ref{Asymsec}). Such an extension seems to have been first considered by Gaudin \cite{Gaudin} (see \cite{GutkinSuther,HO} for further developments).

The fact that these moments satisfy the above system can be shown in a similar way as for the full-space polymer. Doing this one finds that
$$
\bar{Z}(t;\vec{x}) = \prod_{i=1}^{k} p^R(t,x_i)  \, \mathcal{E}^{k} \left[ \exp\left\{\frac{1}{2} \sum_{i\neq j} L_t(b_i-b_j)- a \sum_{i=1}^{k} L_t(b_i)\right\}\right]
$$
where $p^R(t,x)$ is the transition probability for a reflected Brownian motion to go from 0 to $x$ in time $t$, $\mathcal{E}^k$ is the expectation over $k$ independent reflected Brownian bridges $b_1(\cdot),\ldots, b_k(\cdot)$ which start at 0 at time 0 and end at $x_1,\ldots, x_k$ (respectively), $L_t(b_i-b_j)$ is the intersection local time of $b_i(\cdot)-b_j(\cdot)=0$ over the time interval $[0,t]$, and $L_t(b_i)$ is the intersection local time of $b_i(\cdot)=0$ over the time interval $[0,t]$. From this formula and the Feynman-Kac representation, one readily checks that $\bar{Z}^a(t;\vec{x})$ satisfies the desired evolution equation above\footnote{We note, however, that we are not aware of a uniqueness result for this delta Bose gas which includes the delta initial data we consider.}.

Notice that $\bar{Z}^a(t;\vec{x}) = \bar{Z}(t;\sigma\vec{x})$ for any $\sigma\in BC_k$. Therefore it suffices to compute $\bar{Z}^a(t;\vec{x})$ for $\vec{x} \in W(BC_k)$. Here $W(BC_k)$ is the type $BC_k$ Weyl chamber $x_1\leq x_2\leq \cdots \leq x_k\leq 0$ (see Section \ref{Asymsec}).

\begin{definition}\label{Bkmbs}
We say that $u^{c,a}:\R_{\geq 0} \times \R^k\to \R$ solves the {\it type $BC_k$ free evolution equation with $k-1$ boundary conditions and coupling constants $c$ and $a$} if
\begin{enumerate}
\item For all $t>0$ and $\vec{x}\in \R^k$,
$$\frac{d}{dt} u^{c,a}(t;\vec{x}) = \frac{1}{2} \sum_{i=1}^{k} \frac{d^2}{dx_i^2} u^{c,a}(t;\vec{x});$$
\item For all $t>0$ and $\vec{x}\in W(BC_k)$ such that $x_i=x_{i+1}$,
$$ \left(\frac{d}{dx_i} - \frac{d}{dx_{i+1}} -c\right) u^{c,a}(t;\vec{x}) = 0;$$
\item For all $t>0$ and $\vec{x}\in W(BC_k)$ such that $x_k=0$,
$$ \left(\frac{d}{dx_k} + a \right) u^{c,a}(t;\vec{x}) = 0;$$
\item For all continuous bounded $L^2$ functions $f:W(BC_k)\to \R$,
    $$ \lim_{t\to 0} k!\int_{W(BC_k)}f(\vec{x}) u^{c,a}(t;\vec{x})  =f(0).$$
\end{enumerate}
\end{definition}
The above many body system is called Yang's system in the work of Heckman-Opdam \cite{HO} due to Yang's work in the late 1960's \cite{Yang1,Yang2}.

Just as in the full-space case, we are not aware of a proof of uniqueness results for the above system. However, assuming this and using the local time representation for $\bar{Z}^a(t;\vec{x})$ one finds (again, we are not aware of a rigorous proof of this) that restricted to $\vec{x}\in W(BC_k)$, $$\bar{Z}^a(t;\vec{x})=u^{1,a}(t,\vec{x}).$$

Unlike the type $A$ case, we do not presently know of any discrete space or $q$-deformed discrete space versions of half-space polymers which display a similar degree of integrability as the continuum version. The work of \cite{TWhalfspace} and \cite{OSZ} are, however, suggestive of such a possibility.

\section{Solution via nested contour integral ansatz}\label{solnsec}

We solve the many body systems given above without appealing to the eigenfunction expansion (such as provided by the Bethe Ansatz). Even though we are presently only interested in the case $c=1$ and $a\geq 0$, the below formula applies for all $c,a\in \R$. 

\subsection{Full-space polymer}

\begin{lemma}\label{solnlemma2}
The type $A_k$ free evolution equation with $k-1$ boundary conditions and coupling constant $c\in \R$ (see Definition \ref{Akmbs}) is solved by
$$ u^c(t;\vec{x}) = \int_{\alpha_1-\I \infty}^{\alpha_1+\I\infty} dz_1 \cdots \int_{\alpha_k-\I \infty}^{\alpha_k+\I\infty} dz_k \prod_{1\leq A<B\leq k} \frac{z_A-z_B}{z_A-z_B-c}\, \prod_{j=1}^{k} e^{\frac{t}{2} z_j^2 + x_jz_j},$$
where for $c\geq 0$ we assume that $\alpha_1>\alpha_2 + c > \alpha_3 + 2c> \cdots > \alpha_k + (k-1)c$, and for $c<0$, all $\alpha_i\equiv 0$.
\end{lemma}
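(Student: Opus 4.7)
The plan is to verify the three clauses of Definition~\ref{Akmbs} in turn. Write the integrand as $V(\vec z)\cdot\prod_j e^{\frac{t}{2}z_j^2+x_jz_j}$, where $V(\vec z):=\prod_{A<B}\frac{z_A-z_B}{z_A-z_B-c}$ is independent of $(t,\vec x)$ and each exponential is a one-body heat solution. Clause~(1) follows by differentiation under the integral sign: the operator $\partial_t-\tfrac12\sum_j\partial_{x_j}^2$ kills each factor $e^{\frac{t}{2}z_j^2+x_jz_j}$ termwise, and the Gaussian decay $|e^{\frac{t}{2}z_j^2}|=e^{\frac{t}{2}((\re z_j)^2-(\Imag z_j)^2)}$ along the vertical contours justifies the interchange for any $t>0$. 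The nesting $\alpha_1>\alpha_2+c>\cdots>\alpha_k+(k-1)c$ (or the collapse $\alpha_j\equiv 0$ when $c<0$) keeps the contours off every pole of $V$.

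\smallskip

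\noindent\textbf{Boundary condition.}
For clause~(2), apply $D_i:=\partial_{x_i}-\partial_{x_{i+1}}-c$ under the integral sign; it brings down the scalar $(z_i-z_{i+1}-c)$ from the exponentials, cancelling the denominator of the cross-ratio $\frac{z_i-z_{i+1}}{z_i-z_{i+1}-c}$ and leaving $(z_i-z_{i+1})$ in its place. Restricted to the wall $\{x_i=x_{i+1}\}$, the remaining integrand is the product of this antisymmetric factor with a function invariant under the transposition $z_i\leftrightarrow z_{i+1}$: the exponentials are invariant since $x_i=x_{i+1}$, cross-ratios not involving either of $z_i,z_{i+1}$ are obviously invariant, and those coupling one of $\{z_i,z_{i+1}\}$ to a third $z_A$ split into pairs that are permuted by the transposition. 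Having cancelled the offending pole, the $z_i$-contour may be deformed from $\alpha_i+\I\R$ onto $\alpha_{i+1}+\I\R$ without crossing any of the remaining $z_i$-poles (these sit at $z_A-c$ for $A<i$ and $z_B+c$ for $B>i+1$, which lie to the right of $\alpha_i+\I\R$ and to the left of $\alpha_{i+1}+\I\R$ respectively by the nesting; for $c<0$ the contours already coincide). On the common contour the integrand is genuinely antisymmetric in $(z_i,z_{i+1})$, so the integral vanishes.

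\smallskip

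\noindent\textbf{Initial condition (the main obstacle).}
Clause~(3) is where the work concentrates. I would test $u^c(t,\cdot)$ against a continuous bounded $f:W(A_k)\to\R$, exchange orders of integration by Fubini (justified by Gaussian decay on the vertical contours for any $t>0$), compute the $\vec x$-integral first, and then pass $t\downarrow 0$ in the resulting $\vec z$-integral. In the free case $c=0$, the factor $V$ is identically $1$, the integral decouples into one-body Gaussians $\int_{\I\R}e^{\frac{t}{2}z^2+xz}\,dz\propto t^{-1/2}e^{-x^2/(2t)}$, and the product $\prod_j p(t,x_j)$ of heat kernels concentrates at the origin; the $k!$ prefactor in condition~(3) then exactly compensates the $1/k!$ solid angle of the corner of $W(A_k)\subset\R^k$, producing $f(0)$ up to the universal $(2\pi\I)^k$ constant implicit in the unnormalized $dz_j$. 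The genuine difficulty is to control $V(\vec z)$ for $c\neq 0$: substituting $z_j=w_j/\sqrt t$ rescales the Gaussians and sends $V(\vec w/\sqrt t)=\prod_{A<B}\frac{w_A-w_B}{w_A-w_B-c\sqrt t}\to 1$ pointwise, but simultaneously collapses the nested contours onto $\I\R$, crossing poles of $V$ and producing residue contributions indexed by strings of the type that enter meta-formula~\eqref{astar}. My plan is to invoke the residue expansion of Section~\ref{expsec} on the pairing $\int_{W(A_k)} f\,u^c$ and to check termwise that only the ``all-singletons'' partition $\lambda=1^k$ survives in the $t\downarrow 0$ limit, every other string being weighted by extra positive powers of $\sqrt t$ from its residues. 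This residue bookkeeping---not the algebraic verification of conditions (1)--(2)---is the place where careful analysis is required.
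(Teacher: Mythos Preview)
Your verification of conditions~(1) and~(2) matches the paper's template (the proof of Lemma~\ref{solnlemma}, of which Lemma~\ref{solnlemma2} is declared a straightforward modification): differentiate under the integral, cancel the offending denominator, deform the freed contour onto its neighbor, and kill the result by antisymmetry.

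For condition~(3) your route diverges from the paper's. You propose to handle the interacting factor $V(\vec z)$ by rescaling and then invoking the full residue expansion of Proposition~\ref{Akcprop}, arguing that every partition $\lambda\neq 1^k$ carries extra powers of $\sqrt t$. The paper instead uses the elementary identity
\[
\frac{z_A-z_B}{z_A-z_B-c}=1+\frac{c}{z_A-z_B-c}
\]
to expand $V(\vec z)$ as a sum over subsets $K\subset\{(A,B):A<B\}$, and then bounds each term with $|K|=\ell\geq 1$ directly: after the change $w_j=t^{1/2}z_j$ and a contour shift governed by $\vec x/\|\vec x\|$, the $\ell$ surviving fractions $\frac{c}{z_A-z_B-c}$ each contribute a factor $t^{1/2}$, giving $|u^c_K(t;\vec x)|\leq C\,t^{\ell/2}\,t^{-k/2}e^{-t^{-1/2}\|\vec x\|}$, which integrates against bounded $f$ to $O(t^{\ell/2})\to 0$. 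Only the $K=\emptyset$ term survives, and that is exactly your free case $V\equiv 1$. This subset expansion is more elementary than your partition expansion: no poles are crossed, no residues are computed, and Proposition~\ref{Akcprop} (proved only later in Section~\ref{proofsec}) is not needed. Your approach should also succeed, but it imports heavier machinery than the argument requires and creates a forward reference that the paper's ordering avoids.
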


Thus (up to assuming uniqueness of solutions to the many body system of Definition \ref{Akmbs}) we may conclude that for $x_1\leq x_2\leq \cdots \leq x_k$,
\begin{equation}\label{Aknci}
\bar{Z}(t;\vec{x}) =  \int_{\alpha_1-\I \infty}^{\alpha_1+\I\infty} \frac{dz_1}{2\pi \I} \cdots \int_{\alpha_k-\I \infty}^{\alpha_k+\I\infty} \frac{dz_k}{2\pi \I} \prod_{1\leq A<B\leq k} \frac{z_A-z_B}{z_A-z_B-1} \, \prod_{j=1}^{k} e^{\frac{t}{2} z_j^2 + x_jz_j},
\end{equation}
for any $\alpha_1>\alpha_2 + 1 > \alpha_3 + 2> \cdots > \alpha_k + (k-1)$.

\begin{proof}[Proof of Lemma \ref{solnlemma2}]
A proof of this result can be found in \cite{BorCor}, Proposition 6.2.3. As this proof is a straightforward modification of the proof of Lemma \ref{solnlemma} given below, we do not reproduce it presently.
\end{proof}

\subsection{Half-space polymer}

\begin{lemma}\label{solnlemma}
The type $BC_k$ free evolution equation with $k-1$ boundary conditions and coupling constant $c, a \in \R$ (see Definition \ref{Bkmbs}) is solved by
$$ u^{c,a}(t;\vec{x}) = 2^k \int_{\alpha_1-\I\infty}^{\alpha_1+\I\infty} \frac{dz_1}{2\pi \I} \cdots \int_{\alpha_k-\I\infty}^{\alpha_k+\I\infty} \frac{dz_k}{2\pi \I} \prod_{1\leq A<B\leq k} \frac{z_A-z_B}{z_A-z_B-c}\, \frac{z_A+z_B}{z_A+z_B-c} \, \prod_{j=1}^{k} e^{\frac{t}{2} z_j^2 + x_jz_j} \frac{z_j}{z_j+a},$$
where for $c\geq 0$ we assume that $\alpha_1>\alpha_2 + c > \alpha_3 + 2c> \cdots > \alpha_k + (k-1)c$ and $\alpha_k=\max(-a+\e,0)$ for $\e>0$ arbitrary, and for $c<0$, all $\alpha_i\equiv \max(-a+\e,0)$ for $\e>0$ arbitrary. 
\end{lemma}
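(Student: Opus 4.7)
The plan is to verify each of the four defining conditions of Definition~\ref{Bkmbs} for the proposed integral formula. Write $I(\vec z)$ for the rational prefactor appearing under the integral and $E(t,\vec z,\vec x)=\prod_{j=1}^{k}e^{\frac{t}{2}z_j^2+x_jz_j}$ for the exponential factor, so the candidate solution is $u^{c,a}(t;\vec x)=\int I(\vec z)\,E(t,\vec z,\vec x)\,\frac{d\vec z}{(2\pi\I)^k}$ on the prescribed nested contours. Condition (1), the free heat equation, is immediate: each $e^{\frac{t}{2}z_j^2+x_jz_j}$ individually satisfies $\partial_t=\tfrac12\partial_{x_j}^2$, and $I(\vec z)$ depends on neither $t$ nor $\vec x$, so differentiation under the integral sign gives the result.

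For the Bose-type boundary condition (2), apply $\partial_{x_i}-\partial_{x_{i+1}}-c$ to the integrand. This brings down a factor $(z_i-z_{i+1}-c)$ which cancels exactly the denominator $(z_i-z_{i+1}-c)^{-1}$ in $I(\vec z)$, leaving a surviving numerator factor $(z_i-z_{i+1})$. Setting $x_i=x_{i+1}$, I would verify that the resulting integrand is antisymmetric under $z_i\leftrightarrow z_{i+1}$: the $(z_i-z_{i+1})$ flips sign, while every other factor is invariant — the factor $\frac{z_i+z_{i+1}}{z_i+z_{i+1}-c}$ is manifestly symmetric; the cross terms coupling $z_i$ or $z_{i+1}$ to any other $z_A$, namely $\frac{(z_A\pm z_i)(z_A\pm z_{i+1})}{(z_A\pm z_i-c)(z_A\pm z_{i+1}-c)}$, are swapped in pairs; the exponentials $e^{\frac{t}{2}(z_i^2+z_{i+1}^2)+x_i(z_i+z_{i+1})}$ are symmetric once $x_i=x_{i+1}$; and $\frac{z_iz_{i+1}}{(z_i+a)(z_{i+1}+a)}$ is symmetric. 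To turn this antisymmetry into a vanishing integral, deform $z_i$ and $z_{i+1}$ to a common vertical contour. The pole between them at $z_i-z_{i+1}=c$ has already been cancelled, but the deformation must also account for potential crossings of the half-space poles at $z_i+z_{i+1}=c$, $z_i=\pm z_B+c$ for $B\neq i,i+1$, and $z_i=-a$; each residue so produced is either excluded by the ordering of the $\alpha_j$'s or is shown to vanish by applying the same antisymmetry argument at reduced rank. Once both variables lie on a common contour, the antisymmetric integrand integrates to zero.

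For the wall boundary condition (3), a parallel argument uses the involution $z_k\to -z_k$. Applying $\partial_{x_k}+a$ pulls down $(z_k+a)$, cancelling the $(z_k+a)^{-1}$ pole and leaving a bare $z_k$ in the integrand. Setting $x_k=0$, the surviving integrand is odd under $z_k\to -z_k$: the factor $z_k$ is odd, $e^{\frac{t}{2}z_k^2}$ is even, and each pair $\frac{(z_A-z_k)(z_A+z_k)}{(z_A-z_k-c)(z_A+z_k-c)}$ with $A<k$ is invariant because the involution simply interchanges the two fractions. The factors independent of $z_k$ are likewise invariant. Deforming the $z_k$-contour to $\I\R$ — feasible because the pole at $z_k=-a$ has just been eliminated and, by the ordering of the $\alpha_j$, the remaining $z_k$-poles $z_k=\pm z_B+c$ lie strictly to the right of the deformation strip — the resulting integral of an odd integrand over the $z_k\to -z_k$-invariant contour $\I\R$ vanishes.

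Condition (4), the delta initial data, I would establish by testing against a bounded continuous function $f$ on $W(BC_k)$. Extend $f$ to a $BC_k$-invariant function $\tilde f$ on $\R^k$, so $k!\int_{W(BC_k)}f\,u^{c,a}=\frac{1}{2^k}\int_{\R^k}\tilde f\,u^{c,a}$. As $t\to 0$, deform all contours to $\I\R$; along the way one picks up residues whose contributions must be tracked carefully. The leading surviving term is then a $k$-fold inverse Fourier transform whose $t\to 0$ limit against $\tilde f$ yields $\tilde f(0)=f(0)$, with the overall prefactor $2^k$ in the formula precisely compensating the chamber-volume factor $1/2^k$. The main obstacle is exactly this step: controlling the residues accumulated under the contour deformations and showing they vanish in the $t\to 0$ limit against any test function, so that only the leading Fourier term survives. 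The residue structure one encounters is the beginning of the expansion conjectured in Conjecture~\ref{Bkcprop}, and the half-space case is genuinely more delicate than the type~$A$ analog (Lemma~\ref{solnlemma2}) because of the additional poles coming from $\frac{z_A+z_B}{z_A+z_B-c}$ and $\frac{z_j}{z_j+a}$.
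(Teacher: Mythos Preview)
Your treatment of conditions (1)--(3) is essentially the same as the paper's and is correct. For (2) you are slightly more cautious than necessary: once the factor $(z_i-z_{i+1}-c)$ is cancelled, the paper simply deforms the $z_{i+1}$ contour rightward to coincide with the $z_i$ contour and observes that \emph{no} poles are crossed in that strip (the nesting $\alpha_A>\alpha_{A+1}+c\geq c$ together with $\alpha_k\geq 0$ forces all of the $z_A\pm z_{i+1}-c$ and $z_{i+1}+a$ poles to lie outside $[\alpha_{i+1},\alpha_i]$), so no ``reduced-rank'' residue argument is needed. For (3) your argument is identical to the paper's.

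The real gap is in condition (4). First, the identity $k!\int_{W(BC_k)}f\,u^{c,a}=\tfrac{1}{2^k}\int_{\R^k}\tilde f\,u^{c,a}$ is false: the integral formula $u^{c,a}(t;\vec x)$ is \emph{not} $BC_k$-invariant in $\vec x$ (it only coincides with the $BC_k$-invariant moment function on the chamber $W(BC_k)$), so symmetrizing the test function does not let you replace the chamber integral by a full-space one. Second, and more seriously, your proposed route---deform all contours to $\I\R$, track residues, and argue they vanish against test functions as $t\to 0$---is precisely the hard residue expansion that the paper leaves as Conjecture~\ref{Bkcprop}; invoking it here would be circular. The paper bypasses this entirely with a much softer argument: write each rational factor as $1+\tfrac{c}{z_A\pm z_B-c}$, expand the product into a sum over subsets $K_1,K_2$ of index pairs, and show by a direct scaling estimate (change variables $w_j=t^{1/2}z_j$, shift contours along the unit vector $-\vec x/\|\vec x\|$) that every term with $|K_1|+|K_2|=\ell\geq 1$ contributes $O(t^{\ell/2})$ after integration against a bounded $f$. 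Only the term $K_1=K_2=\emptyset$ survives, and that term is a product of one-variable Gaussians (times $\tfrac{z_j}{z_j+a}$, handled by writing it as $1+\tfrac{a}{z_j+a}$), whose $t\to 0$ limit against $f$ on $W(BC_k)$ is $f(0)$ by ordinary Fourier inversion. No contour deformation to $\I\R$ and no residue bookkeeping is required.
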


Thus (up to assuming uniqueness of solutions to the many body system of Definition \ref{Bkmbs}) we may conclude that for $x_1\leq x_2\leq \cdots \leq x_k\leq 0$,
\begin{equation}\label{Bknci}
\bar{Z}^{a}(t;\vec{x}) =2^k\int_{\alpha_1-\I \infty}^{\alpha_1+\I\infty} \frac{dz_1}{2\pi \I} \cdots \int_{\alpha_k-\I \infty}^{\alpha_k+\I\infty} \frac{dz_k}{2\pi \I} \prod_{1\leq A<B\leq k} \frac{z_A-z_B}{z_A-z_B-1}\, \frac{z_A+z_B}{z_A+z_B-1} \, \prod_{j=1}^{k} e^{\frac{t}{2} z_j^2 + x_jz_j}\frac{z_j}{z_j+a},
\end{equation}
where we assume that $\alpha_1>\alpha_2 + 1 > \alpha_3 + 2> \cdots > \alpha_k + (k-1)$ and $\alpha_k=\max(-a+\e,0)$ for $\e>0$ arbitrary.

\begin{proof}[Proof of Lemma \ref{solnlemma}]
We sequentially check that $u^{c,a}(t;\vec{x})$ given in the statement of the lemma satisfies conditions (1), (2), (3) and (4) of the many body system in Definition \ref{Bkmbs}.

\noindent (1): This follows immediately from Leibniz's rule and the fact that $$\frac{d}{dt} e^{\frac{t}{2} z^2 +zx} = \frac{1}{2} \frac{d^2}{dx^2} e^{\frac{t}{2} z^2 +zx}.$$\newline
\noindent (2): Applying $\left(\frac{d}{dx_i} - \frac{d}{dx_{i+1}} -c\right)$ to the integrand of $u^{c,a}(t;\vec{x})$ changes the integrand by simply bringing out a factor of $(z_i-z_{i+1}-c)$.  This cancels the corresponding term in the denominator from the product over $1\leq A<B\leq k$. As a result, the integrand no longer has a pole corresponding to $z_i -z_{i+1} = c$ which means that without crossing any poles, we can deform the $z_{i+1}$ contour to coincide with the $z_{i}$ contour (the Gaussian decay at infinity justifies the deformation of these infinite contours, and the fact that all of the contours lie to the right of $-a$ implies no pole from the denominator $z_j+a$ is crossed). Note that if $c<0$ such a deformation is not necessary.

Since $x_i=x_{i+1}$, the integrand can now be rewritten in the form $$\int dz_i \int dz_{i+1} (z_i-z_{i+1}) G(z_i,z_{i+1})$$ where $G$ is symmetric in its two variables and all of the other integrations have been absorbed into its definition. As the contours of integrals coincide, it follows immediately that this integral is 0.\newline\newline
\noindent (3): For $\vec{x}$ with $x_k=0$, we find that
\begin{eqnarray*}
\left(\frac{d}{dx_k}+a\right) u^{c,a}(t;\vec{x})  \qquad\qquad\qquad\qquad\qquad\qquad\qquad\qquad\qquad\qquad\qquad\qquad\qquad\qquad\qquad\qquad\qquad&   \\
=2^k\int_{\alpha_1-\I \infty}^{\alpha_1+\I\infty} dz_1 \cdots \int_{\alpha_k-\I \infty}^{\alpha_k+\I\infty} dz_k \prod_{1\leq A<B\leq k} \frac{z_A-z_B}{z_A-z_B-c}\, \cdot \frac{z_A+z_B}{z_A+z_B-c} \,  \prod_{j=1}^{k-1} e^{\frac{t}{2} z_j^2 + x_jz_j} \frac{z_j}{z_j+a} \, e^{\frac{t}{2} z_k^2} z_k. &
\end{eqnarray*}
We are now free to deform the $z_k$ contour from $\alpha_k+\I \R$ to $\I\R$ without crossing any poles. The above integrand is invariant under the change of variables $z_k \mapsto -z_k$ as is the contour $\I \R$ (up to a factor of $-1$ coming from the change in orientation of $\I \R$). This, however, implies that the integral above is equal to the integral with the orientation of the $z_k$ contour reversed, which, in turn, implies that the integral is 0.\newline\newline
\noindent (4): First note that
$$
\frac{z_A-z_B}{z_A-z_B-c} = 1 + \frac{c}{z_A-z_B-c}, \qquad \textrm{and} \qquad \frac{z_A+z_B}{z_A+z_B-c} = 1 + \frac{c}{z_A+z_B-c}.
$$
We can write $u^{c,a}(t;\vec{x}) = \sum u^{c,a}_{K_1,K_2}(t;\vec{x})$, where the sum is over all $K_1, K_2\subset \left\{(A,B):1\leq A<B\leq k\right\}$, and where
\begin{eqnarray*}
u^{c,a}_{K_1,K_2}(t;\vec{x}) \qquad\qquad\qquad\qquad\qquad\qquad\qquad\qquad\qquad\qquad\qquad\qquad\qquad\qquad\qquad\qquad\qquad\qquad     &\\
= 2^k \int_{\alpha_1-\I\infty}^{\alpha_1+\I\infty} \frac{dz_1}{2\pi \I} \cdots \int_{\alpha_k-\I\infty}^{\alpha_k+\I\infty} \frac{dz_k}{2\pi \I} \prod_{(A,B)\in K_1} \frac{c}{z_A-z_B-c} \prod_{(A,B)\in K_2} \frac{c}{z_A+z_B-c} \,\prod_{j=1}^{k} e^{\tfrac{t}{2} z_j^2 + z_j x_j}\frac{z_j}{z_j+a}.&
\end{eqnarray*}
For use later, let $\ell = |K_1|+|K_2|$.

Define the unit vector in the direction of $\vec{x}$ as $v = \vec{x} / ||\vec{x}||$ and note that for $x_1< x_2< \cdots< x_k< 0$, it follows that $-v$ is strictly positive. On account of this fact, we can deform the $z_j$ contours to lie along $-t^{1/2} v_j + \I \R$ (this is because of the ordering of the elements of $\vec{x}$ and the Gaussian decay which justifies the deformation of contours). Let $w_j = t^{1/2} z_j$, then
\begin{eqnarray*}
u^{c,a}_{K_1,K_2}(t;\vec{x}) &=&2^k c^\ell  t^{\ell/2} t^{-k/2} \int_{-v_1-\I\R}^{-v_1+\I \R} \frac{dw_1}{2\pi \I}\cdots \int_{-v_k-\I\R}^{-v_k+\I \R} \frac{dw_k}{2\pi \I}\\
& & \times \, \prod_{(A,B)\in K_1} \frac{1}{w_A-w_B-ct^{1/2}} \prod_{(A,B)\in K_2} \frac{1}{w_A+w_B-ct^{1/2}}  e^{t^{-1/2} \vec{x}\cdot \vec{w}}\, \prod_{j=1}^{k} e^{\tfrac{1}{2} w_j^2}\frac{w_j}{w_j+t^{1/2}a}.
\end{eqnarray*}
Due to the choice of contours, $\Real(\vec{x}\cdot \vec{w}) = - ||\vec{x}||$.
Taking absolute values we find that
\begin{eqnarray*}
\big\vert u^{c,a}_{K_1,K_2}(t;\vec{x})\big\vert &=& 2^k c^\ell   t^{\ell/2} t^{-k/2} e^{-t^{-1/2} ||x||}  \int_{-v_1-\I\R}^{-v_1+\I \R} \frac{dw_1}{2\pi \I}\cdots \int_{-v_k-\I\R}^{-v_k+\I \R} \frac{dw_k}{2\pi \I}\\
& &\times\,\left|\prod_{(A,B)\in K_1} \frac{1}{w_A-w_B-ct^{1/2}} \prod_{(A,B)\in K_2} \frac{1}{w_A+w_B-ct^{1/2}}\, \prod_{j=1}^{k} e^{\tfrac{1}{2} w_j^2} \frac{w_j}{w_j+t^{1/2}a} \right|.
\end{eqnarray*}
The integral above is now clearly convergent, independent of $x$, and bounded by a constant $C$ which is uniform in $t<t_0$. Thus we have established that
\begin{equation}\label{ubound}
\big\vert u^{c,a}_{K_1,K_2}(t;\vec{x})\big\vert \leq C' t^{\ell/2} t^{-k/2} e^{-t^{-1/2} ||x||}
\end{equation}
for some $C'>0$.

For any bounded continuous function $f$ we may compute the integral against $u^{c,a}_{K_1,K_2}$. The boundedness of $f$ means that  $|f(x)|<M$ for some constant $M$ as $x$
varies. By the triangle inequality
\begin{equation*}
\left|\int_{W(BC_k)} u^{c,a}_{K_1,K_2}(t;\vec{x}) f(\vec{x})d\vec{x}\right| \leq \int_{W(BC_k)} \left|u^{c,a}_{K_1,K_2}(t,\vec{x})\right| M d\vec{x} \leq C'' t^{\ell/2}.
\end{equation*}
The second inequality follows by substituting (\ref{ubound}) and performing the change of variables $y_i=t^{-1/2} x_i$ in order to bound the resulting integral (this change of variables results in the cancelation of the $t^{-k/2}$ term). Observe that $\ell\geq 1$ for all choices of $K_1$ and $K_2$ which are not simultaneously empty, and as $t\to 0$ these terms disappear.

The only term in the expansion of $u^{c,a}$ which does not contribute negligibly as $t\to 0$ corresponds to when $K_1=K_2=\emptyset$ are empty. In that case the desired result can be seen in the following manner. We showed that
\begin{equation*}
k! \int_{W(BC_k)} f(x) u^{c,a}(t;\vec{x}) d\vec{x} = o(1) + 2^k k! \int_{\alpha_1-\I\infty}^{\alpha_1+\I\infty} \frac{dz_1}{2\pi \I} \cdots \int_{\alpha_k-\I\infty}^{\alpha_k+\I\infty} \frac{dz_k}{2\pi \I} \, \tilde{f}(\vec{z}) \prod_{j=1}^{k} e^{\frac{t}{2}z_j^2} \frac{z_j}{z_j+a},
\end{equation*}
where
\begin{equation*}
\tilde{f}(\vec{z}) =  \int_{W(BC_k)} f(x) \prod_{j=1}^{k} e^{x_j z_j},
\end{equation*}
and where the $o(1)$ term comes from all other terms in the expansion of $u^{c,a}$ and goes to $0$ as $t\to 0$.
Writing $\frac{z}{z+a}$ as $1+ \frac{a}{z+a}$ and using decay of $\tilde{f}(\vec{z})$ as the real part of $z_j$ goes to infinity, it is possible to show that
\begin{equation*}
\lim_{t\to 0} 2^k k! \int_{W(BC_k)} f(x) u^{c,a}(t;\vec{x}) d\vec{x} = 2^k k! \int_{\alpha_1-\I\infty}^{\alpha_1+\I\infty} \frac{dz_1}{2\pi \I} \cdots \int_{\alpha_k-\I\infty}^{\alpha_k+\I\infty} \frac{dz_k}{2\pi \I} \, \tilde{f}(\vec{z}).
\end{equation*}
Observe that $\tilde{f}(\vec{z})$ is the Fourier transform of a function which is $f(\vec{x})$ for $\vec{x}\in W(BC_k)$ and zero outside, we readily see that
\begin{equation*}
2^k k! \int_{\alpha_1-\I\infty}^{\alpha_1+\I\infty} \frac{dz_1}{2\pi \I} \cdots \int_{\alpha_k-\I\infty}^{\alpha_k+\I\infty} \frac{dz_k}{2\pi \I} \, \tilde{f}(\vec{z}) = f(0),
\end{equation*}
which is exactly as desired.

\end{proof}

\section{Expansion of nested contour integrals into residues}\label{expsec}

At this point we have seen two types of nested contour integral formulas -- those related to type $A$ (the full-space polymer) and those related to type $BC$ (the half-space polymer). In this section we present formulas for each type which shows the effect of deforming the nested contours to all lie upon the imaginary axis $\I \R$. This amounts to computing the residue expansion coming from the various poles crossed during this deformation.

The general scheme followed in these computations are quite similar between the two types (though in the BC case we are presently unable to complete the proof, thus the result remains a conjecture). The first step is to identify the residual subspaces resulting from the deformation. The second step is to show that (in both cases) the residual subspaces can be brought (via the respective action of the symmetric or hyperoctahedral  groups) into a certain canonical form indexed only by partition $\lambda\vdash k$. And, the third step is to compute the residue corresponding to a given partition $\lambda$. This splits into computing the residue of the portion of the integrand containing poles and computing the substitution of the residual subspace relations into the remaining portion of the integrand. The observation of the first step, once made, is relatively easy to confirm, and computing residues and substitutions is straightforward as well (if not slightly technical). In the BC case, there is a subtlety involving contour deformations which we encounter in step 2. Essentially, in order to bring our expressions into a canonical form we end up getting contours of integration which need to be shifted back to $\I \R$. During this shifting we encounter a number of poles which we believe, but have not been able to prove, all cancel. This is the content of Claim \ref{claim:Res-are-zero}.
The basic structure of the computation behind these results comes from the proof of Theorem 3.13 of \cite{HO}, see also Section \ref{expsec} for its non-technical description.

Proposition \ref{Akcprop} and Conjecture \ref{Bkcprop} are stated below, though their proofs are held off until Section \ref{proofsec}. Immediately after presenting these residue expansions we show how they are applied to the full-space and half-space polymer partition function joint moment formulas. In the case of the half-space polymer, Conjecture \ref{Bkcprop} is only applied for $a\geq 0$ (repulsive interaction at the origin). The source of this restriction has to do with the occurrence of additional poles in the contour deformations when $a<0$.

In the statements and use of these propositions we use some notation related to partitions, the symmetric group $S_k$, the hyperoctahedral  group $BC_k$, as well as Pochhammer symbols $(a)_{n}$, which all can be found in Section \ref{partsec} or \ref{Asymsec}.

\subsection{Type $A_k$ expansion and full-space polymer application}
We state Proposition \ref{Akcprop} and then immediately apply it to expand the nested contour integral formula for $\bar{Z}(t;\vec{x})$ calculated earlier in equation (\ref{Aknci}). The proof of the below proposition can be found in Section \ref{proofsec}.

\begin{proposition}\label{Akcprop}
Fix $k\geq 1$ and $c\in (0,\infty)$. Given a set of real numbers $\alpha_1,\ldots,\alpha_k$ and a function $F(z_1,\ldots ,z_k)$ which satisfy
\begin{enumerate}
\item For all $1\leq j \leq k-1$, $\alpha_j>\alpha_{j+1}+c$;
\item For all $1\leq j \leq k$ and $z_1,\ldots,z_k$ such that $z_i\in \alpha_i+\I \R$ for $1\leq i<j$ and $z_i\in \alpha_k+\I \R$ for $j<i\leq k$, the function $z_j\mapsto F(z_1,\ldots ,z_j,\ldots, z_k)$ is analytic in the complex domain $\{z: \alpha_k\leq \Real(z) \leq \alpha_j\}$ and is bounded in modulus on that domain by $\const\, \Imag(z_j)^{-1-\delta}$ for some constants $\const,\delta>0$ (depending on $z_1,\ldots,z_{j-1},z_{j+1},\ldots,z_k$ but not $z_j$).
\end{enumerate}
Then we have the following residue expansion identity:
\begin{align*}
\int_{\alpha_1-\I \infty}^{\alpha_1+\I \infty} \frac{dz_1}{2\pi \I} \cdots \int_{\alpha_k-\I \infty}^{\alpha_k+\I \infty} \frac{dz_k}{2\pi \I} \prod_{1\leq A<B\leq k} \frac{z_A-z_B}{z_A-z_B-c} F(\vec{z})
\qquad\qquad\qquad\qquad\qquad\qquad&\\
=c^k \sum_{\substack{\lambda\vdash k\\ \lambda= 1^{m_1}2^{m_2}\cdots}} \frac{1}{m_1! m_2!\cdots} \, \int_{\alpha_k-\I \infty}^{\alpha_k+\I \infty} \frac{dw_1}{2\pi \I} \cdots \int_{\alpha_k-\I \infty}^{\alpha_k+\I \infty} \frac{dw_{\ell(\lambda)}}{2\pi \I}\, \det\left[\frac{1}{w_i +c\lambda_i -w_j}\right]_{i,j=1}^{\ell(\lambda)}&\\
\times\,E^c(w_1,w_1+c,\ldots, w_1 + c(\lambda_1-1), \ldots, w_{\lambda_{\ell(\lambda)}},w_{\lambda_{\ell(\lambda)}}+c, \ldots, w_{\lambda_{\ell(\lambda)}}+c(\lambda_{\ell(\lambda)}-1)),&
\end{align*}
where
\begin{equation*}
E^c(z_1,\ldots, z_k) := \sum_{\sigma\in S_k} \prod_{1\leq B<A\leq k} \frac{z_{\sigma(A)}-z_{\sigma(B)}-c}{z_{\sigma(A)}-z_{\sigma(B)}} F(\sigma(\vec{z})).
\end{equation*}
\end{proposition}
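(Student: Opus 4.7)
The plan is to proceed by induction on $k$, sequentially deforming the contour of $z_1$ from $\alpha_1 + \I\R$ inward to $\alpha_k + \I\R$ and collecting residues along the way. Hypothesis (2) guarantees that $F$ is analytic in the relevant strip with enough decay to justify the deformation, so the only poles encountered are at $z_1 = z_B + c$ for $B = 2, \ldots, k$, arising from the factor $1/(z_1 - z_B - c)$ in the product $\prod_{A<B}(z_A-z_B)/(z_A-z_B-c)$. The base case $k=1$ holds trivially: $\lambda=(1)$, the determinant is $1$, and $E^c = F$. In the inductive step, after picking up the residue at $z_1 = z_B + c$, the remaining integrand is of the same form but in $k-1$ variables (with $F$ replaced by an analytic function still satisfying hypothesis (2)), so the hypothesis applies.

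\textbf{Residual subspaces as strings.} Iterating the deformation produces chains of residues of the form $z_{i_1} = z_{i_2}+c,\, z_{i_2}=z_{i_3}+c,\ldots$ that collapse the variables $(z_{i_1},z_{i_2},\ldots)$ onto a string $(w+(\ell-1)c,\ldots, w+c, w)$ with a single free variable $w$ integrated over $\alpha_k + \I\R$. Collecting all such terms, the expansion is naturally indexed by set partitions of $\{1,\ldots,k\}$ into linear strings, and regrouping by the integer partition $\lambda \vdash k$ recording string lengths yields the sum on the right-hand side. The factor $1/(m_1!m_2!\cdots)$ arises because permuting strings of equal length produces the same canonical residual subspace.

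\textbf{The determinant and symmetrization.} With $\lambda$ fixed, I would split $\prod_{A<B}(z_A-z_B)/(z_A-z_B-c)$ into intra-string and inter-string factors. The intra-string factors telescope to an explicit rational constant and produce the overall $c^k$ prefactor. The inter-string factors, evaluated at the string positions $w_i + jc$, produce a rational expression that I would recognize as $\det[1/(w_i+c\lambda_i - w_j)]$ via the standard Cauchy determinant identity $\det[1/(x_i - y_j)] = \prod_{i<j}(x_i - x_j)(y_j - y_i)/\prod_{i,j}(x_i-y_j)$ applied with $x_i = w_i + c\lambda_i$ and $y_j = w_j$. The function $E^c$ then emerges because each canonical residual subspace can be reached from the original integral by many different orderings of residues: summing over $\sigma \in S_k$ collects all such orderings, and orderings that do not correspond to a valid string labeling contribute zero, because the numerator factor $\prod_{B<A}(z_{\sigma(A)}-z_{\sigma(B)}-c)$ picks up a zero factor exactly at those configurations.

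\textbf{Main obstacle.} The hardest step will be the clean extraction of the factor $\det[1/(w_i + c\lambda_i - w_j)]$ from the raw inter-string product after all residues are taken; naively, one has a tangled rational expression and the determinantal structure is not manifest. Matching the shifted arguments so that the Cauchy identity applies cleanly is the crucial combinatorial move, and getting the prefactor $c^k$ and the symmetry counting $1/(m_1!m_2!\cdots)$ to come out exactly requires careful bookkeeping. Since the same formula appears as \cite[Proposition 3.2.1]{BorCor} and \cite[Lemma 7.3]{BCPS1}, I would closely model the final accounting on those proofs, which the authors explicitly indicate will be the template.
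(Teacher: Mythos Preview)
Your second and third paragraphs capture the paper's argument well: the residual subspaces are strings indexed by $\lambda\vdash k$, the intra-/inter-string split leads via the Cauchy determinant to $\det[1/(w_i+c\lambda_i-w_j)]$, and the symmetrization over $S_k$ (with the observation that non-string orderings give zero residue) produces $E^c$. This is exactly the paper's three-step scheme.

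However, your first paragraph contains a genuine gap. The paper explicitly avoids induction on $k$ and remarks that the inductive proof in \cite{BorCor} is ``considerably more involved.'' Your claim that after taking the residue at $z_1=z_B+c$ ``the remaining integrand is of the same form but in $k-1$ variables (with $F$ replaced by an analytic function)'' is false as stated. Substituting $z_1=z_B+c$ into $\prod_{A<B'}\tfrac{z_A-z_{B'}}{z_A-z_{B'}-c}$ does not return a clean $(k-1)$-variable product of the same shape: for $B'>B$ the factors involving $z_B$ become $\tfrac{z_B+c-z_{B'}}{z_B-z_{B'}-c}$ rather than $\tfrac{z_B-z_{B'}}{z_B-z_{B'}-c}$, while for $2\le B'<B$ the factors involving $z_B$ collapse to $1$. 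The new rational factor cannot be absorbed into an analytic $\tilde F$ satisfying hypothesis~(2), since it has poles at $z_B=z_{B'}+c$. So the induction hypothesis does not apply directly; making it work requires tracking exactly this modified pole structure, which is precisely the bookkeeping that the paper's direct approach bypasses.

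The paper instead deforms the contours in the order $z_{k-1},z_{k-2},\ldots,z_1$ (not $z_1$ first), so that at each stage all already-deformed variables sit on the common contour $\alpha_k+\I\R$; this makes the enumeration of residual subspaces by lists $I\in S(\lambda)$ clean (equation~(\ref{RHS2})). The decisive algebraic step you underplay is Step~3: one writes
\[
\prod_{A<B}\frac{y_{\sigma(A)}-y_{\sigma(B)}}{y_{\sigma(A)}-y_{\sigma(B)}-c}
= \prod_{A\neq B}\frac{y_A-y_B}{y_A-y_B-c}\ \cdot\ \prod_{B<A}\frac{y_{\sigma(A)}-y_{\sigma(B)}-c}{y_{\sigma(A)}-y_{\sigma(B)}},
\]
the first factor being $S_k$-invariant and containing all the relevant poles. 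Its residue along the canonical string subspace is computed once and for all (Lemma~\ref{reslemma}, yielding the Cauchy determinant and the $c^k$ prefactor), while the second factor is merely substituted and, summed over $\sigma$, gives $E^c$ on the nose. Your heuristic ``many different orderings'' is the right intuition, but this invariant-times-remainder decomposition is what makes both the determinant extraction and the extension of the sum to all of $S_k$ transparent.
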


We may apply this proposition to the formula for $\bar{Z}(t,\vec{x})$ provided in equation (\ref{Aknci}). The hypotheses of the proposition are easily checked and the outcome is the following expansion:
\begin{eqnarray*}
\bar{Z}(t;\vec{x}) = \sum_{\substack{\lambda\vdash k\\ \lambda= 1^{m_1}2^{m_2}\cdots}} \frac{1}{m_1! m_2!\cdots} \, \int_{-\I \infty}^{\I \infty} \frac{dw_1}{2\pi \I} \cdots \int_{\alpha_k-\I \infty}^{\alpha_k+\I \infty} \frac{dw_{\ell(\lambda)}}{2\pi \I} \det\left[\frac{1}{w_i +\lambda_i -w_j}\right]_{i,j=1}^{\ell(\lambda)}&\\
\times\,E^1_{\vec{x}}(w_1,w_1+1,\ldots, w_1 + (\lambda_1-1), \ldots, w_{\lambda_{\ell(\lambda)}},w_{\lambda_{\ell(\lambda)}}+1,\ldots, w_{\lambda_{\ell(\lambda)}}+(\lambda_{\ell(\lambda)}-1)),&
\end{eqnarray*}
where
\begin{equation*}
E^1_{\vec{x}}(z_1,\ldots, z_k) := \sum_{\sigma\in S_k} \prod_{1\leq B<A\leq k} \frac{z_{\sigma(A)}-z_{\sigma(B)}-1}{z_{\sigma(A)}-z_{\sigma(B)}} \prod_{j=1}^{k} e^{\frac{t}{2} z_{\sigma(j)}^2 + x_j z_{\sigma(j)}}.
\end{equation*}

As we are interested in $\EE[Z(t,0)^k]$ it suffices to consider all $x_i\equiv 0$. Using the symmetrization identity\footnote{As long as all $x_i$ are equal, the identity in equation (\ref{Akfacsym}) can be applied. However, when the $x_i$ differ it is not clear how to simplify $E_{\vec{x}}$ in a similar manner.} given in equation (\ref{Akfacsym}) we find that
\begin{eqnarray*}
E^1_{\vec{0}}(w_1,w_1+1,\ldots, w_1 +(\lambda_1-1), \ldots, w_{\lambda_{\ell(\lambda)}},w_{\lambda_{\ell(\lambda)}}+1,\ldots, w_{\lambda_{\ell(\lambda)}}+(\lambda_{\ell(\lambda)}-1)) &\\
= k! \prod_{j=1}^{\ell(\lambda)} e^{\frac{t}{2}\big(w_j^2 + (w_j+1)^2+\cdots (w_j+\lambda_j-1)^2\big)} = k! \prod_{j=1}^{\ell(\lambda)} \frac{e^{\frac{t}{2} G(w_j+\lambda_j)}}{e^{\frac{t}{2} G(w_j)}},\qquad\qquad\qquad\qquad&
\end{eqnarray*}
where
\begin{equation}\label{Gdef}
G(w)=\frac{w^3}{3}-\frac{w^2}{2}+\frac{w}{6}.
\end{equation}
In this last line we have used the fact that
\begin{equation}\label{Grat}
\frac{G(w+\ell)}{G(w)} = w^2 + (w+1)^2 + \cdots (w+\ell-1)^2.
\end{equation}

Using this, we conclude that
\begin{eqnarray}\label{Asupexp}
\EE[Z(t,0)^k]  =  k! \sum_{\substack{\lambda\vdash k\\ \lambda= 1^{m_1}2^{m_2}\cdots}} \frac{1}{m_1! m_2!\cdots} \qquad\qquad\qquad\qquad\qquad\qquad\qquad\qquad\qquad\qquad\qquad\qquad&\\
\nonumber \times\, \int_{-\I \infty}^{\I \infty} \frac{dw_1}{2\pi \I} \cdots \int_{-\I \infty}^{\I \infty} \frac{dw_{\ell(\lambda)}}{2\pi \I} \det\left[\frac{1}{w_i +\lambda_i -w_j}\right]_{i,j=1}^{\ell(\lambda)} \prod_{j=1}^{\ell(\lambda)} \frac{e^{\frac{t}{2} G(w_j+\lambda_j)}}{e^{\frac{t}{2} G(w_j)}}.
\end{eqnarray}

\subsection{Type $BC_k$ expansion and half-space polymer application}\label{bkcase}
We present an analogous result to Proposition \ref{Akcprop} above, which we then apply to $\bar{Z}^0(t;\vec{x})$. We explain below in Remark \ref{nota} why the residue expansion for the general case of $a\neq 0$ presents some additional complexities (and possibly additional terms).
We actually give a complete proof of the below conjecture modulo Claim \ref{claim:Res-are-zero} which deals with the cancelation of the sum of certain residues arising from contour deformations in the course of our computations. Partial evidence why Claim \ref{claim:Res-are-zero} is correct is given in Section \ref{sec:one-string}, where cancelation of some of the arising residues is shown.

\begin{conj}\label{Bkcprop}
Fix $k\geq 1$ and $c\in (0,\infty)$. Given a set of real numbers $\alpha_1,\ldots,\alpha_k$ and a function $F(z_1,\ldots ,z_k)$ which satisfy
\begin{enumerate}
\item $\alpha_k=0$;
\item For all $1\leq j \leq k-1$, $\alpha_j>\alpha_{j+1}+c$;
\item For $z_1,\ldots, z_k$ in the complex domain $\{x+\I y: -\alpha_1\leq  x\leq \alpha_1, y\in \R\}$, for all $1\leq j\leq k$, the function $z_j\mapsto F(z_1,\ldots ,z_j,\ldots, z_k)$ is analytic and is bounded by a constant (depending on $z_1,\ldots,z_{j-1},z_{j+1},\ldots,z_k$ but not $z_j$) times $\Imag(z_j)^{-1-\delta}$ for some $\delta>0$.
\end{enumerate}
Then we have the following residue expansion identity:
\begin{align}\label{mainform}
&\int_{\alpha_1-\I \infty}^{\alpha_1+\I \infty}\frac{dz_1}{2 \pi\I} \cdots \int_{\alpha_k-\I \infty}^{\alpha_k+\I \infty} \frac{dz_k}{2 \pi\I} \prod_{1\leq A<B\leq k} \frac{z_A-z_B}{z_A-z_B-c} \,\frac{z_A+z_B}{z_A+z_B-c} F(\vec{z})\\
\nonumber &= c^k \sum_{\substack{\lambda\vdash k\\ \lambda= 1^{m_1}2^{m_2}\cdots}} \frac{(-1)^{\ell(\lambda)}}{m_1! m_2!\cdots} \, \int_{\alpha_k-\I \infty}^{\alpha_k+\I \infty} \frac{dw_1}{2 \pi\I} \cdots \int_{\alpha_k-\I \infty}^{\alpha_k+\I \infty} \frac{dw_{\ell(\lambda)}}{2 \pi\I} \prod_{j=1}^{\ell(\lambda)} \frac{1}{4 c}\, \frac{\left(\tfrac{2w_j+c}{2c}\right)_{\lambda_j-1}}{\left(\tfrac{2w_j}{2c}\right)_{\lambda_j}} \Pf\left[\frac{u_i-u_j}{u_i+u_j}\right]_{i,j=1}^{2\ell(\lambda)} \\
\nonumber & E^{c}(w_1,w_1+c,\ldots, w_1+c(\lambda_1-1),\ldots, w_{\ell(\lambda)},w_{\ell(\lambda)}+c,\ldots, w_{\ell(\lambda)} + c(\lambda_{\ell(\lambda)}-1))
\end{align}
where
\begin{equation*}
(u_1,\ldots, u_{2\ell(\lambda)}) :=\big(-w_1+\tfrac{c}{2}, w_1-\tfrac{c}{2}+\lambda_1 c, -w_2+\tfrac{c}{2},w_2-\tfrac{c}{2}+\lambda_2 c, \ldots, -w_{\ell(\lambda)}+\tfrac{c}{2}, w_{\ell(\lambda)}-\tfrac{c}{2} + \lambda_{\ell(\lambda)} c\big)
\end{equation*}
and where
\begin{equation*}
E^{c}(z_1,\ldots, z_k) := \sum_{\sigma\in BC_k} \prod_{1\leq B<A\leq k} \frac{z_{\sigma(A)}-z_{\sigma(B)}-c}{z_{\sigma(A)}-z_{\sigma(B)}}\frac{z_{\sigma(A)}+z_{\sigma(B)}-c}{z_{\sigma(A)}+z_{\sigma(B)}} F(\sigma(\vec{z})).
\end{equation*}
\end{conj}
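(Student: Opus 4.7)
\medskip

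My plan is to mimic the three-step scheme described in the introduction (Steps 1--3 following equation (\ref{astar})) that was carried out in the type $A$ case of Proposition \ref{Akcprop}, but now tracking the additional poles coming from the $z_A+z_B-c$ denominators and the extra reflection symmetries in $BC_k$. Concretely, I would start with all contours nested as in the left-hand side of (\ref{mainform}) and deform the $z$-contours successively inward to $\alpha_k+\I\R=\I\R$ in the order $z_1,z_2,\ldots,z_k$. During each deformation two families of poles are crossed: the "string" poles $z_A=z_B+c$ (already present in the type $A$ case) and the "mirror" poles $z_A=-z_B+c$ coming from the $z_A+z_B-c$ denominator. The residual subspaces are therefore indexed not only by a partition $\lambda\vdash k$ recording the sizes of the strings (as in Proposition \ref{Akcprop}) but also by which strings, if any, are attached to a mirror of another string. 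The plan is to show that a mirrored string of length $\lambda_j$ based at $-w_j+c$ can be recast, via the sign-flip element of $BC_k$, as an ordinary string of length $\lambda_j$ based at $w_j$, so that after summing over $BC_k$ only one canonical configuration per partition $\lambda$ remains, in complete analogy with Step~2 of the type $A$ case.

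Once this canonical form is reached, Step~3 factorizes the integrand into a $BC_k$-invariant piece (absorbing the factors $\prod(z_A-z_B)(z_A+z_B)$ in the numerator and producing the symmetrization $E^c$) and a residue piece that must be evaluated on the canonical residual subspace $z_j=w_i+c(j-1)$. The residue piece has two parts. The intra-string part (products of $(z_A+z_B)/(z_A+z_B-c)$ across the same string) telescopes into the Pochhammer ratio $(2w_j+c)/(2c)_{\lambda_j-1}\big/(2w_j/2c)_{\lambda_j}$ after using the identity $\prod_{0\le a<b\le\lambda-1}\frac{(2w+(a+b)c)(2w+(a+b)c-c)}{(2w+(a+b)c-c)(\ldots)}$; this is a direct analog of the type $A$ telescoping that produced $G(w+\lambda)/G(w)$. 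The inter-string part is the new ingredient: the product $\prod_{i\ne j}\frac{z_{A_i}+z_{B_j}}{z_{A_i}+z_{B_j}-c}$ of cross-string $BC$ factors must collapse to the Pfaffian $\Pf\!\bigl[(u_i-u_j)/(u_i+u_j)\bigr]$ with the stated $u$-variables. I would prove this by recognizing the right-hand side as a "$BC$-Cauchy--Pfaffian identity" of Schur/Sundquist type and checking it either by direct induction on $\ell(\lambda)$ or by matching poles and residues in the $w$'s on both sides, using that the diagonal $2\times 2$ blocks vanish (so that $\Pf$ collapses to a sum over perfect matchings of $2\ell(\lambda)$ points split into $\ell(\lambda)$ forced pairs plus free crossings).

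The main obstacle is Step~2. After pulling residues at $z_A=z_B+c$ or $z_A=-z_B+c$, the variables that remain are not automatically supported on $\I\R$: deforming them there crosses further poles, and to have a unified sum over $\lambda$ one must show that all such secondary residues cancel in pairs. This is exactly the content of Claim \ref{claim:Res-are-zero}, which the authors themselves leave as a conjecture and only verify in the one-string case in Section \ref{sec:one-string}. I would attempt to establish it as follows: introduce a marking that records, for each remaining $w$-variable, whether it sits on the original nested contour or on the reflected contour; pair up configurations that differ by a single application of the reflection $w\mapsto -w+c\cdot(\text{shift})$ composed with a transposition moving the corresponding variable across the offending pole, and show that the Jacobian/sign from the $BC_k$ action exactly cancels the residue contribution. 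Since the $E^c$ prefactor is manifestly $BC_k$-antisymmetric in the appropriate sense (up to the $\prod(z_A\pm z_B)$ denominators which are absorbed into the canonical form), the pairing produces cancellations pole by pole.

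Finally, the analytic hypotheses (1)--(3) on $F$ are used in two places: the decay condition $|F|\le \const\,\Imag(z_j)^{-1-\delta}$ justifies all contour deformations via standard Cauchy/Jordan estimates at $\I\infty$, and the analyticity on the strip $-\alpha_1\le\Re(z)\le\alpha_1$ guarantees that the only poles encountered are those of the rational prefactor. Assembling the three steps, collecting the combinatorial factor $1/(m_1!m_2!\cdots)$ from permutations of equal-length strings, the sign $(-1)^{\ell(\lambda)}$ from the reflection orientations, the constant $c^k$ from the residues, and the Pfaffian from the cross-string product, produces precisely the right-hand side of (\ref{mainform}). The whole argument is conditional on Claim \ref{claim:Res-are-zero}, which is where I expect essentially all the real work to lie.
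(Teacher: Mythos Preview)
Your proposal follows essentially the same three-step scheme as the paper's argument in Section~\ref{bcproof}, and you correctly single out Claim~\ref{claim:Res-are-zero} as the one genuine obstruction (which the paper likewise leaves unproved except for the one-string case in Section~\ref{sec:one-string}). A few details diverge from the paper and are worth correcting. First, the paper deforms the contours in the order $z_{k-1},z_{k-2},\ldots,z_1$ (inner to outer), not $z_1,\ldots,z_k$; this order keeps all already-deformed variables on $\I\R$ and produces the specific diagram combinatorics of \eqref{firstclaim}, whereas your order would generate extra shifted contours earlier in the argument. Second, in Step~3 the roles are reversed from what you wrote: the $BC_k$-invariant factor is the one carrying the poles and whose residue is computed (Lemma~\ref{fourthlemma}, obtained as the $q\to1$ limit of Lemma~\ref{thirdlem}), while the non-invariant remainder is what, after summing over $\sigma\in BC_k$ and substituting the string relations, yields $E^c$. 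Third, the Pfaffian does not arise as a purely inter-string object; it comes from applying the Schur Pfaffian identity $\Pf\bigl[(u_i-u_j)/(u_i+u_j)\bigr]=\prod_{i<j}(u_i-u_j)/(u_i+u_j)$ to the full residue product (see the proof of Lemma~\ref{fourthlemma}), so the clean split ``intra-string $=$ Pochhammer, inter-string $=$ Pfaffian'' is not accurate---both string-internal and cross-string factors feed into the Pfaffian, and the Pochhammer ratio comes specifically from the same-string $(y_A+y_B)$ substitutions of Lemma~\ref{sublemma}. Finally, your proposed pairing for Claim~\ref{claim:Res-are-zero} is in the right spirit, but the actual cancellation mechanism even in the one-string case (Section~\ref{sec:one-string}) is considerably more intricate than a single reflection-plus-transposition: it requires grouping diagrams into blocks of size $2^{L}$ governed by ``pivot pairs'' and showing that the residue sum factors as a product of differences $\tilde R_i(nc/2+\epsilon_k)-\tilde R_i(nc/2-\epsilon_k)$, each vanishing to first order at $\epsilon_k=0$.
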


We apply this conjecture to the formula for $\bar{Z}^0(t,\vec{x})$ provided in equation (\ref{Bknci}). Observe that the formula in (\ref{Bknci}) can be written in the form of the left-hand side of (\ref{mainform}) by taking $c=1$ and
\begin{equation}
\label{eq:F-apply}
F(\vec{z}) = \prod_{j=1}^{k} e^{\frac{t}{2}z_j^2 + x_j z_j}.
\end{equation}
Notice that the contours satisfy the first two hypotheses of Conjecture \ref{Bkcprop} and that the function $F$ satisfies the analyticity and decay conditions of the third hypothesis.

\begin{remark}\label{nota}
The reason we have restricted our attention at this point to the case of $a=0$ is because of the following: For general $a\neq 0$ the function $F(\vec{z})$ contains an extra term
$$\prod_{j=1}^{k} \frac{z_j}{z_j+a}.$$
When $a<0$ the contours from equation (\ref{Bknci}) are such that $\alpha_k=-a+\epsilon>0$. If we want to shift to $\alpha_k=0$ then we cross poles, which contribute additional terms which are not accounted for in the above result. For $a>0$ we have $\alpha_k=0$, however, if $a<(k-1)$, then the third hypothesis is violated since the pole coming from the denominator of the above fraction will be included in the complex domain in which analyticity is desired (see Remark \ref{aexception} for an explanation of how this hypothesis is utilized).
\end{remark}

The result of applying the above conjecture for $a=0$ is that (from here on out, all half-space formulas will assume the validity of Conjecture \ref{Bkcprop})
\begin{eqnarray*}
\bar{Z}^0(t;\vec{x}) = 2^k \sum_{\substack{\lambda\vdash k\\ \lambda= 1^{m_1}2^{m_2}\cdots}} \frac{(-1)^{\ell(\lambda)}}{m_1! m_2!\cdots} \, \int_{-\I \infty}^{\I \infty} \frac{dw_1}{2\pi \I} \cdots \int_{-\I \infty}^{\I \infty} \frac{dw_{\ell(\lambda)}}{2\pi \I}
\prod_{j=1}^{\ell(\lambda)} \frac{1}{4}\, \frac{\left(w_j+\frac{1}{2}\right)_{\lambda_j-1}}{(w_j)_{\lambda_j}} \Pf\left[\frac{u_i-u_j}{u_i+u_j}\right]_{i,j=1}^{2\ell(\lambda)}&\\
\times\,E^{1}_{\vec{x}}(w_1,w_1+1,\ldots, w_1 + (\lambda_1-1), \ldots, w_{\lambda_{\ell(\lambda)}},w_{\lambda_{\ell(\lambda)}}+1,\ldots, w_{\lambda_{\ell(\lambda)}}+(\lambda_{\ell(\lambda)}-1)),&
\end{eqnarray*}
where
\begin{equation}\label{uabove}
(u_1,\ldots, u_{2\ell(\lambda)}) :=\big(-w_1+\tfrac{1}{2}, w_1-\tfrac{1}{2}+\lambda_1 , -w_2+\tfrac{1}{2},w_2-\tfrac{1}{2}+\lambda_2 , \ldots, -w_{\ell(\lambda)}+\tfrac{1}{2}, w_{\ell(\lambda)}-\tfrac{1}{2} + \lambda_{\ell(\lambda)} \big)
\end{equation}
and
\begin{equation*}
E^{1}_{\vec{x}}(z_1,\ldots, z_k) := \sum_{\sigma\in BC_k} \prod_{1\leq B<A\leq k} \frac{z_{\sigma(A)}-z_{\sigma(B)}+c}{z_{\sigma(A)}-z_{\sigma(B)}}\frac{z_{\sigma(A)}+z_{\sigma(B)}+c}{z_{\sigma(A)}+z_{\sigma(B)}} \, \prod_{j=1}^{k} e^{\frac{t}{2} z_{\sigma(j)}^2 + x_j z_{\sigma(j)}}.
\end{equation*}

As we are interested in $\EE[Z^0(t,0)^k]$ it suffices to consider all $x_i\equiv 0$. Using the symmetrization identity\footnote{We are presently only aware of a suitable identity to deal with the case $x_i\equiv 0$ (even $x_i\equiv x$ for $x\neq 0$ is unclear).} given in equation (\ref{Bkfacsym}) we find that
\begin{eqnarray*}
E^{1}_{\vec{0}} (z_1,\ldots, z_k) &=& \prod_{j=1}^{k} e^{\frac{t}{2} z_j^2} \sum_{\sigma\in BC_k} \prod_{1\leq B<A\leq k} \frac{z_{\sigma(A)}-z_{\sigma(B)}+c}{z_{\sigma(A)}-z_{\sigma(B)}}\frac{z_{\sigma(A)}+z_{\sigma(B)}+c}{z_{\sigma(A)}+z_{\sigma(B)}}\\
&=& 2^k k! \prod_{j=1}^{k} e^{\frac{t}{2} z_j^2}.
\end{eqnarray*}
Note that the identity was used in going from the first to second line. The equality in the first line came from the $BC_k$ invariance of the product which is factored out of the summation over $BC_k$.

Owing to equation (\ref{Grat}) we may rewrite
\begin{eqnarray*}
E^{1}_{\vec{0}}(w_1,w_1+1,\ldots, w_1 +(\lambda_1-1), \ldots, w_{\lambda_{\ell(\lambda)}},w_{\lambda_{\ell(\lambda)}}+1,\ldots, w_{\lambda_{\ell(\lambda)}}+(\lambda_{\ell(\lambda)}-1)) &\\
= 2^k k! \, \prod_{j=1}^{\ell(\lambda)} \frac{e^{\frac{t}{2} G(w_j+\lambda_j)}}{e^{\frac{t}{2} G(w_j)}}. &
\end{eqnarray*}

Using this, we conclude that
\begin{eqnarray}\label{Bsupexp}
\EE[Z^0(t,0)^k]  = 4^k k! \sum_{\substack{\lambda\vdash k\\ \lambda= 1^{m_1}2^{m_2}\cdots}} \frac{(-1)^{\ell(\lambda)}}{m_1! m_2!\cdots} \qquad\qquad\qquad\qquad\qquad\qquad\qquad\qquad\qquad\qquad\qquad\qquad&\\
\nonumber \times\,\int_{-\I \infty}^{\I \infty} \frac{dw_1}{2\pi \I} \cdots \int_{-\I \infty}^{\I \infty} \frac{dw_{\ell(\lambda)}}{2\pi \I}
\prod_{j=1}^{\ell(\lambda)} \frac{1}{4}\, \frac{\left(w_j+\frac{1}{2}\right)_{\lambda_j-1}}{(w_j)_{\lambda_j}}  \frac{e^{\frac{t}{2} G(w_j+\lambda_j)}}{e^{\frac{t}{2} G(w_j)}} \Pf\left[\frac{u_i-u_j}{u_i+u_j}\right]_{i,j=1}^{2\ell(\lambda)},&
\end{eqnarray}
with the $u$ variables are defined in terms of the $w$ variables in (\ref{uabove}).

\section{Summing the moment generating function}\label{mgfsec}

We seek now to derive concise formulas for the Laplace transforms of $Z(t,0)$ and $Z^a(t,0)$ for $a\geq 0$. Knowledge of the Laplace transform of a positive random variable uniquely identifies the random variable's distribution and can easily be inverted. Unfortunately, one readily checks that the moments of these random variables grow too rapidly (in $k$) to characterize their Laplace transforms and distributions\footnote{One shows that $k^{th}$ moment grows as $\exp(\const \cdot k^3)$ for some $\const>0$ and $k$ and $t$ large, see e.g. \cite{BC}, \cite{BCLyapunovpaper}.}. As such, we proceed formally (i.e. in a mathematically unjustified manner) and still attempt to use the moments to recover the Laplace transform. This means that we write
\begin{equation}\label{mgf}
\EE\left[e^{\zeta Z(t,0)}\right] = \sum_{k=0}^{\infty} \frac{\zeta^k}{k!} \EE[Z(t,0)^k].
\end{equation}
The left-hand side above is clearly finite for all $\zeta$ with $\Real(\zeta)<0$. On the other hand, since $\EE[Z(t,0)^k]$ grows super-exponentially (as can be seen from analysis of (\ref{Asupexp})) the right-hand side is a divergent series for all $\zeta$, and does not converge in any sense. Of course the source of this apparent inconsistency is in the interchange of expectations and infinite summation which is not mathematically justifiable -- hence the divergent right-hand side. Our derivation of the Laplace transform formula, therefore, involves ``summing'' this divergent moment generating function series. This is done via an illegal application of a Mellin-Barnes trick which says that for functions $g(z)$ with suitable decay and analyticity properties (cf. Lemma 3.2.13 of \cite{BorCor}) we have (using the fact that the residue of $\pi/\sin(-\pi z)$ at $z=n$ is $(-1)^{n+1}$) that
\begin{equation}\label{Mellin}
\sum_{n=1}^{\infty} g(n) \zeta^n = \int \frac{dz}{2\pi \I}  \frac{\pi}{\sin(-\pi z)} g(z) (-\zeta)^z.
\end{equation}
The negatively oriented $z$ contour should encircle the positive integers and no poles of $g(z)$, as well as satisfy certain properties related to the growth of $g(z)$ at infinity. Our illegal application neglects consideration of the growth condition (which clearly is not satisfied). However, the result after the application of this trick becomes perfectly well-defined and convergent. The result of this is given below and summarized in Section \ref{Laplacesec} of the introduction.

Despite this divergent summation, in the full-space case the final answer we arrive at for the Laplace transform of $Z(t,0)$ can be compared to the formula proved in \cite{ACQ,BCF} and it is in agreement. Presently we have no alternative proof of the analogous final result for the half-space case. It is quite reasonable to wonder why the formal manipulations of divergent series that we perform (at least in the full-space case) actually lead to the correct Laplace transform formula. This can be seen as a consequence of the fact that there exist $q$-deformed discrete versions of the polymers (both $q$-TASEP and ASEP work) for which every step below can be performed in an analogous, yet totally rigorous manner (see \cite{BorCor, BCS}). Thus, from this particular perspective, the manipulations below can be viewed as shadows of rigorous manipulations at a $q$-deformed level. Presently we do not have any analogous $q$-deformed results for the half-space polymer.

\subsection{Full-space polymer}
We proceed formally -- mimicking the steps in the mathematically rigorous work of \cite{BorCor, BCS}. We seek to sum the moment generating function (\ref{mgf}). Using (\ref{Asupexp}) we find that
\begin{equation*}
\frac{\zeta^k}{k!} \EE[Z(t,0)^k] = \sum_{L=1}^{\infty} \frac{1}{L!} \sum_{\lambda_1=1}^{\infty} \cdots \sum_{\lambda_L=1}^{\infty} \mathbf{1}_{\lambda_1+\cdots + \lambda_L = k} \int_{-\I \infty}^{\I \infty} \frac{dw_1}{2\pi \I} \cdots \int_{-\I\infty}^{\I \infty}  \frac{dw_L}{2\pi \I} I_{L}(\lambda;w;\zeta)
\end{equation*}
where
\begin{equation*}
I_{L}(\lambda;w;\zeta)= \det\left[\frac{1}{w_i+\lambda_i -w_j}\right]_{i,j=1}^{L}  \prod_{j=1}^{L} \zeta^{\lambda_j} \frac{e^{\frac{t}{2} G(w_j+\lambda_j)}}{e^{\frac{t}{2} G(w_j)}}.
\end{equation*}
Notice that we have replaced the summation over partitions with a summation over $L$ and then arbitrary (unordered) $\lambda_1,\ldots, \lambda_L$. This requires insertion of the term $\mathbf{1}_{\lambda_1+\cdots + \lambda_L = k}$ to enforce that the $\lambda_j$ sum to $k$, as well as division by
$$
\frac{L!}{m_1!m_2!\cdots}
$$
from the change between ordered and unordered $\lambda_j$'s. This factor exactly cancels the terms in (\ref{Asupexp}) and yields the above claimed formula.

By summing over $k$ (and performing infinite range reordering within our already divergent series) we find that
\begin{equation}\label{almostfreddet}
\EE\left[e^{\zeta Z(t,0)}\right] = 1+ \sum_{L=1}^{\infty} \frac{1}{L!} \sum_{\lambda_1=1}^{\infty} \cdots \sum_{\lambda_L=1}^{\infty}  \int_{-\I \infty}^{\I \infty}  \frac{dw_1}{2\pi \I} \cdots \int_{-\I\infty}^{\I \infty}  \frac{dw_L}{2\pi \I} I_{L}(\lambda;w;\zeta).
\end{equation}
The indicator function disappeared as a result of summing over $k$. The above equality is only formal since the right-hand side is still divergent at this point.

\subsubsection{The Mellin-Barnes trick}

Notice that for $w_1,\ldots, w_L$ on $\I \R$, the function $I_{L}(\lambda;w;\zeta)$ is analytic in each $\lambda_1,\ldots,\lambda_L$ as long as their real parts stay positive. We use this fact in order to apply the Mellin-Barnes trick explained above in (\ref{Mellin}) to replace the summations of $\lambda_1,\ldots, \lambda_L$ by an $L$-fold integral:
\begin{equation}\label{sumdiv}
\sum_{\lambda_1=1}^{\infty} \cdots \sum_{\lambda_L=1}^\infty I_{L}(\lambda_1,\ldots, \lambda_L ;w;\zeta) = \int_{\frac{3}{4}-\I \infty}^{\frac{3}{4} + \I\infty} \frac{d\lambda_1}{2\pi \I} \cdots \int_{\frac{3}{4}-\I \infty}^{\frac{3}{4} + \I\infty} \frac{d\lambda_1}{2\pi \I} \, \prod_{j=1}^{L} \frac{\pi}{\sin(-\pi \lambda_j)}\, I_{L}(\lambda;w;-\zeta).
\end{equation}
Note that $3/4$ in the integral limits can be replaced by any real number in $(1/2,1)$. The upper bound is from the requirement (in the Mellin-Barnes trick application) that the $\lambda$ contour contain the positive integers including 1, and the lower bound comes later when, after changing variables by subtracting $1/2$, we want to apply the integral representation for the Airy function in equation \ref{Airyintrep}. That application requires being along a positive real part contour, hence the lower bound.

Of course, this equality is only formal (in the $\zeta$ variable) in terms of a residue expansion and does not make mathematical sense due to the growth of the integrand near infinity.

It is, however, via this illegal application of the Mellin-Barnes trick that our divergent series has been ``summed'' and replaced by a convergent integral:
\begin{equation}\label{almostfreddet2}
\EE\left[e^{\zeta Z(t,0)}\right] = 1+ \sum_{L=1}^{\infty} \frac{1}{L!} \int_{-\I \infty}^{\I \infty}  \frac{dw_1}{2\pi \I} \cdots \int_{-\I\infty}^{\I \infty}  \frac{dw_L}{2\pi \I} \int_{\frac{3}{4}-\I \infty}^{\frac{3}{4} + \I\infty} \frac{d\lambda_1}{2\pi \I} \cdots \int_{\frac{3}{4}-\I \infty}^{\frac{3}{4} + \I\infty} \frac{d\lambda_1}{2\pi \I} \, \prod_{j=1}^{L} \frac{\pi}{\sin(-\pi \lambda_j)}\, I_{L}(\lambda;w;-\zeta).
\end{equation}
In fact, the above formula (though arrived at through mathematically non-rigorous means) is equivalent to the rigorously proved formulas in \cite{ACQ,BCF}. To put this in the same form (so as to compare) will require a few more manipulations -- all of which now are mathematically justifiable.

\subsubsection{Manipulating the formula}

We employ two identities (the first valid for $\Real(\zeta)<0$ and the second valid for $\Real(w+\lambda-w')>0$)
$$
\frac{\pi}{\sin(-\pi \lambda)}\, (-\zeta)^{\lambda} = \int_{-\infty}^{\infty}dr \frac{\zeta e^{\lambda r}}{e^{r}-\zeta}, \qquad \qquad \frac{1}{w+\lambda-w'} =  \int_0^{\infty} dx e^{-x(w+\lambda-w')},
$$
perform the change of variables $w_j+\lambda_j = s_j$, and use the linearity properties of the determinant to rewrite
\begin{equation}\label{freddet}
\EE\left[e^{\zeta Z(t,0)}\right] = 1+ \sum_{L=1}^{\infty} \frac{1}{L!} \int_{-\I \infty}^{\I \infty}  \frac{dw_1}{2\pi \I} \cdots \int_{-\I\infty}^{\I \infty}  \frac{dw_L}{2\pi \I} \det\left[K^1_{\zeta}(w_i,w_j)\right]_{i,j=1}^{L} = \det(I+K^1_{\zeta})_{L^2(\I\R,\frac{d\mu}{2\pi \I})}.
\end{equation}
Here $\det(I+K^1_{\zeta})_{L^2(\I\R,\frac{d\mu}{2\pi \I})}$ is the Fredholm determinant of the operator $K^1_{\zeta}$ which acts on functions in $L^2(\I\R,\frac{d\mu}{2\pi \I})$ (with $\frac{d\mu}{2\pi \I}$ representing the measure $\frac{dw}{2\pi \I}$ along $w\in \I \R$). The integral operator $K^1_{\zeta}$ is specified by its kernel
\begin{equation*}
K^1_{\zeta}(w,w') =\int_0^{\infty} dx \int_{-\infty}^{\infty} dr \int_{\frac{3}{4}-\I \infty}^{\frac{3}{4} + \I\infty} \frac{ds}{2\pi \I} \frac{\zeta e^{r(s-w)}}{e^r-\zeta} e^{-x(s-w')} \frac{e^{\frac{t}{2}G(s)}}{e^{\frac{t}{2}G(w)}}.
\end{equation*}
The change of variables $w \mapsto w+1/2$ and $s\mapsto s+1/2$ yields
\begin{equation*}
\EE\left[e^{\zeta Z(t,0)}\right] =  \det(I+K^2_{\zeta})_{L^2(-\frac{1}{2}+ \I\R,\frac{d\mu}{2\pi \I})}
\end{equation*}
with
\begin{equation*}
K^2_{\zeta}(w,w') =\int_0^{\infty} dx \int_{-\infty}^{\infty} dr \int_{\frac{1}{4}-\I \infty}^{\frac{1}{4} + \I\infty} \frac{ds}{2\pi \I} \frac{\zeta e^{r(s-w)}}{e^r-\zeta} e^{-x(s-w')} \frac{e^{\frac{t}{2}G(s+\frac{1}{2})}}{e^{\frac{t}{2}G(w+\frac{1}{2})}}.
\end{equation*}
We may factor the operator $K^2_{\zeta}(w,w') = \int_0^{\infty} dx A(w,x)B(x,w')$ where
\begin{eqnarray*}
A(w,x) &=& \int_{-\infty}^{\infty} dr \int_{\frac{1}{4}-\I \infty}^{\frac{1}{4} + \I\infty} \frac{ds}{2\pi \I} \frac{\zeta}{e^r-\zeta} e^{-xs} \frac{e^{\frac{t}{2}G(s+\frac{1}{2})}}{e^{\frac{t}{2}G(w+\frac{1}{2})}},\\
B(x,w) &=& e^{xw}.
\end{eqnarray*}
Determinants satisfy a fundamental identity (see \cite{Deift} for details and uses) that $\det(I+AB) = \det(I+BA)$. This enables us to rewrite
\begin{equation*}
\EE\left[e^{\zeta Z(t,0)}\right] =  \det(I+K^3_{\zeta})_{L^2([0,\infty))}
\end{equation*}
where
\begin{equation*}
K^3_{\zeta}(x,x') =\int_{-\infty}^{\infty} dr \frac{\zeta}{e^r-\zeta} \int_{-\frac{1}{2}-\I \infty}^{-\frac{1}{2} + \I\infty} \frac{dw}{2\pi \I} \int_{\frac{1}{4}-\I \infty}^{\frac{1}{4} + \I\infty} \frac{ds}{2\pi \I} \frac{F_x(s)}{F_{x'}(w)}
\end{equation*}
and
\begin{equation*}
F_x(s) = \exp\left\{ \frac{t}{2}\,\frac{s^3}{3} + s\left(r-x-\frac{t}{24}\right)\right\}.
\end{equation*}

Finally, we perform the change of variables
\begin{eqnarray}\label{changes1}
s \mapsto \left(\frac{t}{2}\right)^{-1/3} s,  \qquad w \mapsto \left(\frac{t}{2}\right)^{-1/3} w, \qquad x \mapsto \left(\frac{t}{2}\right)^{1/3} x,\qquad x' \mapsto \left(\frac{t}{2}\right)^{1/3}  x', \\
\nonumber r\mapsto \left(\frac{t}{2}\right)^{1/3} r + \frac{t}{24}, \qquad \log(-\zeta) \mapsto -\left(\frac{t}{2}\right)^{1/3} u + \frac{t}{24}.
\end{eqnarray}
Noting that
$$
\frac{\zeta}{e^r-\zeta} = \frac{-1}{1+ e^{r-\log(-\zeta)}}
$$
and taking into account the Jacobian contribution from the change of variables (and rescaling the contours), we arrive at
\begin{equation*}
\EE\left[e^{\zeta Z(t,0)}\right] =  \det(I-K^4_{\zeta})_{L^2([0,\infty))}
\end{equation*}
where
\begin{eqnarray*}
K^4_{\zeta}(x,x') =\int_{-\infty}^{\infty} d r \frac{1}{1 + \exp\left\{(\frac{t}{2})^{1/3}(r + u)\right\}}\qquad\qquad\qquad\qquad\qquad\qquad\qquad\qquad\qquad\qquad\qquad\qquad&\\
\times\,\int_{\frac{1}{4}-\I \infty}^{\frac{1}{4} + \I\infty} \frac{d s}{2\pi \I}  \exp\left\{\frac{s^3}{3}+ s( r-x)\right\}  \int_{-\frac{1}{2}-\I \infty}^{-\frac{1}{2} + \I\infty} \frac{d w}{2\pi \I}  \exp\left\{-\frac{ w^3}{3}- w( r-x')\right\},&
\end{eqnarray*}
and $u$ is related to $\zeta$ as in (\ref{changes1}).

For any $\delta>0$, the Airy function has the following contour integral representation
\begin{equation}\label{Airyintrep}
\Ai(v) = \int_{\delta -\I\infty}^{\delta+\I \infty} \frac{dz}{2\pi \I} \exp\left\{\frac{z^3}{3} - vz\right\},
\end{equation}
from which it follows that
\begin{equation*}
K^4_{\zeta}( x, x') =\int_{-\infty}^{\infty} d  r \frac{1}{1 + \exp\left\{(\frac{t}{2})^{1/3}( r + u)\right\}} \Ai( x-r)\, \Ai( x'- r).
\end{equation*}

At this point we have reached the final form of our formula, cf. Section \ref{fullspacelapsec} above.

\subsection{Half-space polymer}
We seek here to compute the Laplace transform of $Z^0(t,0)$. We do so via formally ``summing'' the divergent moment generating function (\ref{mgf}) in a similar way as worked for the full-space polymer. For convenience of our formulas, we replace $\zeta$ in that formula with $\zeta/4$. Using (\ref{Bsupexp}) we find that
\begin{equation*}
\frac{\zeta^k}{4^k k!} \EE[Z^0(t,0)^k] = \sum_{L=1}^{\infty} \frac{(-1)^{L}}{L!} \sum_{\lambda_1=1}^{\infty} \cdots \sum_{\lambda_L=1}^{\infty} \mathbf{1}_{\lambda_1+\cdots + \lambda_L = k} \int_{-\I \R}^{\I \R} \frac{dw_1}{2\pi \I} \cdots \int_{-\I\infty}^{\I \infty}  \frac{dw_L}{2\pi \I} I^0_{L}(\lambda;w;\zeta)
\end{equation*}
where
\begin{equation*}
I^0_{L}(\lambda;w;\zeta)=  \Pf\left[\frac{u_i-u_j}{u_i+u_j}\right]_{i,j=1}^{2L} \prod_{j=1}^{L} \frac{\zeta^{\lambda_j}}{4} \frac{1}{w_j-\tfrac{1}{2}} \frac{\tilde F(w_j+\lambda_j)}{\tilde F(w_j)}
\end{equation*}
where the $u$ variables are written in terms of the $w$ variables as in (\ref{uabove}) and
\begin{equation*}
\tilde F(w) =\frac{\Gamma(w-\frac{1}{2})}{\Gamma(w)} e^{\frac{t}{2} G(w)}
\end{equation*}
with $G(w)$ as in (\ref{Gdef}). In addition to the same considerations as in the full-space case, presently this step also relies upon the fact that the Pochhammer symbol $(x)_i$ can be expressed through Gamma functions as $\Gamma(x+i)/\Gamma(x)$.

By summing over $k$ (and performing infinite range reordering within our already divergent series) we find that
\begin{equation}\label{Balmostfreddet}
\EE\left[e^{\frac{\zeta}{4} Z^0(t,0)}\right] = 1+ \sum_{L=1}^{\infty} \frac{(-1)^L}{L!} \sum_{\lambda_1=1}^{\infty} \cdots \sum_{\lambda_L=1}^{\infty}  \int_{-\I \infty}^{\I \infty}  \frac{dw_1}{2\pi \I} \cdots \int_{-\I\infty}^{\I \infty}  \frac{dw_L}{2\pi \I} I^0_{L}(\lambda;w;\zeta).
\end{equation}
The indicator function disappeared as a result of summing over $k$ and the above formula is still only formal, the right-hand side is divergent.

\subsubsection{The Mellin-Barnes trick}

Notice that for $w_1,\ldots, w_L$ on $\I \R$, the function $I^0_{L}(\lambda;w;\zeta)$ is analytic in each $\lambda_1,\ldots,\lambda_L$ as long as their real part stays positive. We use this fact in order to apply the same Mellin-Barnes trick as in the full-space case in order to replace the summations of $\lambda_1,\ldots, \lambda_L$ by an $L$-fold integral:
\begin{equation}\label{Bsumdiv}
\sum_{\lambda_1=1}^{\infty} \cdots \sum_{\lambda_L=1}^\infty I^0_{L}(\lambda_1,\ldots, \lambda_L ;w;\zeta) = \int_{\frac{3}{4}-\I \infty}^{\frac{3}{4} + \I\infty} \frac{d\lambda_1}{2\pi \I} \cdots \int_{\frac{3}{4}-\I \infty}^{\frac{3}{4} + \I\infty} \frac{d\lambda_1}{2\pi \I} \, \prod_{j=1}^{L} \frac{\pi}{\sin(-\pi \lambda_j)}\, I^0_{L}(\lambda;w;-\zeta).
\end{equation}
As before this is only a formal (in the $\zeta$ variable) residue expansion due to growth of the integrand near infinity.

Also as before, this Mellin-Barnes trick has turned our divergent series into a convergent integral:
\begin{equation}\label{Balmostfreddet2}
\EE\left[e^{\frac{\zeta}{4} Z^0(t,0)}\right] = 1+ \sum_{L=1}^{\infty} \frac{(-1)^L}{L!} \int_{-\I \infty}^{\I \infty}  \frac{dw_1}{2\pi \I} \cdots \int_{-\I\infty}^{\I \infty}  \frac{dw_L}{2\pi \I} \int_{\frac{3}{4}-\I \infty}^{\frac{3}{4} + \I\infty} \frac{d\lambda_1}{2\pi \I} \cdots \int_{\frac{3}{4}-\I \infty}^{\frac{3}{4} + \I\infty} \frac{d\lambda_1}{2\pi \I} \, \prod_{j=1}^{L} \frac{\pi}{\sin(-\pi \lambda_j)}\, I^0_{L}(\lambda;w;-\zeta).
\end{equation}
Both the left-hand and right-hand sides of the above formula make sense now, though we have only shown it in a formal manner. As before, it is useful to perform a few additional manipulations to the right-hand side in order to prepare the formula for large time asymptotics.

\subsubsection{Manipulating the formula}
We employ essentially the same two identities as in the full-space (the first valid for $\Real(\zeta)<0$ and the second valid for $\Real(u_i+u_j)>0$)
$$
\frac{\pi}{\sin(-\pi \lambda)}\, (-\zeta)^{\lambda} = \int_{-\infty}^{\infty}dr \frac{\zeta e^{\lambda r}}{e^{r}-\zeta}, \qquad \qquad \frac{u_i-u_j}{u_i+u_j} = (u_i-u_j) \int_0^{\infty} dx e^{-x(u_i+u_j)},
$$
and perform the change of variables $w_j+\lambda_j = s_j$ to rewrite
\begin{eqnarray}\label{Bfreddet}
\nonumber \EE\left[e^{\frac{\zeta}{4} Z^0(t,0)}\right] &=& 1+ \sum_{L=1}^{\infty} \frac{(-1)^L}{L!} \int_{-\infty}^{\infty} dr_1 \cdots \int_{-\infty}^{\infty} dr_L \int_{-\I \infty}^{\I \infty}  \frac{dw_1}{2\pi \I} \cdots \int_{-\I\infty}^{\I \infty}  \frac{dw_L}{2\pi \I} \int_{\tfrac{3}{4}-\I \infty}^{\tfrac{3}{4}+\I \infty}  \frac{ds_1}{2\pi \I} \cdots \int_{\tfrac{3}{4}-\I\infty}^{\tfrac{3}{4}+\I \infty}  \frac{ds_L}{2\pi \I} \\
&&\times\,\prod_{j=1}^{L} \frac{ \zeta e^{r_j(s_j-w_j)}}{e^{r_j}-\zeta}\, \frac{1}{4} \frac{1}{w_j-\tfrac{1}{2}} \frac{\tilde F(s_j)}{\tilde F(w_j)}  \Pf\left[(u_i-u_j)\int_0^{\infty} dx e^{-x(u_i+u_j)}\right]_{i,j=1}^{2L}.
\end{eqnarray}
Performing the change of variables $w \mapsto w+1/2$ and $s\mapsto s+1/2$ as well as using properties of the Pfaffian to bring the $s$ and $w$ integrals inside the Pfaffian, we find that this can be rewritten as
\begin{equation}\label{preasybform}
\EE\left[e^{\frac{\zeta}{4} Z^0(t,0)}\right] =  1 + \sum_{L=1}^{\infty} \frac{(-1)^L}{L!}\int_{-\infty}^{\infty} dr_1\cdots \int_{-\infty}^{\infty} dr_L \prod_{L=1}^{k} \frac{\zeta}{e^{r_j}-\zeta} \Pf\left[ K^{(t)}(r_i,r_j)\right]_{i,j=1}^{L}
\end{equation}
where $K^{(t)}$ is a $2\times 2$ matrix with components
\begin{eqnarray*}
K^{(t)}_{11}(r,r') &=& \frac{1}{4} \int_0^{\infty} dx \int_{-\tfrac{1}{2} -\I\infty}^{-\tfrac{1}{2}+\I\infty} \frac{dw_1}{2\pi \I} \int_{-\tfrac{1}{2} -\I\infty}^{-\tfrac{1}{2}+\I\infty} \frac{dw_2}{2\pi \I} \frac{-w_1+w_2}{w_1 w_2} \frac{1}{F^{(t)}(w_1) F^{(t)}(w_2)} e^{-r w_1-r'w_2} e^{xw_1+xw_2},\\
K^{(t)}_{12}(r,r') &=& \frac{1}{4} \int_0^{\infty} dx \int_{-\tfrac{1}{2} -\I\infty}^{-\tfrac{1}{2}+\I\infty} \frac{dw}{2\pi \I} \int_{\tfrac{1}{4} -\I\infty}^{\tfrac{1}{4}+\I\infty} \frac{ds}{2\pi \I} \frac{-w-s}{w} \frac{F^{(t)}(s)}{F^{(t)}(w)} e^{-r w+r's} e^{xw-xs},\\
K^{(t)}_{22}(r,r') &=& \frac{1}{4} \int_0^{\infty} dx \int_{\tfrac{1}{4} -\I\infty}^{\tfrac{1}{4}+\I\infty} \frac{ds_1}{2\pi \I}\int_{\tfrac{1}{4} -\I\infty}^{\tfrac{1}{4}+\I\infty} \frac{ds_2}{2\pi \I} (s_1-s_2) F^{(t)}(s_1)F^{(t)}(s_2) e^{r s_1+r's_2} e^{-xs_1-xs_2},
\end{eqnarray*}
and $K^{(t)}_{21}(r,r') = - K^{(t)}_{12}(r',r)$. In the above we have used
\begin{equation*}
F^{(t)}(w)  =\frac{\Gamma(w)}{\Gamma(w+\frac{1}{2})} e^{\frac{t}{2} \left(\frac{w^3}{3} - \frac{w}{12}\right)}.
\end{equation*}

This is the final form of our Laplace transform formula, cf. Section \ref{halfspacelapsec} above.

\subsubsection{Long-time asymptotics}\label{RHSder}
We perform the necessary asymptotics in order to show the results claimed  in Section \ref{hsplim}. In that section we rewrote the half-space Laplace transform in a suggestive manner for taking the $t\to \infty$ asymptotics:
\begin{equation*}
\EE\left[  \exp\left\{-\frac{1}{4}\exp\left\{ \left(\frac{t}{2}\right)^{1/3}\left[\frac{\log Z^0(t,0)  + \frac{t}{24}}{\left(\frac{t}{2}\right)^{1/3}} \, -u \right]\right\}\right\} \right].
\end{equation*}
This amounts to taking $\zeta$ in (\ref{preasybform}) and replacing it by $\log(-\zeta) \mapsto -\left(\frac{t}{2}\right)^{1/3} u + \frac{t}{24}$. Without changing the value of the right-hand side of (\ref{preasybform}) we may perform the following change of variables as well
\begin{eqnarray*}
s \mapsto \left(\frac{t}{2}\right)^{-1/3} s,  \qquad w \mapsto \left(\frac{t}{2}\right)^{-1/3} w, \qquad r\mapsto \left(\frac{t}{2}\right)^{1/3} r + \frac{t}{24},\\
x \mapsto \left(\frac{t}{2}\right)^{1/3} x, \qquad x' \mapsto \left(\frac{t}{2}\right)^{1/3}  x'.
\end{eqnarray*}
We may (using Cauchy's theorem and the decay of the integrand) deform our rescaled contours back to their original locations without crossing any poles. Notice that under the change of variables,
$$
 F^{(t)}(w)e^{(r-x)w} \mapsto  \frac{\Gamma\left(\left(\frac{t}{2}\right)^{-1/3} w\right)}{\Gamma\left(\left(\frac{t}{2}\right)^{-1/3} w + \frac{1}{2} \right)} e^{\frac{w^3}{3}+(r-x)w}.
$$
The decay of the exponential term along the vertical contours of integration implies that it suffices to consider the termwise limits of these Gamma function factors as $t\to\infty$. The function above becomes behaves (for large $t$) like
$$
\frac{\alpha}{\left(\frac{t}{2}\right)^{-1/3} w} e^{\frac{w^3}{3}+(r-x)w}, \qquad \alpha=\frac{1}{\Gamma(\tfrac{1}{2})}=\pi^{-1/2}.
$$
We may combine this limiting behavior with the Jacobian contribution coming from the change of variables to find that

\begin{equation*}
\lim_{t\to \infty} \EE\left[e^{\frac{\zeta}{4} Z^0(t,0)}\right] = 1 + \sum_{L=1}^{\infty} \frac{(-1)^L}{L!} \int_{-\infty}^{-u} dr_1 \cdots \int_{-\infty}^{-u} dr_{L} \Pf\left[\tilde K^{\infty}(r_i,r_j)\right]_{i,j=1}^{L}
\end{equation*}
where $\tilde K^{\infty}$ is a $2\times 2$ matrix with components
\begin{eqnarray*}
\tilde K^{\infty}_{11}(r,r') &=&-\alpha^{-2} \frac{1}{4} \int_0^{\infty} dx \int_{-\tfrac{1}{2} -\I\infty}^{-\tfrac{1}{2}+\I\infty} \frac{dw_1}{2\pi \I} \int_{-\tfrac{1}{2} -\I\infty}^{-\tfrac{1}{2}+\I\infty} \frac{dw_2}{2\pi \I} (-w_1+w_2) e^{-\frac{w_1^3}{3} - w_1(r-x)} e^{-\frac{w_2^3}{3}-w_2(r'-x)},\\
\tilde K^{\infty}_{12}(r,r') &=& -\frac{1}{4} \int_0^{\infty} dx \int_{-\tfrac{1}{2} -\I\infty}^{-\tfrac{1}{2}+\I\infty} \frac{dw}{2\pi \I} \int_{\tfrac{1}{4} -\I\infty}^{\tfrac{1}{4}+\I\infty} \frac{ds}{2\pi \I} \frac{-w-s}{s} e^{-\frac{w^3}{3} - w(r-x)} e^{\frac{s^3}{3} + s(r'-x)},\\
\tilde K^{\infty}_{22}(r,r') &=& -\alpha^2 \frac{1}{4} \int_0^{\infty} dx \int_{\tfrac{1}{4} -\I\infty}^{\tfrac{1}{4}+\I\infty} \frac{ds_1}{2\pi \I}\int_{\tfrac{1}{4} -\I\infty}^{\tfrac{1}{4}+\I\infty} \frac{ds_2}{2\pi \I} \frac{s_1-s_2}{s_1s_2} e^{\frac{s_1^3}{3}+s_1(r-x)} e^{\frac{s_2^3}{3} +s_2(r'-x)},
\end{eqnarray*}
and $\tilde K^{\infty}_{21}(r,r') = - \tilde K^{\infty}_{12}(r',r)$. Note that $\alpha$ is defined above. However, using the expansion to matrix elements for the Pfaffian, it becomes clear that the Pfaffian does not depend on the value of $\alpha$. It will be convenient (in order to match this expression to the known form of the GSE Tracy-Widom distribution) to define a final kernel $K^{\infty}$ in which $\alpha=-1$ and we have taken $r\mapsto -r$. This leads to the kernel $K^{\infty}$ in (\ref{Kkern}) as well as the result recorded in (\ref{almostGSE}).

\subsubsection{Recognizing the GSE Tracy-Widom distribution}\label{RHSdist}

The GSE Tracy-Widom distribution $F_{{\rm GSE}}(u)$ can be expressed via the following formula (e.g., page 124 of \cite{DeiftG}):
\begin{equation*}
F_{{\rm GSE}}(u)  = \sqrt{\det(I-K^{(4)})_{L^2(u,\infty)}}
\end{equation*}
where the determinant above is given by the following infinite series
\begin{equation*}
\det(I-K^{(4)})_{L^2(u,\infty)} = 1+ \sum_{L=1}^{\infty} \frac{(-1)^L}{L!} \int_u^{\infty} dr_1\cdots \int_r^{\infty} dr_L \det\left(K^{(4)}(r_i,r_j)\right)_{i,j=1}^{L}
\end{equation*}
with $K^{(4)}(r,r')$ a $2\times 2$ matrix kernel with entries
\begin{eqnarray*}
K_{11}^{(4)} (r,r') &=& K_{22}^{(4)} (r',r) = \frac{1}{2} K_{\Ai}(r,r') - \frac{1}{4} \Ai(r)\, \int_{r'}^{\infty} \Ai(t) dt\\
K_{12}^{(4)} (r,r') &=& -\frac{1}{2} \frac{d}{dr'} K_{\Ai}(r,r') - \frac{1}{4} \Ai(r) \Ai(r')\\
K_{21}^{(4)} (r,r') &=& -\frac{1}{2} \int_{r}^{\infty} K_{\Ai}(t,r')dt + \frac{1}{4} \int_{r}^{\infty} \Ai(t) dt \, \int_{r'}^{\infty} \Ai(t)dt,
\end{eqnarray*}
and $K_{22}^{(4)} (r,r')= K_{11}^{(4)} (r',r)$. Here $K_{\Ai}$ is the Airy kernel
$$K_{\Ai}(r,r') = \int_{0}^{\infty} dx \Ai(x+r)\Ai(x+r').$$
We note two facts which can be easily checked and which will be useful very soon. The first is that
\begin{equation}\label{kdiff}
\left(\frac{d}{dr}+\frac{d}{dr'}\right)K_{\Ai}(r,r') = -\Ai(r)\Ai(r'),
\end{equation}
and the second is that the different elements of the kernel for $K^{(4)}$ are related via
\begin{equation}\label{k4rel}
K^{(4)}_{12}(r,r') = - \frac{d}{dr'} K^{(4)}_{11}(r,r'),\qquad K^{(4)}_{21}(r,r') = -\int_{r}^{\infty} K^{(4)}_{11}(t,r')dt.
\end{equation}

Recall that if $M$ is  a $2n\times 2n$ dimensional skew symmetric matrix, then $\sqrt{\det(M)} = \Pf(M)$. Letting
$$J = \left(
        \begin{array}{cc}
          0 & 1 \\
          -1 & 0 \\
        \end{array}
      \right)
$$
and noting that $\det(J) =1$ and that $K^{(4)}J$ is skew symmetric, let us perform the following formal manipulations:
$$
F_{{\rm GSE}}(u)= \sqrt{\det(I-K^{(4)})_{L^2(u,\infty)}} = \sqrt{\det(IJ-K^{(4)}J)_{L^2(u,\infty)}} = \Pf(IJ-K^{(4)}J)_{L^2(u,\infty)}.
$$
What is meant by this last expression is the following infinite series expansion
$$
1+ \sum_{L=1}^{\infty} \frac{(-1)^L}{L!} \int_u^{\infty}dr_1\cdots \int_u^{\infty}dr_L \Pf\left(K^{(4)}J(r_i,r_j) \right)_{i,j=1}^{L}.
$$
Though the above manipulation was just formal, it is possible to prove that
$$
F_{{\rm GSE}}(u)=1+ \sum_{L=1}^{\infty} \frac{(-1)^L}{L!} \int_u^{\infty}dr_1\cdots \int_u^{\infty}dr_L \Pf\left(K^{(4)}J(r_i,r_j) \right)_{i,j=1}^{L}
$$
by observing that the right-hand side is the inclusion exclusion formula for the distribution of the top particle of a point process with correlation functions given by the individual Pfaffian expressions. Since these correlation functions correspond with the edge scaling limit of the GSE (cf. \cite[Chapter 9.7]{Forrester}), we find that the right-hand side is exactly the limit distribution of the edge of the GSE -- that is to say, it is $F_{{\rm GSE}}(u)$.

At this point, compare the above formula for $F_{{\rm GSE}}(u)$ to the formula derived in (\ref{almostGSE}). We would like to show that the $2\times 2$ matrix kernel $K^{\infty}(r,r')$ in (\ref{almostGSE}) exactly match the kernel $K^{(4)}J(r,r')$ above. In other words, we seek to check the following four equalities:
\begin{eqnarray}\label{4equals}
K^{\infty}_{11}(r,r') = -K^{(4)}_{12}(r,r'),\qquad K^{\infty}_{12}(r,r') = K^{(4)}_{11}(r,r'),\\
\nonumber K^{\infty}_{21}(r,r') = -K^{(4)}_{22}(r,r'), \qquad K^{\infty}_{22}(r,r') = K^{(4)}_{21}(r,r').
\end{eqnarray}
The middle two are equivalent. We show these by explicitly checking the first equality, and then checking that the relations in (\ref{k4rel}) hold for $K^{\infty}$ in the same way as for $K^{(4)}$.

From the integral representation of the Airy function given in (\ref{Airyintrep}) we see that
\begin{eqnarray*}
K^{\infty}_{11}(r,r') &=& \frac{1}{4} \left(\frac{d}{dr'} - \frac{d}{dr}\right) K_{\Ai}(r,r')\\
             &=& \frac{1}{4} \left(2\frac{d}{dr'} K_{\Ai}(r,r') + \Ai(r)\Ai(r')\right)\\
             &=& -K^{(4)}_{12}(r,r').
\end{eqnarray*}
In going from the first line to the second line we utilized (\ref{kdiff}) and the second to third line was by the definition of $K^{(4)}_{12}(r,r')$ and the symmetry of $K_{\Ai}(r,r')$. It is readily confirmed from (\ref{Kkern}) that
$$
\frac{d}{dr'} K^{\infty}_{12}(r,r') = K^{\infty}_{11}(r,r'),\qquad -\int_r^{\infty} K^{\infty}_{12}(t,r')dt = K^{\infty}_{22}(r,r').
$$
These are the same relations as those satisfied by the elements of $K^{(4)}$, cf. (\ref{k4rel}).
This implies\footnote{Strictly speaking the above only shows that $\frac{d}{dr'}K^{\infty}_{12}(r,r') = \frac{d}{dr'} K^{(4)}_{11}(r,r')$; hence the two functions of $r'$ could (in principle) differ by a constant. However, it is easy to compare the two functions as $r'$ goes to infinity and see that they are equal there, hence the constant is zero and the equality holds.} that all of the equalities in (\ref{4equals}) are satisfied and hence proves the desired equality of (\ref{almostGSE}) with $F_{{\rm GSE}}(u)$.

\section{Proofs of residue expansion results}\label{proofsec}
In this section we use residue calculus (briefly reviewed in Section \ref{resreview} along with necessary notation for the proofs) to prove Proposition \ref{Akcprop} and provide a partial proof of Conjecture \ref{Bkcprop}, modulo \ref{claim:Res-are-zero}. A similar result as given in Proposition \ref{Akcprop} was proved in \cite[Proposition 3.2.1]{BorCor}, though by a considerably more involved inductive argument. The proofs given below are relatively elementary and in the full-space case (this type of proof was essentially given previously in \cite[Lemma 7.3]{BCPS1}). A similar structure to the proof exists in both cases so we include the full-space proof mostly to help motivate the considerably harder half-space case. The proof also readily extends to a suitable $q$-deformed versions of the results (in fact, \cite[Proposition 3.2.1]{BorCor} and \cite[Lemma 7.3]{BCPS1} were proved for the full-space case at this $q$-deformed level). To illustrate this, in the type $A$ (full-space) case we prove the $q$-deformed proposition and conclude via a limit as $q\to 1$ the desired Proposition \ref{Akcprop}. In the type $BC$ (half-space) case, we prove Conjecture \ref{Bkcprop}, modulo Claim \ref{claim:Res-are-zero}, directly (though the $q$-deformed argument works quite similarly).

\subsection{Residue calculus review and notation}\label{resreview}

As almost all of the formulas within this paper deal with complex contour integrals, it is no surprise that the Cauchy and residue theorems play prominent roles in our calculations. A reader entirely unfamiliar with these tools of complex analysis is referred to \cite[Chapters 2 and 4]{Alhfors}. We briefly recall some consequences of these two theorems (so as to also fix notation for later applications).

For the present, all curves we consider in $\C$ are positively oriented, simple, smooth and closed (e.g. circles) or a infinite straight lines. In many of our applications we consider horizontal translates of the infinite contour $\I \R$ and in order to justify contour deformations as below it is necessary to bound the contribution of the integral near $\pm \I\infty$ as negligible and use Cauchy's theorem on the finite approximation to the contours, before repairing them to be infinite (with a similar tail bound). As this is fairly standard, we do not dwell on it.


Given a deformation $D$ of a curve $\gamma$ to another curve $\gamma'$ and a function $f(z)$ which is analytic in a neighborhood the area swept out by the deformation $D$, Cauchy's theorem implies that
\begin{equation*}
\int_{\gamma}\frac{dz}{2\pi \I} f(z) = \int_{\gamma'}\frac{dz}{2\pi \I}   f(z).
\end{equation*}
This means that we can freely deform the contour $\gamma$ to $\gamma'$ without changing the value of the integral of $f(z)$, so long as no points of non-analyticity of $f(z)$ are crossed in the process.

For concreteness, let us now fix that $\gamma$ is a circle and $\gamma'$ is a second circle which is contained within $\gamma$. Consider a function $f(z)$ which is analytic in a neighborhood of the region between these circles, except at a finite collection of singularities (i.e. poles) $a_j$. The residue theorem along with the Cauchy theorem implies that
\begin{equation*}
\int_{\gamma}\frac{dz}{2\pi \I}  f(z)= \int_{\gamma'}\frac{dz}{2\pi \I}  f(z)+ \sum_j \Res{z=a_j}\,  f(z) ,
\end{equation*}
which is to say that we {\it expand} our integral into an integral on the smaller contour and residue terms from crossing the poles. Here $\Res{z=a}f(z)$ is the residue of $f(z)$ at $a$. In general, this is defined as the unique complex number $R$ for which $f(z)-R/(z-a)$ is the derivative of a single valued analytic function in some annulus $0<|z-a|<\delta$. For a simple pole $a$, the residue can be computed by
$$\Res{z=a}\, f(z)  = \lim_{z\to a} (z-a) f(z).$$
In particular, if $f(z) = g(z)/(z-a)$ where $g(z)$ is analytic at $a$, we find that
\begin{equation}\label{ressubeqn}
\Res{z=a}\, f(z)  = \Res{z=a}\, \frac{1}{z-a} \,\, \Sub{z=a}\, g(z) = g(a),
\end{equation}
where we have now also introduced the notation of {\it substitution} of a value into a function.

In our applications we will primarily be interested in functions of several complex variables $z_1,\ldots, z_k$. However, it will be easier to think of all but one of the variables of these functions as being fixed, and then apply the Cauchy / residue theorems to the remaining variable. In this way, we can justify various contour deformations as well as residue expansions. Let us consider the following:
\begin{example}\label{ex1}
Fix $q\in (0,1)$ and consider the integral
$$
\int_{\gamma_1} \frac{dz_1}{2\pi \I} \int_{\gamma_2} \frac{dz_2}{2\pi \I} \,\frac{z_1-z_2}{z_1-qz_2} f(z_1,z_2)
$$
where the $\gamma_2$ contour is along a small circle around $1$ and the $\gamma_1$ contour is another circle around $1$, with radius large enough so as to contain the image of $q$ times the $\gamma_2$ contour. Assume that for every $z_2\in \gamma_2$, the function $z_1\mapsto f(z_1,z_2)$ is analytic in the area between $\gamma_1$ and $\gamma_2$. Notice that for each fixed $z_2\in \gamma_2$ as we deform the $\gamma_1$ contour towards $\gamma_2$, the contour crosses a simple pole at $z_1= qz_2$ (coming from the denominator of the fraction in the integrand). To take this into account, we apply the residue theorem in the $z_1$ variable and find that
\begin{eqnarray*}
\int_{\gamma_1} \frac{dz_1}{2\pi \I} \int_{\gamma_2} \frac{dz_2}{2\pi \I} \frac{z_1-z_2}{z_1-qz_2} f(z_1,z_2) &=&   \int_{\gamma_2} \frac{dz_1}{2\pi \I} \int_{\gamma_2} \frac{dz_2}{2\pi \I} \frac{z_1-z_2}{z_1-qz_2} f(z_1,z_2) + \int_{\gamma_2} \frac{dz_2}{2\pi \I} \Res{z_1=qz_2} \left(\frac{z_1-z_2}{z_1-qz_2} f(z_1,z_2)\right)\\
&=& \int_{\gamma_2} \frac{dz_1}{2\pi \I} \int_{\gamma_2} \frac{dz_2}{2\pi \I} \frac{z_1-z_2}{z_1-qz_2} f(z_1,z_2) + \int_{\gamma_2} \frac{dz_2}{2\pi \I} (qz_2-z_2) f(qz_2,z_2).
\end{eqnarray*}
\end{example}

In what follows we will deal with slightly more involved residue calculations, so it is important to introduce reasonable notation. We will consider residues which occur according to multiplicative strings with parameter $q$ as well as additive strings with parameter $c$. Fix $k\geq 1$ and a partition $\lambda\vdash k$. Then, for a function $f(y_1,\ldots,y_k)$  we define $\Resq{\lambda}f(y_1,\ldots, y_k)$ to be the residue of $f(y_1,\ldots, y_k)$ at
\begin{eqnarray}\label{resqvalues}
\nonumber &y_{\lambda_{1}}=qy_{\lambda_{1}-1},\quad y_{\lambda_{1}-1}=qy_{\lambda_{1}-2}, \quad \ldots, \quad y_{2}=qy_{1}\\
&y_{\lambda_{1}+\lambda_2}=qy_{\lambda_1+\lambda_2-1},\quad  y_{\lambda_{1}+\lambda_2-1}=qy_{\lambda_1+\lambda_2-2}, \quad \ldots, \quad y_{\lambda_{1}+2}=qy_{\lambda_1+1}\\
\nonumber &\vdots &
\end{eqnarray}
with the output regarded as a function of the terminal variables $\big(y_1,y_{\lambda_1+1},\ldots, y_{\lambda_1+\cdots+\lambda_{\ell(\lambda)-1}-1}\big)$. We call each sequence of identifications of variables a {\it string}. As all poles which we encounter in what follows are simple, the above residue evaluation amounts to
\begin{equation}\label{amts}
\Resq{\lambda}f(y_1,\ldots, y_k) = \lim_{\substack{y_2\to q y_1\\ \cdots\\y_{\lambda_1}\to q y_{\lambda_1-1}}}\prod_{i=2}^{\lambda_1}(y_i-qy_{i-1}) \quad \lim_{\substack{y_{\lambda_1+2}\to q y_{\lambda_1+1}\\ \cdots \\ y_{\lambda_1+\lambda_2}\to q y_{\lambda_1+\lambda_2-1}}}\prod_{i=\lambda_1+2}^{\lambda_1+\lambda_2}(y_i-qy_{i-1})\quad \cdots \,\, f(y_1,\ldots, y_k).
\end{equation}

It is also convenient to define $\Subq{\lambda}f(y_1,\ldots, y_k)$ as the function of $\big(y_1,y_{\lambda_1+1},\ldots, y_{\lambda_1+\cdots+\lambda_{\ell(\lambda)-1}-1}\big)$, which is the result of substituting the relations of (\ref{resqvalues}) into $f(y_1,\ldots, y_k)$.

Under the change of variables $q\mapsto e^{-\e c}$ and $y\mapsto e^{-\e y}$, as $\e\to 0$ the above relations in (\ref{resqvalues}) become
\begin{eqnarray}\label{rescvalues}
\nonumber &y_{\lambda_1} = y_{\lambda_1 - 1} +c, \quad y_{\lambda_1-1} = y_{\lambda_1 - 2} +c, \quad \ldots, \quad y_{2} = y_{1} +c &\\
&y_{\lambda_1+\lambda_2} =  y_{\lambda_1+\lambda_2 - 1} +c, \quad y_{\lambda_1+\lambda_2-1} =  y_{\lambda_1+\lambda_2 - 2} +c  ,\quad \ldots, \quad y_{\lambda_1+2} =  y_{\lambda_1+1} +c&\\
\nonumber &\vdots &
\end{eqnarray}
In the same way as for the $q$ case, we define $\Resc{\lambda}$ and $\Subc{\lambda}$ as the residue and substitution operators with respect to the above strings of relations, which take a function $f(y_1,\ldots, y_k)$ and output a function of $\big(y_1,y_{\lambda_1+1},\ldots, y_{\lambda_1+\cdots+\lambda_{\ell(\lambda)-1}-1}\big)$.

\subsection{Proof of Proposition \ref{Akcprop} (type $A$ expansion)}\label{akproof}

Proposition \ref{Akcprop} can be proved by either taking a limit as $\e\to 0$ of Proposition \ref{321} (stated and proved below) with the change of variables $q\mapsto e^{-\e c}$, $z\mapsto e^{-\e z}$, $w\mapsto e^{-\e w}$, or it can be proved directly by mimicking the proof of Proposition \ref{321}. The two hypotheses of Proposition \ref{321} play analogous roles to those of Proposition \ref{Akcprop}. Additionally, the decay bound of the second hypothesis of Proposition \ref{Akcprop} is necessary because contours are infinite, and in order to apply the Cauchy / residue theorems it is necessary to have suitable decay near infinity. Otherwise, the proof goes through without any significant changes.

Thus, in this section we state and prove the above mentioned $q$-deformed analog of Proposition \ref{Akcprop}. The following proposition is a generalization of Proposition 3.2.1 of \cite{BorCor}, however the proof which we present here is considerably simpler than the inductive one presented therein. This proof also exposes the generality of the result as stated below.

\begin{proposition}\label{321}
Fix $k\geq 1$ and $q\in (0,1)$. Given a set of positively oriented, closed contours $\gamma_1,\ldots,\gamma_k$ and a function $F(z_1,\ldots, z_k)$ which satisfy
\begin{enumerate}
\item For all $1\leq A<B\leq k$, the interior of $\gamma_A$ contains the image of $\gamma_B$ multiplied by $q$;
\item For all $1\leq j\leq k$, there exist deformations $D_j$ of $\gamma_j$ to $\gamma_k$ so that for all $z_1,\ldots, z_{j-1},z_j,\ldots, z_k$ such with $z_i\in \gamma_i$ for $1\leq i<j$, and $z_i\in \gamma_k$ for $j<i\leq k$, the function $z_j\mapsto F(z_1,\ldots ,z_j,\ldots, z_k)$ is analytic in a neighborhood of the area swept out by the deformation $D_j$.
\end{enumerate}
Then we have the following residue expansion identity:
\begin{align}\label{qAK}
(-1)^k q^{\frac{k(k-1)}{2}} \oint_{\gamma_1} \frac{dz_1}{2\pi \I} \cdots \oint_{\gamma_k} \frac{dz_k}{2\pi \I} \prod_{1\leq A<B\leq k} \frac{z_A-z_B}{z_A-q z_B} F(z_1,\ldots, z_k) \qquad\qquad\qquad\qquad\qquad\qquad&\\
\nonumber = \sum_{\substack{\lambda\vdash k\\ \lambda= 1^{m_1}2^{m_2}\cdots}} \frac{(1-q)^{k}}{m_1! m_2!\cdots} \oint_{\gamma_k} \frac{dw_1}{2\pi \I}\cdots \oint_{\gamma_k} \frac{dw_{\ell(\lambda)}}{2\pi \I} \det\left[\frac{1}{w_i q^{\lambda_i} -w_j}\right]_{i,j=1}^{\ell(\lambda)}\qquad\qquad\qquad\qquad&\\
\nonumber \times\, \prod_{j=1}^{\ell(\lambda)} w_j^{\lambda_j} q^{\frac{\lambda_j(\lambda_j-1)}{2}} \, E^q(w_1,qw_1,\ldots, q^{\lambda_1-1}w_1,\ldots, w_{\lambda_{\ell(\lambda)}},q w_{\lambda_{\ell(\lambda)}},\ldots, q^{\lambda_{\ell(\lambda)}-1} w_{\lambda_{\ell(\lambda)}}),&
\end{align}
where
\begin{equation*}
E^q(z_1,\ldots, z_k) =   \sum_{\sigma\in S_k} \prod_{1\leq B<A\leq k} \frac{z_{\sigma(A)}-qz_{\sigma(B)}}{z_{\sigma(A)}-z_{\sigma(B)}} F(\sigma(\vec{z})).
\end{equation*}
\end{proposition}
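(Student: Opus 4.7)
The plan is to deform the contours $\gamma_1, \gamma_2, \ldots$ inward to the innermost contour $\gamma_k$ one variable at a time, collect the residues crossed at each stage, and then reorganize the resulting expansion by partition type. The only poles in the variable $z_j$ (with the remaining $z_i$ held on their contours) come from the denominator factors $z_A - q z_B$ of the prefactor $\prod_{A<B}(z_A - z_B)/(z_A - q z_B)$; hypothesis (1) places these poles in the region swept by the deformation, and hypothesis (2) rules out any pole coming from $F$ itself. So a first, essentially mechanical, step is to iterate the residue theorem in each of the $k$ variables.

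The first structural observation is that the residues must organize into \emph{strings}. Taking a residue at $z_A = q z_B$ in one variable merges the old denominator factor $z_A - q z_C$ into $q z_B - q z_C = q(z_B - z_C)$ and produces a new potentially singular factor $z_B - q z_C$, and so on; thus residues chain together. Encoding a full configuration of iterated residues by a set-partition of $\{1, \ldots, k\}$ together with a linear order on each block, a string $(i_1 < i_2 < \cdots < i_m)$ corresponds to imposing $z_{i_j} = q\, z_{i_{j+1}}$ for $1 \leq j < m$, i.e.\ $z_{i_j} = q^{m-j} z_{i_m}$. After all deformations, the left-hand side expands as a sum, indexed by such string-partitions of $\{1, \ldots, k\}$, of $\ell$-fold integrals over the terminal variables $z_{i_m}$ of each string, all on $\gamma_k$.

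Next, I would collapse the sum over string-partitions into a sum over unordered partitions $\lambda \vdash k$ (the multiset of string lengths) via the $S_k$-action on labels. Any string-partition with block sizes $\lambda$ is obtained from the canonical one — whose strings are the consecutive blocks $(1, \ldots, \lambda_1)$, $(\lambda_1+1, \ldots, \lambda_1+\lambda_2)$, and so on — by the action of some $\sigma \in S_k$ on the labels. Summing over $\sigma$ and reinstating the Vandermonde-like prefactor reassembles the integrand into the symmetrized object $E^q(\vec z)$; the combinatorial factor $1/(m_1! m_2! \cdots)$ accounts precisely for the over-count coming from permuting strings of equal length, which produce identical configurations.

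The main technical obstacle I anticipate is the final step: evaluating the iterated residue on a canonical string of length $\lambda_j$ rooted at $w_j$ and recognizing the cross-string remainder as a Cauchy determinant. Residues internal to each string telescope by a direct computation; each of the $\lambda_j - 1$ internal residues contributes a factor $(q-1) w_j q^{\text{shift}}$, and these assemble (together with the overall sign $(-1)^k q^{k(k-1)/2}$ on the left) into $(1-q)^k \prod_j w_j^{\lambda_j}\, q^{\lambda_j(\lambda_j-1)/2}$. The cross-string factors, once each $z$ variable is replaced by $q^\alpha w_i$, simplify by a further round of telescoping in one index and produce a product whose $S_k$-symmetrization is precisely the Cauchy determinant $\det\bigl[(w_i q^{\lambda_i} - w_j)^{-1}\bigr]_{i,j=1}^{\ell(\lambda)}$, the remaining numerator factors being absorbed into $E^q$. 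Tracking which Vandermonde pieces end up inside the determinant versus inside $E^q$, and keeping track of signs and powers of $q$, is the delicate bookkeeping; once this identification is in hand the argument concludes, and Proposition \ref{Akcprop} follows by the change of variables $q \mapsto e^{-\varepsilon c}$, $z \mapsto e^{-\varepsilon z}$, $w \mapsto e^{-\varepsilon w}$ as $\varepsilon \to 0$, noting that the infinite-contour decay hypothesis of Proposition \ref{Akcprop} replaces the role of hypothesis (2) in justifying the contour deformations, following the scheme of \cite[Lemma 7.3]{BCPS1} and \cite[Proposition 3.2.1]{BorCor}.
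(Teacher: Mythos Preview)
Your approach matches the paper's three-step scheme (identify string-type residual subspaces, bring them to canonical form via $S_k$, compute the residue on the canonical subspace), but there is one genuine step you have glossed over. When you pass from the sum over string-partitions to ``summing over $\sigma$'' to produce $E^q$, only a proper subset of $S_k$ actually arises from string-partitions of block sizes $\lambda$ --- namely those $\sigma$ for which $\sigma^{-1}$ is increasing along each block. The paper explicitly checks that the remaining $\sigma\in S_k$ contribute zero: for such $\sigma$ the relevant denominator factor $y_{\sigma(A)}-qy_{\sigma(B)}$ is absent, so there is no pole at the string relation and the residue vanishes. Without this observation the completion of the sum to all of $S_k$ is unjustified, and $E^q$ does not appear.

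Your final paragraph also slightly misattributes where the Cauchy determinant comes from. In the paper's organization one writes
\[
\prod_{A<B}\frac{y_{\sigma(A)}-y_{\sigma(B)}}{y_{\sigma(A)}-qy_{\sigma(B)}}
=\prod_{A\neq B}\frac{y_A-y_B}{y_A-qy_B}\;\cdot\;\prod_{B<A}\frac{y_{\sigma(A)}-qy_{\sigma(B)}}{y_{\sigma(A)}-y_{\sigma(B)}},
\]
so that all the poles sit in the $S_k$-invariant first factor. The residue $\Resq{\lambda}$ of that invariant factor is what produces $(1-q)^k\prod_j w_j^{\lambda_j}q^{\lambda_j(\lambda_j-1)/2}$ together with the Cauchy determinant $\det[(w_iq^{\lambda_i}-w_j)^{-1}]$ (this is Lemma~\ref{reslemma}); no symmetrization is involved at this stage. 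The second factor has no pole on the string subspace, so one merely substitutes the string relations into it, and it is the sum over $\sigma\in S_k$ of these substitutions that yields $E^q$. Organizing the bookkeeping this way --- residue of the invariant part, substitution into the remainder --- is what makes the ``delicate bookkeeping'' you anticipate essentially mechanical.
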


\begin{proof}[Proof of Proposition \ref{321}]
The meaning of the various terms and notation used below can be found in Section \ref{resreview}. First notice that for $k=1$ the result follows immediately. Hence, in what follows we assume $k\geq 2$.

We proceed sequentially and deform (using the deformation $D_{k-1}$ afforded from the hypotheses of the theorem) the $\gamma_{k-1}$ contour to $\gamma_k$, and then deform (using the deformation $D_{k-2}$ afforded from the hypotheses of the theorem) $\gamma_{k-2}$ contour to $\gamma_k$, and so on until all contours have been deformed to $\gamma_k$. However, due to the $z_A-qz_B$ terms in the denominator of the integrand, during the deformation of $\gamma_A$ we may encounter simple poles at the points $z_A=q z_B$, for $B>A$. The residue theorem (cf. Section \ref{resreview}) implies that the integral on the left-hand side of (\ref{qAK}) can be expanded into a summation over integrals of possibly few variables (all along $\gamma_k$) whose integrands correspond to the various possible residue subspaces coming from these poles.

Our proof splits into three basic steps. First, we identify the residual subspaces upon which our integral is expanded via residues. This brings us to equation (\ref{RHS2}). Second, we show that these subspaces can be brought to a canonical form via the action of some $\sigma\in S_k$. This enables us to simplify the summation over the residual subspaces to a summation over partitions $\lambda\vdash k$ and certain subsets of $\sigma\in S_k$. Inspection of those terms corresponding to $\sigma\in S_k$ not in these subsets shows that they have zero residue contribution and hence the summation can be completed to all of $S_k$. This brings us to equation (\ref{above3}). And third, we rewrite the function whose residue we are computing as the product of an $S_k$ invariant function (which contains all of the poles related to the residual subspace) and a remainder function. We use Lemma \ref{reslemma} to evaluate the residue of the $S_k$ invariant function and we identify the summation over $\sigma\in S_k$ of the substitution into the remainder function as exactly giving the $E^q$ function in the statement of the proposition we are presently proving.

\smallskip
\noindent {\bf Step 1:} It is worthwhile to start with an example. The case $k=2$ is effectively worked out in Example \ref{ex1} so we start with $k=3$. Figure \ref{qcontourshifting} accompanies the example and illustrates the deformations and locations of poles.

\begin{figure}
\begin{center}
\includegraphics[scale=.8]{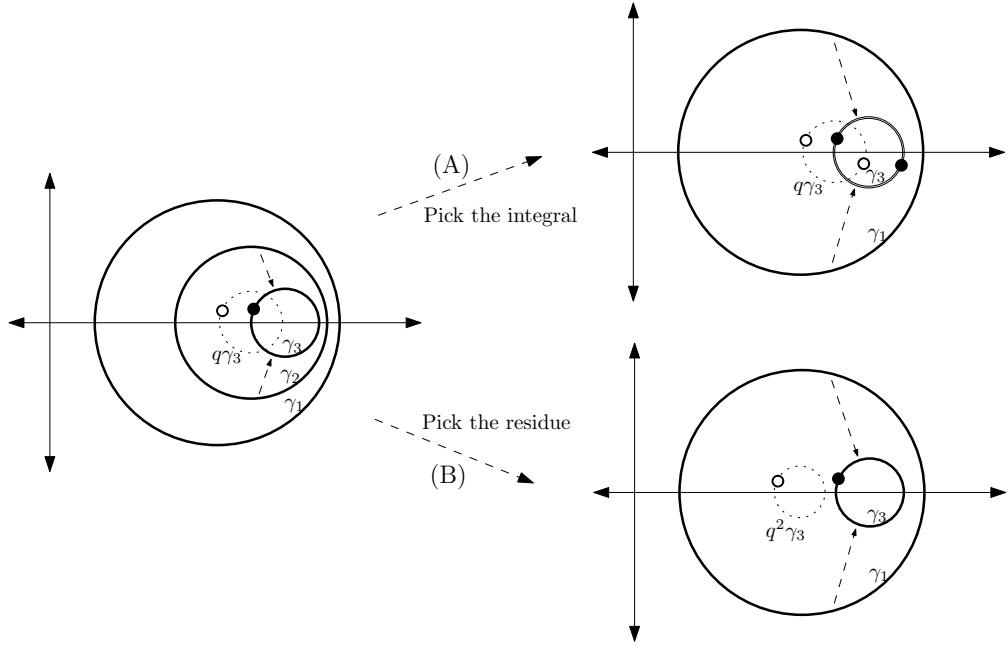}
\end{center}
\caption{The expansion of the $k=3$ nested contour integral as explained in Example \ref{ex2}. On the left-hand side, the $\gamma_2$ contour is deformed to the $\gamma_3$ contour and a pole is crossed along $q\gamma_3$ at the point $qz_3$ (here the point $z_3$ is drawn as a small filled disk and its location is on the solid line circle; and the point $q z_3$ is drawn as a small unfilled disk and its location is on the dotted line circle). On the upper right-hand side the effect of picking the integral is shown. The $\gamma_3$ contour is represented as a (doubled) solid line circle since both $z_3$ and $z_2$ are integrated along it (these correspond to the two filled disks). The unfilled disks are along $q\gamma_3$ (the dotted line circle) and represent $qz_3$ and $qz_2$. As the $\gamma_1$ contour is deformed to $\gamma_3$ these residues must be taken. On the lower right-hand side the effect of picking the initial residue at $z_2=qz_3$ is shown. As the $\gamma_1$ contour is deformed to $\gamma_3$ a pole is encountered along $q^2\gamma_3$ at the location $q^2 z_3$ (as before $z_3$ is the filled disk and $q^2 z_3$ is the unfilled disk).}\label{qcontourshifting}
\end{figure}

\begin{example}\label{ex2}
When $k=3$ the integrand on the left-hand side of (\ref{qAK}) contains the fractions
\begin{equation}\label{cross}
\frac{z_1-z_2}{z_1-qz_2}\, \frac{z_1-z_3}{z_1-qz_3}\,\frac{z_2-z_3}{z_2-qz_3},
\end{equation}
times the function $F(z_1,z_2,z_3)$ which (by the hypotheses of the proposition) does not have poles between $\gamma_j$ and $\gamma_k$ (for $j=1,2,3$). Thinking of $z_3$ as fixed along the contour $\gamma_3$, we begin by deforming the $\gamma_2$ contour to $\gamma_3$. As we proceed, we necessarily encounter a single simple pole at $z_2=qz_3$. The residue theorem implies that our initial integral equals the sum of (A) the integral where $z_2$ is along $\gamma_3$, and (B) the integral with only $z_1$ and $z_3$ remaining and integrand given by taking the residue of the initial integrand at $z_2=qz_3$.

Let us consider separately these two pieces. For (A) we now think of $z_2$ and $z_3$ as fixed along $\gamma_3$ and deform the $\gamma_1$ contour to $\gamma_3$, encountering two simple poles at $z_1=qz_2$ and $z_1=q z_3$. Thus (A) is expanded into a sum of three terms: the integral with $z_1, z_2$ and $z_3$ along $\gamma_3$; and the integral with only $z_2$ and $z_3$ remaining (along $\gamma_3$) and the residue taken at either $z_1=q z_2$ or $z_1=q z_3$.

For (B) we now think of $z_3$ as fixed along $\gamma_3$ and deform the $\gamma_1$ contour to $\gamma_3$, encountering a simple pole at $z_1=q^2 z_3$. This is because the residue of (\ref{cross}) at $z_2=q z_3$ equals
$$
\frac{z_1-z_3}{z_1-q^2 z_3}\,(qz_3-z_3).
$$
Thus (B) is expanded into a sum of two terms: the integral with $z_1$ and $z_3$ (along $\gamma_3$ and with the above expression in the integrand); and the integral with only $z_3$ remaining (along $\gamma_3$) and the residue of the above term taken at $z_1=q^2 z_3$.

Gathering the various terms in this expansion, we see that the residue subspaces we sum over are indexed by partitions of $k$ (here $k=3$) and take the form of geometric strings with the parameter $q$. For example, $\lambda = (1,1,1)$ corresponds to the term in the residue expansion in which all three variables $z_1,z_2$ and $z_3$ are still integrated, but along $\gamma_3$. On the other hand, $\lambda = (3)$ corresponds to the term in which the residue is taken at $z_1=qz_2=q^2z_3$ and the only variable which remains to be integrated along $\gamma_3$ is $z_3$. The partition $\lambda=(2,1)$ corresponds to the three remaining terms in the above expansion in which two integration variables remain. In general, $\ell(\lambda)$ corresponds to the number of variables which remain to be integrated in each term of the expansion.
\end{example}

Let us now turn to the general $k$ case. By the hypotheses of the theorem, the function $F$ has no poles which are encountered during contour deformations -- hence it plays no role in the residue analysis. As we deform sequentially the contours in the left-hand side of (\ref{qAK}) to $\gamma_k$ we find that the resulting terms in the residue expansion can be indexed by partitions $\lambda\vdash k$ along with a list (ordered set) of disjoint ordered subsets of $\{1,\ldots, k\}$ (whose union is all of $\{1,\ldots, k\}$)
\begin{eqnarray}\label{ijvar}
\nonumber &i_{1}< i_{2} < \cdots < i_{\lambda_1}&\\
&j_{1}< j_{2} < \cdots < j_{\lambda_2}&\\
\nonumber & \vdots&
\end{eqnarray}
Let us call such a list $I$. For a given partition $\lambda$ call $S(\lambda)$ the collection of all such lists $I$ corresponding to $\lambda$.
For $k=3$ and $\lambda=(2,1)$, in Example \ref{ex2} we saw there are three such lists which correspond with
$$S(\lambda) = \Big\{ \big\{1<2 , 3\big\}, \big\{1<3 ,2\big\}, \big\{2<3 , 1\big\}\Big\}.$$

For such a lists $I$, we write $\Res{I} f(z_1,\ldots, z_k)$ as the residue of the function $f$ at
\begin{eqnarray*}
&z_{i_{1}}=qz_{i_{2}}, z_{i_2} = q z_{i_3}, \quad \ldots, \quad z_{i_{\lambda_1-1}}=q z_{i_{\lambda_1}}&\\
&z_{j_{1}}=qz_{j_{2}}, z_{j_2} = q z_{j_3}, \quad \ldots, \quad z_{j_{\lambda_2-1}}=q z_{j_{\lambda_2}}&\\
&\vdots&
\end{eqnarray*}
and regard the output as a function of the terminal variables $(z_{i_{\lambda_1}}, z_{j_{\lambda_2}}, \ldots)$. There are $\ell(\lambda)$ such strings and consequently that many remaining variables (though we have only written the first two strings above).

With the above notation in place, we may write the expansion of the integral on the left-hand side of (\ref{qAK}) as
\begin{eqnarray}\label{RHS2}
{\rm LHS} (\ref{qAK}) = (-1)^k q^{\frac{k(k-1)}{2}} \sum_{\substack{\lambda\vdash k\\ \lambda= 1^{m_1}2^{m_2}\cdots}} \frac{1}{m_1! m_2!\cdots} \sum_{I\in S(\lambda)}\oint_{\gamma_k} \frac{dz_{i_{\lambda_1}}}{2\pi \I} \oint_{\gamma_k} \frac{dz_{j_{\lambda_{2}}}}{2\pi \I} \cdots \\
\nonumber \times\,\Res{I} \left(\prod_{1\leq A<B\leq k} \frac{z_A-z_B}{z_A-q z_B}  F(\vec{z}) \right).
\end{eqnarray}
The factor of $\frac{1}{m_1! m_2!\cdots}$ arose from multiple counting of terms in the residue expansion due to symmetries of $\lambda$. For example, for the partition $\lambda=(2,2,1)$, each $I\in S(\lambda)$, corresponds uniquely with a different $I'\in S(\lambda)$ in which the $i$ and $j$ variables in (\ref{ijvar}) are switched. Since these correspond with the same term in the residue expansion, this constitutes double counting and hence the sum should be divided by $2!$. The reason why our residual subspace expansion only corresponds with strings is because if we took a residue which was not of the form of a string, then for some $A\neq A'$ we would be evaluating the residue at $z_A=qz_B$ and $z_{A'}=qz_B$. However, the Vandermonde determinant in the numerator of our integrand would then necessarily evaluate to zero. Therefore, such possible non-string residues (coming from the denominator) in fact have zero contribution.

\noindent {\bf Step 2:} For each $I\in S(\lambda)$ relabel the $z$ variables as
\begin{eqnarray*}
&(z_{i_{1}}, z_{i_{2}}, \ldots, z_{i_{\lambda_1}}) \mapsto (y_{\lambda_1},y_{\lambda_1-1},\ldots, y_1)&\\
&(z_{j_{1}}, z_{j_{2}}, \ldots, z_{j_{\lambda_2}}) \mapsto (y_{\lambda_1+\lambda_2}, \ldots, y_{\lambda_1+1})&\\
&\vdots&
\end{eqnarray*}
and observe that there exists a unique permutation $\sigma\in S_k$ for which $(z_1,\ldots, z_k) = (y_{\sigma(1)},\ldots, y_{\sigma(k)})$. Let us also call $w_j = y_{\lambda_1+\cdots \lambda_{j-1}+1}$, for $1\leq j\leq \ell(\lambda)$.

We rewrite the term corresponding to a partition $\lambda$ in the right-hand side of (\ref{RHS2}) as
\begin{equation}\label{above3}
 \frac{1}{m_1! m_2!\cdots} \sum_{\sigma\in S_k} \oint_{\gamma_k} \frac{dw_1}{2\pi \I} \cdots \oint_{\gamma_k} \frac{dw_{\ell(\lambda)}}{2\pi \I}
\Resq{\lambda} \left(\prod_{1\leq A<B\leq k} \frac{y_{\sigma(A)}-y_{\sigma(B)}}{y_{\sigma(A)}-q y_{\sigma(B)}} F(\sigma(\vec{y})) \right),
\end{equation}
where $\Resq{\lambda}$ is defined in equation (\ref{resqvalues}). Note that the output of the residue operation is a function of the variables $(y_1,y_{\lambda_1+1},\ldots) = (w_1,w_2,\ldots)$.

One should observe that the above expression includes the summation over all $\sigma\in S_k$, and not just those which arise from an $I\in S(\lambda)$ as above. This, however, is explained by the fact that if $\sigma$ does not arise from some $I\in S(\lambda)$, then the residue necessarily evaluates to zero (hence adding these terms is allowed). To see this, observe that in the renumbering of variables discussed above, only permutations with
$$\sigma^{-1}(1)>\sigma^{-1}(2)\cdots>\sigma^{-1}(\lambda_1), \qquad \sigma^{-1}(\lambda_1+1)>\sigma^{-1}(\lambda_1+2)\cdots>\sigma^{-1}(\lambda_1+\lambda_2),\qquad \ldots$$
participated. Any other $\sigma$ must violate one of these strings of conditions. Consider, for example, some $\sigma$ with $\sigma(\lambda_1-1) < \sigma(\lambda_1)$. This implies that the term $y_{\lambda_1-1}-qy_{\lambda_1}$ shows up in the denominator of (\ref{above3}), as opposed to the term $y_{\lambda_1}-qy_{\lambda_1-1}$. Residues can be taken in any order, and if we first take the residue at $y_{\lambda_1} = qy_{\lambda_1-1}$, we find that the above denominator does not have a pole (nor do any other parts of (\ref{above3})) and hence the residue is zero. Similar reasoning works in general.

\noindent {\bf Step 3:} All that remains is to compute the residues in (\ref{above3}) and identify the result (after summing over all $\lambda\vdash k$) with the right-hand side of (\ref{qAK}) as necessary to prove the theorem.

It is convenient to rewrite the following expression as an $S_k$ invariant function, times a function which is analytic at the points in which the residue is being taken:
$$
\prod_{1\leq A<B\leq k} \frac{y_{\sigma(A)}-y_{\sigma(B)}}{y_{\sigma(A)}-q y_{\sigma(B)}}  = \prod_{1\leq A\neq B\leq k} \frac{y_A-y_B}{y_A-qy_B} \prod_{1\leq B<A\leq k} \frac{y_{\sigma(A)} - qy_{\sigma(B)}}{y_{\sigma(A)} - y_{\sigma(B)}}.
$$
Since it is only the $S_k$ invariant function which contains the poles with which we are concerned, it allows us to rewrite (\ref{above3}) as
\begin{eqnarray*}
\frac{1}{m_1! m_2!\cdots} \oint \frac{w_1}{2\pi \I} \cdots \oint \frac{w_{\ell(\lambda)}}{2\pi \I} \Resq{\lambda} \left( \prod_{1\leq A\neq B\leq k} \frac{y_A-y_B}{y_A-qy_B} \right)\\
\nonumber \times\,\Subq{\lambda} \left( \sum_{\sigma\in S_k} \prod_{1\leq B<A\leq k} \frac{y_{\sigma(A)} - qy_{\sigma(B)}}{y_{\sigma(A)} - y_{\sigma(B)}} F(\sigma(\vec{y})) \right),
\end{eqnarray*}
where $\Subq{\lambda}$ is the substitution operator defined in Section \ref{subsec}.
We use Lemma \ref{reslemma} to evaluate the above residue, and we easily identify the substitution on the second line with $E^q(w_1,q w_1,\ldots)$ as in the statement of the proposition.
Combining these two expressions and summing the resulting expression over $\lambda\vdash k$ we arrive at the desired residue expansion claimed in the statement of the proposition.
\end{proof}

\subsection{Proof of Conjecture \ref{Bkcprop} (type $BC$ expansion) modulo Claim \ref{claim:Res-are-zero} }\label{bcproof}


We turn now to Conjecture \ref{Bkcprop}. Note that when $k=1$ the proof follows immediately, hence we assume $k\geq 2$.

The argument follows a similar line to that of Proposition \ref{321}. The type $BC$ symmetry makes the problem more difficult and we must assume at some point a non-trivial residue cancelation in the form of Claim \ref{claim:Res-are-zero}. We do not prove this claim presently, though give a proof of a special case of this in Section \ref{sec:one-string}.  It would, therefore, be advised that the reader first reviews the proof of Proposition \ref{321}, before studying the below argument.

We proceed sequentially and shift the $\alpha_{k-1}+\I \R$ contour to $\alpha_k+\I \R$ which, since $\alpha_k=0$, is just $\I \R$. Then we repeat this procedure for $\alpha_{k-2}$ through $\alpha_1$. Despite these contours being of infinite length, the deformations are easily justified by virtue of the decay to zero of the function $F$ ensured by the hypotheses. We do, however, need to take care in keeping track of the residue contributions, which result from the simple poles when $z_A \pm z_B -c =0$. The residue theorem (see Section \ref{resreview}) implies that the integral on the left-hand side of (\ref{mainform}) can be expanded into a summation over integrals of possibly few variables (all along $\I \R$) whose integrands correspond to the various possible residue subspaces coming from these poles.

Our proof splits into the same three steps as the proof of Proposition \ref{321}, but with the role of $S_k$ replaced by $BC_k$ and with a more complicated collection of residual subspaces. In the first step, equation (\ref{RHS2}) from the proof of Proposition \ref{321} is replaced by equation (\ref{firstclaim}); in the second step equation (\ref{above3}) is replaced by equation (\ref{ref31prime}) -- showing this we appeal to Claim \ref{claim:Res-are-zero}; and in the third step, Lemma \ref{reslemma} is replaced by Lemma \ref{fourthlemma} and the $E^q$ function is replaced by $E^c$ as defined in the statement of Conjecture \ref{Bkcprop}.

\smallskip
\noindent {\bf Step 1:} We start with the example of $k=3$ so as to introduce the basic idea and notation for counting these subspaces. Figure \ref{contourshifting} accompanies the example and illustrates the deformations and locations of poles.

\begin{figure}
\begin{center}
\includegraphics[scale=.57]{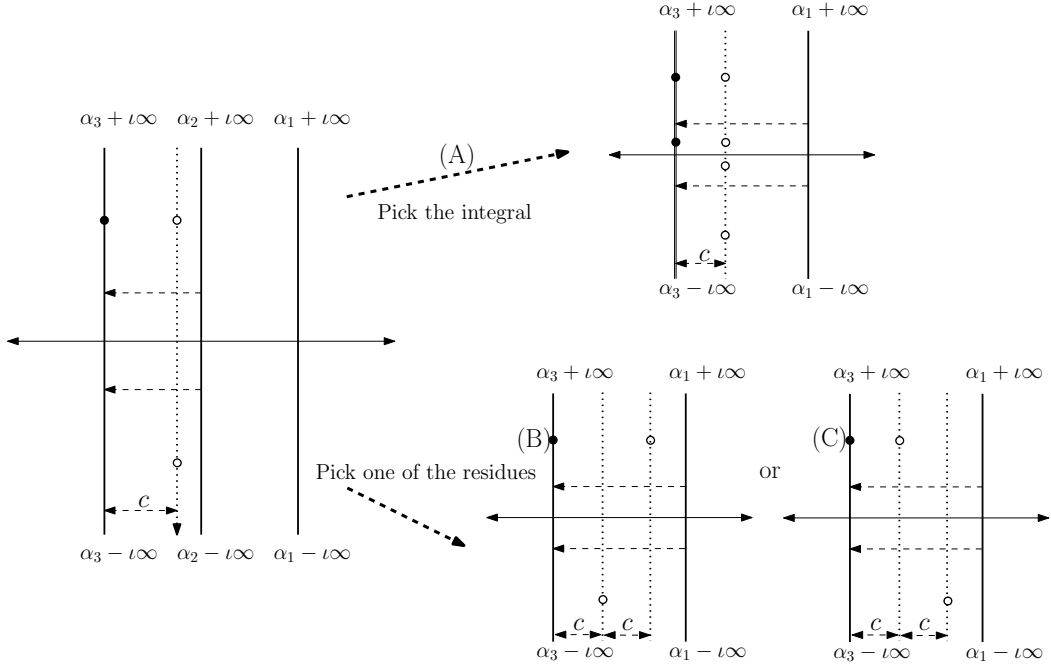}
\end{center}
\caption{The expansion of the $k=3$ nested contour integral as explained in Example \ref{ex3}. Recall that $\alpha_3=0$. On the left-hand side, the contour $\alpha_2+\I \R$ is deformed to $\I \R$ and two poles are crossed at $z_2=\pm z_3+c$ (here the point $z_3$ is drawn as a small filled disk and its location is on the solid line; the points $\pm z_3+c$ are drawn as small unfilled disks and their locations are on the dotted line). On the upper right-hand side the effect of picking the integral is shown (this is case (A) in the example). The $\I \R$ contour on which $z_2$ and $z_3$ are integrated is represented as a doubled solid line and $z_2$ and $z_3$ are represented by the two filled disks. The four unfilled disks are along $c+\I \R$ and represent $\pm z_3+c $ and $\pm z_2+c$. As the $z_1$ contour is deformed from $\alpha_1+\I \R$ to $\I \R$ these residues must be taken. On the lower right-hand side the effect of picking the initial residues at $z_2=z_3+c$ and then at $z_2=-z_3+c$ are shown (these are cases (B) and (C) from the example, respectively). In case (B) as the $z_1$ contour is deformed, poles are encountered at $z_1= z_3 + 2c$ and $z_1=-z_3 +c$, whereas in case (C) poles are encountered at $z_1=-z_3+2c$ and $z_1=z_3+c$.}\label{contourshifting}
\end{figure}

\begin{example}\label{ex3}
For $k=3$, the right-hand side of (\ref{mainform}) contains the fractions
\begin{equation}\label{asabovefrac}
\frac{z_1-z_2}{z_1-z_2-c}\, \frac{z_1+z_2}{z_1+z_2-c}\, \frac{z_1-z_3}{z_1-z_3-c}\, \frac{z_1+z_3}{z_1+z_3-c}\, \frac{z_2-z_3}{z_2-z_3-c}\, \frac{z_2+z_3}{z_2+z_3-c},
\end{equation}
times the function $F(z_1,z_2,z_3)$ which (by the hypotheses of the proposition) does not have poles in the area in which we are deforming contours. Thinking of $z_3$ as fixed along $\I\R$, we begin by deforming the $z_2$ contour, $\alpha_2+\I \R$, to $\I \R$. As we proceed, we necessarily encounter two simple poles at $z_2 = \pm z_3+c$. The residue theorem implies that our initial integral equals the sum of (A) the integral where $z_2$ is along $\I \R$, (B) the integral with only $z_1$ and $z_3$ remaining and integrand given by taking the residue of the initial integrand at $z_2=z_3+c$, and (C) the integral with only $z_1$ and $z_3$ remaining and integrand given by taking the residue of the initial integrand at $z_2=-z_3+c$.

Let us consider separately these three pieces. For (A) the fraction in the integrand remains the same as above in (\ref{asabovefrac}). Thus, as we deform the $z_1$ contour, $\alpha_1+\I \R$, to $\I \R$, we encounter four simple poles at $z_1= \pm z_2 + c$ and $z_1 = \pm z_3+c$. Thus (A) is expanded into five terms (one when none of the residues are picked, and four for each of the residue contributions).

For (B) the fraction in the resulting integrand (after taking the residue at $z_2=z_3+c$ and canceling terms) is given by
$$
c\,\frac{z_1-z_3}{z_1-z_3-2c}\, \frac{z_1+z_3+c}{z_1+z_3-c}\,\frac{2z_3+c}{2z_3}.
$$
Thus, as we deform the $z_1$ contour, $\alpha_1+\I \R$, to $\I \R$, we encounter two simple poles at $z_1= z_3+2c$ and $z_1 = -z_3+c$. Since one should think of $z_2=z_3+c$, the first pole is really at $z_1=z_2+c$. Thus (B) is expanded into three terms.

For (C) the fraction in the resulting integrand (after taking the residue at $z_2=-z_3+c$ and canceling terms) is given by
$$
c\,\frac{z_1+z_3}{z_1+z_3-2c}\, \frac{z_1-z_3+c}{z_1-z_3-c}\,\frac{-2z_3+c}{-2z_3}.
$$
Thus, as we deform the $z_1$ contour, $\alpha_1+\I \R$, to $\I \R$, we encounter two simple poles at $z_1= z_3+c$ and $z_1 = -z_3+2c$. Since one should think of $z_2=-z_3+c$, the second pole is really at $z_1=z_2+c$. Thus (C) is expanded into three terms.

We will be well served by introducing good notation to keep track of the various terms in such a residue expansion. Let us give the example of the notation for the $k=3$ case considered above, and then explain what it means in terms of residues and integrals.

The five terms in (A) correspond with the following diagrams
\begin{equation*}
\substack{\displaystyle 1\\ \\ \displaystyle 2\\ \\ \displaystyle 3} \qquad\qquad
\substack{\displaystyle 1\rightp 2 \\ \\ \displaystyle 3\qquad\,\,\,\,\\ \,\\}\qquad\qquad
\substack{\displaystyle 1\rightm 2 \\ \\ \displaystyle 3\qquad\,\,\,\,\\ \,\\}\qquad\qquad
\substack{\displaystyle 1\rightp 3 \\ \\ \displaystyle 2\qquad\,\,\,\,\\ \,\\}\qquad\qquad
\substack{\displaystyle 1\rightm 3 \\ \\ \displaystyle 2\qquad\,\,\,\,\\ \,\\},
\end{equation*}
while the three terms in (B) correspond with the following diagrams
\begin{equation*}
\substack{\displaystyle 2\rightp 3 \\ \\ \displaystyle 1\qquad\,\,\,\,\\ }\qquad\qquad
\substack{\displaystyle 1\rightp 2 \rightp 3 \\ \\ \\ }\qquad\qquad
\substack{\displaystyle 1\rightm 3 \leftp 2 \\ \\ \\ },
\end{equation*}
while the three terms in (C) correspond with the following diagrams
\begin{equation*}
\substack{\displaystyle 2\rightm 3 \\ \\ \displaystyle 1\qquad\,\,\,\,\\ }\qquad\qquad
\substack{\displaystyle 1\rightp 2 \rightm 3 \\ \\ \\ }\qquad\qquad
\substack{\displaystyle 1\rightp 3 \leftm 2 \\ \\ \\ }.
\end{equation*}

In the above notation, each separate line of a diagram represents a single integral (over the $z_j$ variable where $j$ is the largest number in the line) and each directed arrow from $i$ to $j$ (with $i<j$) represents a residue taken at $z_i =\pm z_j +c$, where the choice of plus or minus is indicated by the label above the arrow. For instance, when $k=4$, the term corresponding with the diagram
\begin{equation*}
\substack{\displaystyle 2\rightm 3\leftp 1 \\ \\ \displaystyle 4\,\,\,\,\,\,\,\qquad\qquad\\ }
\end{equation*}
in the residue expansion of the left-hand side of (\ref{mainform}) is
\begin{equation*}
\int_{-\I \infty}^{\I \infty} \frac{dz_3}{2\pi \I} \int_{-\I \infty}^{\I \infty} \frac{dz_4}{2\pi \I}\, \Res{\substack{z_2 = -z_3+c \\ z_1=z_3+c}} \left(\prod_{1\leq A<B\leq k} \frac{z_A-z_B}{z_A-z_B-c} \,\frac{z_A+z_B}{z_A+z_B-c} F(\vec{z})\right).
\end{equation*}
\end{example}

Let us now turn to the general $k$ case. Using the notation introduced in the above example, let us consider a general set of diagrams. Since it will be useful in proving the residue expansion formula which follows, consider $p\in \{0,\ldots, k\}$. (This number will correspond to the number of contours that have not been deformed to $\I \R$ yet.)
Then for each partition $\lambda\vdash k-p$ we consider diagrams of the form
\begin{equation*}
\substack{\displaystyle i_1\rightp i_2\rightp \cdots \rightp i_{\mu_1-1}\rightpm i_{\mu_1} \leftmp i_{\mu_1+1}\leftp \cdots \leftp i_{\lambda_1}\\ \\
\displaystyle j_1\rightp j_2\rightp \cdots \rightp j_{\mu_2-1}\rightpm j_{\mu_2} \leftmp j_{\mu_2+1}\leftp \cdots \leftp j_{\lambda_2}\\ \\ \displaystyle \vdots}
\end{equation*}
where the following conditions hold: The numbers on each line are pairwise disjoint and the union of the numbers over all lines equals the set $\{p+1,\ldots ,k\}$ (when $p=k$ this is the empty set); the arrows change from pointing right to pointing left only once (call the number at which this change occurs $i_{\mu_1},j_{\mu_2},\ldots$); the directed arrows are always from smaller numbers to larger numbers; within each line all arrows have $+$ above them except for the two (or possibly one if all arrows point in the same direction) surrounding $i_{\mu_1},j_{\mu_2},\ldots$ for which one sign is $+$ and the other is $-$ (this choice may change line to line). We write $S(\lambda)$ for the collection of all such diagrams for the partition $\lambda$, and we write $I$ for one such diagram. 

For such a list $I$ as above, we write $\Res{I} f(z_1,\ldots,z_k)$ as the residue of the function $f$ at
\begin{equation*}
\substack{\displaystyle z_{i_1}= z_{i_2}+c, z_{i_2} =  z_{i_3} +c, \ldots, z_{i_{\mu_1}-1} = \pm z_{i_{\mu_1}} +c,\qquad z_{i_{\lambda_1}} = \mp z_{i_{\lambda_1}-1} +c, \ldots  z_{i_{\mu_1}+1} =  z_{i_{\mu_1}} +c \\ \\
\displaystyle z_{j_1}= z_{j_2}+c, z_{j_2} = z_{j_3} +c, \ldots, z_{j_{\mu_2}-1} = \pm z_{j_{\mu_2}} +c,\qquad z_{j_{\lambda_2}} = \mp z_{j_{\lambda_2}-1} +c, \ldots  z_{j_{\mu_2}+1} =  z_{j_{\mu_2}} +c}
\end{equation*}
\begin{equation}\label{replacements}
\vdots
\end{equation}
and regard the output as a function of the variables $z_1,\ldots, z_p, z_{i_{\mu_1}},z_{j_{\mu_2}},\ldots$. Note that the $\pm$ choices for each line above correspond with the choices in the diagram $I$, as do the strings of equalities. We will thus identify a diagram $I$ with the string of equalities given above.

We claim that for any $p\in \{0,\ldots, k\}$
\begin{eqnarray}\label{firstclaim}
{\rm LHS} (\ref{mainform}) &=& \sum_{\substack{\lambda\vdash k-p\\ \lambda= 1^{m_1}2^{m_2}\cdots}} \frac{1}{m_1! m_2!\cdots} \frac{1}{2^{m_2+m_3+\cdots}} \sum_{I\in S(\lambda)} \int_{-\I \infty}^{\I \infty} \frac{dz_{i_{\mu_1}}}{2\pi \I} \int_{-\I \infty}^{\I \infty} \frac{dz_{j_{\mu_2}}}{2\pi \I} \cdots \\
\nonumber &&\times\, \int_{\alpha_1-\I\infty}^{\alpha_1+\I \infty} \frac{dz_1}{2\pi \I} \cdots \int_{\alpha_p-\I\infty}^{\alpha_p+\I \infty} \frac{dz_p}{2\pi \I}\, \Res{I} \left(\prod_{1\leq A<B\leq k} \frac{z_A-z_B}{z_A-z_B-c}\frac{z_A+z_B}{z_A+z_B-c}  F(\vec{z}) \right).
\end{eqnarray}
Notice that when $p=k$ the partition $\lambda$ is empty and hence this reduces back to the exact expression on the left-hand side of (\ref{mainform}). To see that this holds for general $p$ we proceed inductively (decreasing $p$). As we deform the $z_m$ contour to $\I \R$ we should consider whether there are poles at any of the points of the form $z_m = \pm z_B +c$ for $B>p$. We claim that the only points for which there are actually poles are those which correspond to augmenting the diagram $I$ into another diagram $I'$ (with $p$ replaced by $p-1$) by inserting $p$ on the left or right of any line of $I$, or into a new line. Any arrow direction and sign which leads to $I'$ satisfying the conditions listed earlier may arise. This fact follows from the Vandermonde determinant factor in the numerator, which necessarily evaluates to zero for any other augmentation of $I$ to $I'$. Another way to see this is to observe that diagrams which contain the following two snippets have zero residue:
$$
A\rightm B\rightpm C \qquad C\leftpm B\leftm A \qquad A\rightpm B\leftpm C.
$$
Any other augmentation than described above would necessary include one of the subdiagrams.

There are two symmetries which lead to the combinatorial factors above. The first (which also was present in the proof of Proposition \ref{321}) has to do with the symmetry of the parts of $\lambda$. In particular, there are $m_1!m_2!\cdots$ permutations which interchange lines in $I$ with the same length. The other symmetry is new to the type $BC$ diagrams $I$. For any line of $I$ which has at least one arrow (i.e. correspond with $\lambda_i\geq 2$) it is possible to reflect around the number $\mu_i$ and arrive at a new diagram $I'$. However, the residual subspace associated with $I$ and $I'$ is the same. Hence, for all $i$ with $\lambda_i$ we have a multiplicity of $2$. As there are $m_2+m_3+\cdots$ such $i$, we find the extra factor $2^{m_2+m_3+\cdots}$. Combining these two considerations yields the formula above.

\noindent {\bf Step 2:} Now focus on the above formula with $p=0$. We claim that for any $\lambda=1^{m_1}2^{m_2}\cdots \vdash k$
\begin{eqnarray}\label{ref31prime}
&&\sum_{I\in S(\lambda)} \, \int_{-\I \infty}^{\I \infty} \frac{dz_{i_{\mu_1}}}{2\pi \I} \int_{-\I \infty}^{\I \infty} \frac{dz_{j_{\mu_2}}}{2\pi \I} \cdots \Res{I} \left(\prod_{1\leq A<B\leq k} \frac{z_A-z_B}{z_A-z_B-c}\frac{z_A+z_B}{z_A+z_B-c}  F(\vec{z})) \right) \\
\nonumber &&=\frac{1}{2^{m_1}} \sum_{\sigma\in BC_k} \int_{-\I \infty}^{\I \infty} \frac{dw_{1}}{2\pi \I} \int_{-\I \infty}^{\I \infty} \frac{dw_{2}}{2\pi \I} \cdots \Resc{\lambda} \left(\prod_{1\leq A<B\leq k} \frac{y_{\sigma(A)}-y_{\sigma(B)}}{y_{\sigma(A)}-y_{\sigma(B)}-c}\frac{y_{\sigma(A)}+y_{\sigma(B)}}{y_{\sigma(A)}+y_{\sigma(B)}-c}  F(\sigma(\vec{y})) \right),
\end{eqnarray}
where $w_i = y_{\lambda_1+\cdots +\lambda_{i-1}+1}$.
This can be seen in two pieces (in the first piece we appeal to the as of yet unproved Claim \ref{claim:Res-are-zero}). First, we consider $I\in S(\lambda)$ and replace the $z$ variables as follows. Start with the first line of $I$ (the other lines work similarly). There are two possible forms that this first line may take. The first form is
\begin{equation*}
i_1\rightp i_2\rightp \cdots \rightp i_{\mu_1-1}\rightp i_{\mu_1} \leftm i_{\mu_1+1}\leftp \cdots \leftp i_{\lambda_1},
\end{equation*}
where the arrows neighboring $i_{\mu_1}$ have (respectively) plus and then minus signs. The second form is
\begin{equation*}
i_1\rightp i_2\rightp \cdots \rightp i_{\mu_1-1}\rightm i_{\mu_1} \leftp i_{\mu_1+1}\leftp \cdots \leftp i_{\lambda_1},
\end{equation*}
where the arrows neighboring $i_{\mu_1}$ have (respectively) minus and then plus signs.

If the first line of $I$ is in the first form, then make the following change of variables
$$(z_{i_1},\ldots, z_{i_{\mu_1}-1},z_{i_{\mu_1}},z_{i_{\mu_1}+1},\ldots, z_{i_{\lambda_1}}) \mapsto (y_{\lambda_1}, \ldots, y_{\lambda_1 - \mu_1 +2},y_{\lambda_1 - \mu_1 +1},-y_{\lambda_1 - \mu_1},\ldots, -y_1);$$
whereas if the first line of $I$ is in the second form, then make the following change of variables
$$(z_{i_1},\ldots, z_{i_{\mu_1}-1},z_{i_{\mu_1}},z_{i_{\mu_1}+1},\ldots, z_{i_{\lambda_1}}) \mapsto (y_{\lambda_1}, \ldots, y_{\lambda_1 - \mu_1 +2},-y_{\lambda_1 - \mu_1 +1},-y_{\lambda_1 - \mu_1},\ldots, -y_1).$$

Similarly for the other lines of $I$, make the  analogous change of variables, where for the $i^{th}$ line, the participating $z$-variables are replaced by the variables $y_{\lambda_1+\cdots +\lambda_{i-1}+1}, \ldots, y_{\lambda_1+\cdots +\lambda_i}$.

The above procedure almost uniquely specifies how to change from $z$ to $y$ variables. The exception is for singleton lines. For example, if the first line of $I$ is a singleton $z_{i_1}$, then we may either make the change of variables $z_{i_1} \mapsto y_{1}$ or $z_{i_1}\mapsto -y_1$.

One quickly sees that this change of variables procedure turns $\Resc{I}$ into $\Resc{\lambda}$, and that for a given $I$, there are exactly $2^{m_1}$ different changes of variables which accomplishes this (here $m_1$ is the number of singletons in $I$).

Note that since the variables $z_{i_{\mu_1}},z_{j_{\mu_2}},\ldots$ were being integrated along $\I \R$, the variables $w_i = y_{\lambda_1+\cdots +\lambda_{i-1}+1}$ may end up being integrated along a shift of $\I \R$.
We need to deform these contours back to $\I \R$. However, during this deformation we meet poles of the integrand.

\begin{claim}
\label{claim:Res-are-zero}
For any $\lambda$ the total sum over $S(\lambda)$ of all residues which we need to encounter during this deformation equals $0$.
\end{claim}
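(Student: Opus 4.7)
The plan is to prove Claim \ref{claim:Res-are-zero} by setting up an involution on the set of pairs $(I, \mathcal{P})$, where $I \in S(\lambda)$ and $\mathcal{P}$ is a sequence of poles crossed during the deformation of the shifted $w_i$ contours back to $\I\R$, such that paired contributions cancel.

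First I would make the contour shifts explicit. Following the change of variables of Step 2, the base variable $w_i = y_{\lambda_1 + \cdots + \lambda_{i-1}+1}$ corresponding to the $i$-th line of $I$ is integrated along $-\nu_i c + \I\R$, where $\nu_i = \lambda_i - \mu_i$ (in the first form) or a closely related quantity (in the second form). The poles that may be crossed when deforming each $w_i$ back to $\I\R$ come from denominators of the form $y_A - y_B - c$ and $y_A + y_B - c$ which are not cancelled by the numerator Vandermonde factors during the evaluation of $\Resc{\lambda}$. After substituting the string relations $y_{a+1} = y_a + c$ within each string, these surviving denominators become linear in $w_1, \ldots, w_{\ell(\lambda)}$ and produce simple poles at loci of the form $w_i = \pm w_j + mc$ (and in the one-string case, loci of the form $2w_i = mc$).

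The key step would be a combinatorial interpretation of each pole: taking a residue at such a pole effectively \emph{merges} two lines of $I$ (or reflects a single line through a wall interaction) and produces a modified diagram $I^\star$ corresponding to a coarser partition $\lambda^\star$ of $k$. I would then observe that each such $I^\star$ admits multiple preimages in our $(I, \mathcal{P})$ set: one can realize the same merged structure by joining the two lines at their ``left'' end or at their ``right'' end, with independent freedom in choosing $\mu$ on both the pre-merge and post-merge diagrams. The involution I would construct swaps two such preimages that differ only in an attachment choice, and the claim to verify is that the Jacobian signs from the reflections $y \mapsto -y$ built into the two change-of-variables rules, combined with the residue of the canonical-form integrand evaluated at the relevant pole, differ by an overall factor of $-1$ between the two involution partners. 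This sign flip then makes the sum over $S(\lambda)$ telescope to zero.

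The main obstacle I expect is the full sign bookkeeping and verifying that every pole in $\mathcal{P}$ indeed admits a canonical involution partner. Several technical subtleties must be addressed: singleton lines (of length $\lambda_i = 1$) have no arrow structure, so involutions involving them require special care; iterated residues from crossing multiple poles produce chains of merged lines whose combinatorics is more delicate; and the $m_1! m_2! \cdots$ and $2^{m_2 + m_3 + \cdots}$ counting factors must be tracked to confirm that no diagram receives the wrong multiplicity. I expect the one-string case handled in Section \ref{sec:one-string} to serve both as a sanity check and as the base case for an induction on $\ell(\lambda)$, where the inductive step reduces a two-string interaction to a one-string problem via the involution.
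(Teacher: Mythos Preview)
First, a crucial point of context: the paper itself does \emph{not} prove Claim~\ref{claim:Res-are-zero}. It is explicitly left open, and only the special case $\lambda=(k)$ (one string, $\ell(\lambda)=1$) is established in Section~\ref{sec:one-string}. So there is no complete ``paper's proof'' to compare against; your proposal is an outline for attacking an open problem.

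That said, your sketch diverges substantially from the mechanism the paper uses even in the one-string case, and the paper's analysis there gives concrete reason to doubt that a pairwise involution of the kind you describe can succeed. In Section~\ref{sec:one-string}, the cancellation at a pole $z^*=nc/2$ is \emph{not} achieved by pairing two diagrams; rather, one identifies $L(s)$ ``pivot pairs'' in a diagram $s$, defines $L(s)$ independent ``block swaps'', and shows that the sum over the resulting group of $2^{L(s)}$ (or $2^{L(s)-1}$) diagrams factors as a product of differences $\tilde R_i(nc/2+\epsilon_k)-\tilde R_i(nc/2-\epsilon_k)$, each contributing a zero that kills one order of the pole. The pole order can be as high as $L(s)$, so a single sign-flip involution cannot cancel it: one genuinely needs $L(s)$ commuting involutions acting simultaneously. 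Your proposal to pair $(I,\mathcal P)$ with a single partner differing in one ``attachment choice'' would at best cancel a simple pole, not a pole of order $L$.

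There is also a structural concern with your ``merging'' interpretation. In the one-string case there is only one $w$-variable, so the poles encountered are at points $2w_1=mc$ (self-interactions), not at $w_i=\pm w_j+mc$; there are no lines to merge, yet nontrivial residues still appear and must cancel. Your plan treats the one-string case as a base case to be borrowed from the paper and then proposes that multi-string interactions reduce to one-string problems via merging. But even granting the base case, the inductive step is not made precise: taking a residue at $w_i=\pm w_j+mc$ does not obviously produce an integrand that is the $\Resc{\lambda^\star}$ of the canonical form for a coarser $\lambda^\star$, because the residual cross-terms between the merged string and the remaining strings need not match those of any single diagram in $S(\lambda^\star)$. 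The paper's block-swap argument suggests that the correct organizing principle is groups of diagrams related by local reflections within $S(\lambda)$ itself, not a map to $S(\lambda^\star)$. If you want to pursue this, I would recommend first testing your involution on $\lambda=(k)$ with $L(s)\ge 2$ (e.g.\ the $k=8$ example in Section~\ref{sec:one-string}) to see whether a two-element pairing can reproduce the four-term cancellation the paper exhibits there.
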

We do not give a complete proof of Claim \ref{claim:Res-are-zero} in this paper. In Section \ref{sec:one-string} we give a proof of a partial result towards this claim: We prove that all residues which come from elements of $S(\lambda)$ consisting of one string give zero contribution (that is $\lambda=(k)$).

\begin{remark}\label{aexception}
When $a\neq 0$ (recall $a$ is the parameter controlling the behavior of the half-space polymer at the origin), there exist poles coming from the term $\frac{z}{z+a}$ which may block this deformation. We do not attempt to keep track of this possibly more complicated residue expansion presently.
\end{remark}

For a given change of variables from $z$'s to $y$'s we can associate an element $\sigma\in BC_k$ so that $(z_1,\ldots, z_k) \mapsto (y_{\sigma(1)},\ldots, y_{\sigma(k)})$. Here an element $\sigma\in BC_k$ acts on a vector $\vec{y}$ in the manner described in Section \ref{Asymsec}. Thus, we have almost shown (\ref{ref31prime}), except that on the right-hand side of that equation, the summation is presently only over those $\sigma$ which arise in the manner described above.

The second piece to showing the claimed equality in (\ref{ref31prime}) is to recognize those $\sigma$ which do not arise from the above replacement scheme, correspond to terms in the right-hand side of (\ref{ref31prime}) which have zero residue. Hence the partial summation over $BC_k$ can be completed to the full sum, as claimed. This second piece follows in the same manner as in step 2 of the proof of Proposition \ref{321} by observing that any $\sigma$ which does not arise as above, leads to the evaluation of a residue at a point with no pole.

Summarizing what we have learned so far:
\begin{eqnarray}\label{31prime}
{\rm LHS} (\ref{mainform}) &=& \sum_{\substack{\lambda\vdash k\\ \lambda= 1^{m_1}2^{m_2}\cdots}} \frac{1}{m_1! m_2!\cdots} \frac{1}{2^{\ell(\lambda)}} \sum_{\sigma\in BC_k} \int_{-\I \infty}^{\I \infty} \frac{dw_{1}}{2\pi \I} \cdots \int_{-\I \infty}^{\I \infty} \frac{dw_{\ell(\lambda)}}{2\pi \I} \\
\nonumber && \times\, \Resc{\lambda} \left(\prod_{1\leq A<B\leq k} \frac{y_{\sigma(A)}-y_{\sigma(B)}}{y_{\sigma(A)}-y_{\sigma(B)}-c}\frac{y_{\sigma(A)}+y_{\sigma(B)}}{y_{\sigma(A)}+y_{\sigma(B)}-c}  F(\sigma(\vec{y})) \right).
\end{eqnarray}

\noindent {\bf Step 3:} It remains to compute the residues and identify the result with the right-hand side of (\ref{mainform}). It is convenient to rewrite the following expression as:
\begin{eqnarray*}
\prod_{1\leq A<B\leq k} \frac{y_{\sigma(A)}-y_{\sigma(B)}}{y_{\sigma(A)}-y_{\sigma(B)}-c}\frac{y_{\sigma(A)}+y_{\sigma(B)}}{y_{\sigma(A)}+y_{\sigma(B)}-c}\qquad\qquad\qquad\qquad\qquad\qquad\qquad\qquad &\\
= \prod_{1\leq A< B \leq k} \frac{y_{A}-y_{B}}{y_{A}-y_{B}-c}\frac{y_{A}+y_{B}}{y_{A}+y_{B}-c} \frac{y_{A}-y_{B}}{y_{A}-y_{B}+c}\frac{y_{A}+y_{B}}{y_{A}+y_{B}+c}\qquad\qquad&\\
\times\,\qquad \prod_{1\leq B<A\leq k} \frac{y_{\sigma(A)}-y_{\sigma(B)}-c}{y_{\sigma(A)}-y_{\sigma(B)}}\frac{y_{\sigma(A)}+y_{\sigma(B)}-c}{y_{\sigma(A)}+y_{\sigma(B)}}.\qquad\qquad\qquad&
\end{eqnarray*}
On the right hand side above there are two products. The first is easily seen to be invariant under the action of $BC_k$, while the second product has no residue at the poles we are considering. These considerations allow us to rewrite (\ref{31prime}) as
\begin{eqnarray}\label{31prime2}
{\rm LHS} (\ref{mainform}) &=& \sum_{\substack{\lambda\vdash k\\ \lambda= 1^{m_1}2^{m_2}\cdots}} \frac{1}{m_1! m_2!\cdots} \frac{1}{2^{\ell(\lambda)}} \int_{-\I \infty}^{\I \infty} \frac{dw_{1}}{2\pi \I} \cdots \int_{-\I \infty}^{\I \infty} \frac{dw_{\ell(\lambda)}}{2\pi \I} \\
&&\nonumber\times\, \Resc{\lambda} \left(\prod_{1\leq A< B \leq k} \frac{y_{A}-y_{B}}{y_{A}-y_{B}-c}\frac{y_{A}+y_{B}}{y_{A}+y_{B}-c} \frac{y_{A}-y_{B}}{y_{A}-y_{B}+c}\frac{y_{A}+y_{B}}{y_{A}+y_{B}+c} \right)\\
&&\nonumber\times\, \Subc{\lambda} \left(\sum_{\sigma\in BC_k} \prod_{1\leq B<A\leq k} \frac{y_{\sigma(A)}-y_{\sigma(B)}-c}{y_{\sigma(A)}-y_{\sigma(B)}}\frac{y_{\sigma(A)}+y_{\sigma(B)}-c}{y_{\sigma(A)}+y_{\sigma(B)}} F(\sigma(\vec{y}))\right).
\end{eqnarray}

We use Lemma \ref{fourthlemma} to evaluate the above residue, and we easily identify the substitution with $E^c(w_1,w_1+c,\ldots)$ as in the statement of the proposition. Combining these two expressions yields the desired residue expansion of the proposition.

\subsection{Evaluation of certain residues and substitutions}\label{subsec}
The purpose of this section is to evaluate certain residues and substitutions which arise in the proofs of Propositions \ref{Akcprop}, \ref{321}, and Conjecture \ref{Bkcprop}. We first state our lemmas in the $q$-deformed setting and then take $q\to 1$. These calculations are straightforward but require some care as they involve large products.

For a partition $\lambda\vdash k$ and $q\in (0,1)$, recall the definitions of $\Resq{\lambda}$ and $\Subq{\lambda}$ from Section \ref{resreview}, and recall that the outcome of these residue or substitution operators on functions $f(y_1,\ldots, y_k)$ are functions of the terminal variables $\big(y_1,y_{\lambda_1+1},\ldots, y_{\lambda_1+\cdots+\lambda_{\ell(\lambda)-1}-1}\big)$. We rename these remaining variables as $w_j = y_{\lambda_1+\cdots \lambda_{j-1}+1}$, for $1\leq j\leq \ell(\lambda)$.

\begin{lemma}\label{reslemma}
For all $k\geq 1$, $\lambda\vdash k$ and $q\in (0,1)$, we have that
\begin{eqnarray}\label{reslemmaeqn}
 \Resq{\lambda} \left(\prod_{1\leq i\neq j\leq k} \frac{y_i-y_j}{y_i-qy_j} \right)\qquad\qquad\qquad\qquad\qquad\qquad\qquad\qquad\qquad\qquad\qquad\qquad\qquad\qquad\qquad&\\
\nonumber =(-1)^k (1-q)^k q^{-\frac{k^2}{2}} \prod_{j=1}^{\ell(\lambda)} w_j^{\lambda_j} q^{\frac{\lambda_j^2}{2}} \frac{\prod_{1\leq i<j\leq \ell(\lambda)}(w_j-w_i)(w_i q^{\lambda_i} - w_j q^{\lambda_j})}{\prod_{i,j=1}^{\ell(\lambda)} (w_i q^{\lambda_i} - w_j)} \qquad\qquad\qquad\,\,&\\
\nonumber =(-1)^k (1-q)^k q^{-\frac{k^2}{2}} \prod_{j=1}^{\ell(\lambda)} w_j^{\lambda_j} q^{\frac{\lambda_j^2}{2}} \det\left[\frac{1}{w_i q^{\lambda_i}-w_j}\right]_{i,j=1}^{\ell(\lambda)}.\qquad\qquad\qquad\qquad\qquad\qquad&
\end{eqnarray}
\end{lemma}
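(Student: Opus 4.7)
The plan is to reduce this residue computation to a pair-by-pair product by separating within-string and between-string contributions, and then recognizing the assembly as a Cauchy determinant. Let $S_\alpha = \{\lambda_1 + \cdots + \lambda_{\alpha-1} + 1, \ldots, \lambda_1 + \cdots + \lambda_\alpha\}$ index the $\alpha$-th string, so that on the residue locus the variables in $S_\alpha$ take the values $w_\alpha, q w_\alpha, q^2 w_\alpha, \ldots, q^{\lambda_\alpha - 1}w_\alpha$. I would split
$$\prod_{i \ne j}\frac{y_i - y_j}{y_i - q y_j} \;=\; \prod_{\alpha} \Bigg(\prod_{\substack{i, j \in S_\alpha \\ i \ne j}} \frac{y_i - y_j}{y_i - q y_j}\Bigg) \prod_{\alpha < \beta}\Bigg(\prod_{\substack{i \in S_\alpha \\ j \in S_\beta}}\frac{y_i - y_j}{y_i - q y_j}\prod_{\substack{i \in S_\beta \\ j \in S_\alpha}}\frac{y_i - y_j}{y_i - q y_j}\Bigg),$$
and observe that all poles on the residue locus sit in the within-string factors. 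Consequently, the pole-cancelling prefactors in (\ref{amts}) distribute string-by-string.

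For the within-string piece of a string of length $\ell = \lambda_\alpha$, multiplying by $\prod_{a=2}^{\ell}(y_a - q y_{a-1})$ cancels the $\ell-1$ simple poles at $y_{a+1} = q y_a$ and replaces them by the factors $\prod_{a=1}^{\ell-1}(y_{a+1} - y_a)$, which evaluate on the locus to $(q-1)^{\ell-1} w_\alpha^{\ell-1} q^{(\ell-1)(\ell-2)/2}$. The remaining regular ratios $(y_a - y_b)/(y_a - q y_b)$ with $(a,b) \ne (b+1,b)$ reduce, after substitution, to products of the form $\prod_d ((1-q^d)/(1-q^{d+1}))^{\ell-d}$, which telescope cleanly. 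The resulting string contribution is
$$W_\alpha \;=\; (-1)^{\lambda_\alpha - 1}\,\frac{(1-q)^{\lambda_\alpha}}{1-q^{\lambda_\alpha}}\,w_\alpha^{\lambda_\alpha - 1}.$$

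For each between-string pair $(\alpha, \beta)$, the integrand has no pole on the locus, so one substitutes directly. The resulting double product in $(a,b) \in \{1,\ldots,\lambda_\alpha\}\times\{1,\ldots,\lambda_\beta\}$ collapses under the telescoping identity
$$\prod_{b=1}^{\lambda_\beta}\frac{q^{a-1}w_\alpha - q^{b-1}w_\beta}{q^{a-1}w_\alpha - q^b w_\beta} \;=\; \frac{q^{a-1}w_\alpha - w_\beta}{q^{a-1}w_\alpha - q^{\lambda_\beta}w_\beta}$$
together with its reverse-orientation partner, yielding
$$C_{\alpha\beta} \;=\; q^{-\lambda_\alpha \lambda_\beta}\,\frac{(w_\beta - w_\alpha)(w_\alpha q^{\lambda_\alpha} - w_\beta q^{\lambda_\beta})}{(w_\alpha q^{\lambda_\alpha} - w_\beta)(w_\beta q^{\lambda_\beta} - w_\alpha)}.$$

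Finally, I would multiply $\prod_\alpha W_\alpha \cdot \prod_{\alpha < \beta}C_{\alpha\beta}$ and invoke the Cauchy determinant identity $\det[(a_i - b_j)^{-1}] = \prod_{i<j}(a_i - a_j)(b_j - b_i)/\prod_{i,j}(a_i - b_j)$ with $a_i = w_i q^{\lambda_i}$, $b_j = w_j$, which reproduces both forms on the right-hand side of (\ref{reslemmaeqn}) simultaneously. The main obstacle is purely bookkeeping: one must verify that the accumulated $q$-exponent is $-k^2/2 + \sum_\alpha \lambda_\alpha^2/2$ (which follows from $\sum_{\alpha<\beta}\lambda_\alpha\lambda_\beta = (k^2 - \sum_\alpha \lambda_\alpha^2)/2$); that the factors $\prod(1-q^{\lambda_\alpha})^{-1}$ from the $W_\alpha$'s combine with the diagonal Cauchy entries $\prod_\alpha(w_\alpha q^{\lambda_\alpha} - w_\alpha) = \prod_\alpha w_\alpha(q^{\lambda_\alpha}-1)$ to supply the extra $\prod w_\alpha$ that upgrades $\prod w_\alpha^{\lambda_\alpha-1}$ to $\prod w_\alpha^{\lambda_\alpha}$; and that the sign $(-1)^{k-\ell(\lambda)}$ from the $W_\alpha$'s combines with $(-1)^{\ell(\lambda)}$ from the sign flips $(1-q^{\lambda_\alpha})/(q^{\lambda_\alpha}-1)$ to reproduce the claimed $(-1)^k$.
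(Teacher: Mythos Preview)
Your proposal is correct and follows essentially the same approach as the paper: decompose the product into within-string factors (carrying the poles) and between-string factors (regular on the locus), compute each piece by telescoping to obtain exactly the $W_\alpha$ and $C_{\alpha\beta}$ you wrote, and then assemble via the Cauchy determinant with $a_i=w_iq^{\lambda_i}$, $b_j=w_j$. The paper's proof and yours differ only in the phrasing of the intermediate telescoping steps and the order in which the sign/$q$-power bookkeeping is carried out.
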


\begin{proof} The second equality of the lemma follows immediately from the Cauchy determinant. Though straightforward, proving the first equality of lemma does require some care in keeping track of large products. The product on the left-hand side of (\ref{reslemmaeqn}) involves terms in which $i$ and $j$ are in the same string of variables in (\ref{resqvalues}) as well as terms in which they are in different strings. We need to compute the residue of the same string terms and multiply it by the substitution of variables into the different string terms.

Let us first evaluate same string residues. Consider variables $y_1,\ldots, y_{\ell}$, $\ell\geq 2$, and observe that
\begin{eqnarray*}
\Res{\substack{y_2=qy_1\\y_3=qy_2\\\cdots\\y_{\ell}= qy_{\ell-1}}}  \left( \prod_{1\leq i\neq j\leq \ell} \frac{y_i-y_j}{y_i-qy_j}\right)\qquad\qquad\qquad\qquad\qquad\qquad\qquad\qquad\qquad\qquad\qquad\qquad\qquad\qquad&\\
\nonumber = \frac{\prod\limits_{1\leq i\neq j\leq \ell} (q^{i-1}y_1 - q^{j-1} y_1)}{\prod\limits_{\substack{1\leq i\neq j\leq \ell\\i\neq j+1}} (q^{i-1}y_1-q^j y_1)} =
 \frac{\prod\limits_{1\leq i\neq j\leq \ell} (q^{i-1}y_1 - q^{j-1} y_1)}{\prod\limits_{\substack{1\leq i \leq \ell \\2\leq j'\leq \ell+1\\i\neq j'-1,\, i\neq j'}} (q^{i-1}y_1-q^{j'-1} y_1)} \qquad\qquad\qquad\qquad\qquad&\\
\nonumber = (-1)^{\ell-1} y_1^{\ell-1} \frac{(q-1)\cdots (q^{\ell-1}-1) (q-1)^{\ell-1} q^{\frac{(\ell-1)(\ell-2)}{2}}}{(1-q^{\ell})(q-q^{\ell})\cdots (q^{\ell-2}-q^{\ell})} = (-1)^{\ell-1} y_1^{\ell-1} \frac{(1-q)^{\ell}}{1-q^{\ell}}. \,\,\,&
\end{eqnarray*}

Now turn to the cross term between two strings of variables. Consider one set of variables  $y_1,\ldots, y_{\ell}$ with $\ell\geq 2$ and a second set of variables $y'_1\ldots, y'_{\ell'}$ with $\ell'\geq 2$. Then
\begin{eqnarray*}
\Sub{\substack{y_2=qy_1\\y_3=qy_2\\\cdots\\y_{\ell}= qy_{\ell-1}}}  \Sub{\substack{y'_2=qy'_1\\y'_3=qy'_2\\\cdots\\y'_{\ell'}= qy'_{\ell'-1}}} \left( \prod_{i=1}^{\ell}\prod_{j=1}^{\ell'} \frac{y_i-y_j}{y_i-qy_j}\right)\qquad\qquad\qquad\qquad\qquad\qquad\qquad\qquad\qquad\qquad\qquad\qquad & \\
\nonumber = \prod_{i=1}^{\ell}\prod_{j=1}^{\ell'} \frac{y_1 q^{i-1}- y'_1 q^{j-1}}{(y_1 q^{i-1} -y'_1 q^j)} = \prod_{i=1}^{\ell} \frac{\prod\limits_{j=1}^{\ell'} (y_1 q^{i-1}- y'_1 q^{j-1})}{\prod\limits_{j'=2}^{\ell'} y_1 q^{i-1} y'_1 q^{j'-1}} = \prod_{i=1}^{\ell} \frac{y_1 q^{i-1}-y'_1}{y q^{i-1} - y'_1 q^{\ell'}}. \qquad\qquad&
\end{eqnarray*}

Since the strings of variables also come interchanged, we should multiply the above expression by the same term with $(y_1,\ell)$ and $(y'_1,\ell')$ interchanged. This gives
\begin{equation*}
\prod_{i=1}^{\ell} \prod_{j=1}^{\ell'} \frac{(y_1 q^{i-1} - y'_1 q^{j-1})(y'_1 q^{j-1} -y_1 q^{i-1})}{(y_1 q^{i-1} - y'_1 q^{j})(y'_1 q^{j} - y_1 q^{i})} = q^{-\ell \ell'} \frac{ (y_1-y'_1) (y_1 q^{\ell}-y'_1 q^{\ell'})}{(y_1 q^{\ell}-y'_1)(y_1 - y'_1 q^{\ell'})}.
\end{equation*}

Returning to the statement of the lemma, we see that we can evaluate the desired residue by multiplying the same string terms over all strings in (\ref{resqvalues}) as well as multiplying all terms corresponding to pairs of different string. Using the above calculations we obtain
\begin{equation*}
\Resq{\lambda} \left(\prod_{1\leq i\neq j\leq k} \frac{y_i-y_j}{y_i-qy_j} \right) = \prod_{j=1}^{\ell(\lambda)} \frac{(1-q)^{\lambda_j}}{(1-q^{\lambda_j})} (-1)^{\lambda_j-1} w_{j}^{\lambda_j-1} \, \prod_{1\leq i<j\leq \ell(\lambda)} q^{-\lambda_i\lambda_j} \frac{(w_i-w_j)(w_i q^{\lambda_i} - w_j q^{\lambda_j})}{(w_i q^{\lambda_i} - w_j)(w_i -w_j q^{\lambda_j})}.\qquad\qquad
\end{equation*}
It is easy now to rewrite the above expression so as to produce the first equality of the lemma, as desired.
\end{proof}

\begin{lemma}\label{sublemma}
For all $k\geq 1$, $\lambda\vdash k$ and $q\in (0,1)$, we have
\begin{eqnarray}\label{sublemmaeqn}
\Subq{\lambda} \left(\prod_{1\leq i<j\leq k} \frac{1-y_i y_j}{q-y_i y_j} \frac{1- y_i y_j}{1-q y_i y_j}\right) \qquad\qquad\qquad\qquad\qquad\qquad\qquad\qquad\qquad\qquad\qquad\qquad\qquad\qquad&\\
\nonumber = q^{-\frac{k^2}{2}}  q^{\frac{k}{2}} \prod_{j=1}^{\ell(\lambda)} \frac{1-q^{\lambda_j} \tfrac{w_j^2}{q}}{1-\tfrac{w_j^2}{q}} \, \frac{(\tfrac{w_j^2}{q};q^2)_{\lambda_j}}{(w_j^2;q^2)_{\lambda_j}}
\prod_{1\leq i<j\leq \ell(\lambda)} \frac{1-q^{\lambda_i} \tfrac{w_i w_j}{q}}{1- \tfrac{w_i w_j}{q}}\,
\frac{1-q^{\lambda_j} \tfrac{w_i w_j}{q}}{1- q^{\lambda_i+\lambda_j}\tfrac{w_i w_j}{q}}.&
\end{eqnarray}
\end{lemma}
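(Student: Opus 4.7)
The plan is to decompose the product on the left-hand side into pieces indexed by pairs of strings---same-string contributions (pairs within a single string) and cross-string contributions (pairs from two distinct strings)---evaluate each piece by a telescoping identity, and then track the accumulated powers of $q$. Under the substitution $\Subq{\lambda}$ the $j$-th string variables become $y_a \mapsto q^{a-1}w_j$ for $a=1,\ldots,\lambda_j$. Rewriting $q-y_iy_j = q(1-q^{-1}y_iy_j)$ turns each factor of the original product into $q^{-1}(1-y_iy_j)^2/[(1-q^{-1}y_iy_j)(1-qy_iy_j)]$, so an overall scalar $q^{-\binom{k}{2}}$ is produced and the rest is a product of ratios of the form $(1-q^{c-1}\cdot)^2/[(1-q^{c-2}\cdot)(1-q^{c}\cdot)]$ as $c$ varies, exactly the shape that will telescope.

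For the cross-string piece with strings $i<j$, I substitute $y_a = q^{\alpha-1}w_i$, $y_b = q^{\beta-1}w_j$ and set $u=w_iw_j/q$, $c=\alpha+\beta$. The inner product over $\beta\in\{1,\ldots,\lambda_j\}$ (for fixed $\alpha$) telescopes twice in $c$, yielding $(1-q^{\alpha+\lambda_j-1}u)(1-q^{\alpha}u)/[(1-q^{\alpha-1}u)(1-q^{\alpha+\lambda_j}u)]$; the outer product over $\alpha\in\{1,\ldots,\lambda_i\}$ telescopes again, producing
$$
q^{-\lambda_i\lambda_j}\,\frac{(1-q^{\lambda_i}w_iw_j/q)(1-q^{\lambda_j}w_iw_j/q)}{(1-w_iw_j/q)(1-q^{\lambda_i+\lambda_j}w_iw_j/q)},
$$
which is exactly the pair factor appearing in the right-hand side of (\ref{sublemmaeqn}).

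For the same-string piece within a string of length $\ell=\lambda_j$ with base variable $w_j$, set $v=w_j^2/q$; the constraint $a<b$ prevents a direct double telescoping. After the inner product in $b\in\{a+1,\ldots,\ell\}$ one obtains $(1-q^{a+\ell-1}v)(1-q^{2a}v)/[(1-q^{2a-1}v)(1-q^{a+\ell}v)]$. In the outer product over $a\in\{1,\ldots,\ell-1\}$ the factors involving $q^{a+\ell}$ telescope immediately to $(1-q^{\ell}v)/(1-q^{2\ell-1}v)$, while the remaining factor $\prod_{a=1}^{\ell-1}(1-q^{2a}v)/(1-q^{2a-1}v)$ does \emph{not} telescope but instead assembles as the ratio of $q^2$-Pochhammer symbols $(q^2v;q^2)_{\ell-1}/(qv;q^2)_{\ell-1}$. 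Using $(v;q^2)_\ell = (1-v)(q^2v;q^2)_{\ell-1}$ and $(qv;q^2)_\ell = (qv;q^2)_{\ell-1}(1-q^{2\ell-1}v)$ together with $qv=w_j^2$ repackages this as
$$
q^{-\binom{\ell}{2}}\,\frac{1-q^{\lambda_j}w_j^2/q}{1-w_j^2/q}\,\frac{(w_j^2/q;q^2)_{\lambda_j}}{(w_j^2;q^2)_{\lambda_j}},
$$
matching the single-string factor in the right-hand side.

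Finally, combining the two types of contributions over all strings and pairs of strings and adding the $q$-exponents, one checks $-\sum_j\binom{\lambda_j}{2}-\sum_{i<j}\lambda_i\lambda_j = -\tfrac12\sum_j\lambda_j^2+\tfrac{k}{2}-\tfrac12\bigl(k^2-\sum_j\lambda_j^2\bigr) = \tfrac{k}{2}-\tfrac{k^2}{2}$, using $\bigl(\sum_j\lambda_j\bigr)^2=k^2$; this matches the overall prefactor $q^{-k^2/2}q^{k/2}$ in the stated identity. The main obstacle I foresee is the asymmetric nature of the same-string telescoping---the interlocking even/odd exponents $2a$ and $2a-1$ block a direct telescoping---which is handled precisely by repackaging the residual product as a ratio of $q^2$-Pochhammer symbols. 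Once this repackaging is identified, the rest is careful but routine bookkeeping, and the desired identity follows.
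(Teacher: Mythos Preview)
Your proof is correct and follows essentially the same route as the paper's: decompose the product into same-string and cross-string contributions, telescope each piece after substituting $y_a\mapsto q^{a-1}w_j$, and then collect the accumulated powers of $q$ via $-\sum_j\binom{\lambda_j}{2}-\sum_{i<j}\lambda_i\lambda_j=\tfrac{k}{2}-\tfrac{k^2}{2}$. Your presentation is arguably a bit cleaner---the substitutions $u=w_iw_j/q$ and $v=w_j^2/q$ make the telescoping and the emergence of the $q^2$-Pochhammer ratio explicit, whereas the paper carries out the same cancellations directly in terms of $y_1,y_1'$---but the underlying argument is identical.
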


\begin{proof} The same considerations as in the proof of Lemma \ref{reslemma} apply here. Let us first evaluate same string residues. Consider variables $y_1,\ldots, y_{\ell}$, $\ell\geq 2$, and observe that

\begin{eqnarray*}
\Sub{\substack{y_2=qy_1\\y_3=qy_2\\\cdots\\y_{\ell}= qy_{\ell-1}}} \left( \prod_{1\leq i<j\leq \ell}  \frac{1-y_i y_j}{q-y_i y_j} \frac{1- y_i y_j}{1-q y_i y_j}\right) &=& \frac{ \prod\limits_{1\leq i<j\leq \ell} (1-q^{i-1+j-1} y_1^2)^2}{\prod\limits_{1\leq i<j\leq \ell} q(1-q^{i-1+j-1-1}y_1^2)(1-q^{i-1+j-1+1}y_1^2)} \\
\nonumber &=& q^{-\frac{\ell(\ell-1)}{2}}  \frac{ \prod\limits_{1=j<j\leq \ell} (1-q^{i+j-2} y_1^2)\, \prod\limits_{1<j=i+1\leq \ell} (1-q^{i+j-2}y_1^2)}{\prod\limits_{\substack{i=0\\1<j\leq \ell}} (1-q^{i+j-2} y_1^2) \, \prod\limits_{1<i=j\leq \ell} (1-q^{i+j-2} y_1^2)}\\
\nonumber &=& q^{-\frac{\ell(\ell-1)}{2}} \frac{\prod\limits_{2\leq j\leq \ell} (1-q^{j-1}y_1^2) \, \prod\limits_{2\leq j\leq \ell} (1-q^{2j-3} y_1^2)}
{\prod\limits_{2\leq j\leq \ell} (1-q^{j-2}y_1^2) \, \prod\limits_{2\leq j\leq \ell} (1-q^{2j-2} y_1^2)}\\
\nonumber &=&  q^{-\frac{\ell(\ell-1)}{2}} \frac{1-q^{\ell-1}y_1^2}{1-y_1^2}\, \frac{(1-q y_1^2)(1-q^3 y_1^2) \cdots (1-q^{2\ell-3}y_1^2)}{(1-q^2 y_1^2)(1-q^4 y_1^2) \cdots (1-q^{2\ell-2}y_1^2)}.
\end{eqnarray*}

Now turn to the cross term between two strings of variables. Consider one set of variables  $y_1,\ldots, y_{\ell}$ with $\ell\geq 2$ and a second set of variables $y'_1\ldots, y'_{\ell'}$ with $\ell'\geq 2$. We also multiply by the same term with the string interchanged. Thus,
\begin{align*}
&\Sub{\substack{y_2=qy_1\\y_3=qy_2\\\cdots\\y_{\ell}= qy_{\ell-1}}}  \Sub{\substack{y'_2=qy'_1\\y'_3=qy'_2\\\cdots\\y'_{\ell'}= qy'_{\ell'-1}}}\left(\prod_{1\leq i<j\leq k} \frac{1-y_i y_j}{q-y_i y_j} \frac{1- y_i y_j}{1-q y_i y_j}\, \frac{1-y_j y_i}{q-y_j y_i} \frac{1- y_j y_i}{1-q y_j y_i}\right) \\
&\nonumber = \prod_{i=1}^{\ell} \prod_{j=1}^{\ell'} \frac{(1-q^{i-1+j-1}y_1y'_1)^2}{q(1-q^{i-1+j-1-1}y_1y'_1)(1-q^{i-1+j-1+1}y_1y'_1)} = q^{-\ell \ell'} \prod_{j=1}^{\ell'} \frac{(1-q^{j-1} y_1 y'_1) (1-q^{\ell -1 +j-1} y_1 y'_1)}{(1-q^{j-2} y_1 y'_1)(1-q^{\ell +j-1} y_1 y'_1)}\\
&\nonumber = q^{-\ell \ell'}\frac{1-q^{\ell'-1}y_1 y'_1}{1-q^{-1} y_1 y'_1} \, \frac{1-q^{\ell-1} y_1 y'_1}{1-q^{\ell+\ell'-1} y_1 y'_1}.
\end{align*}

Returning to the statement of the lemma, we see that we can evaluate the desired residue by multiplying the same string terms over all strings in (\ref{resqvalues}) as well as multiplying all terms corresponding to pairs of different string. Using the above calculations and the replacement $w_j= y_{\lambda_1+\cdots \lambda_{j-1}+1}$ we obtain

\begin{eqnarray*}
\Subq{\lambda} \left(\prod_{1\leq i<j\leq k} \frac{1-y_i y_j}{q-y_i y_j} \frac{1- y_i y_j}{1-q y_i y_j}\right) \qquad\qquad\qquad\qquad\qquad\qquad\qquad\qquad\qquad\qquad\qquad\qquad\qquad\qquad&\\
\nonumber = \prod_{i=1}^{\ell(\lambda)} q^{-\frac{\lambda_i(\lambda_i-1)}{2}} \frac{1-q^{\lambda_i} \frac{w_i^2}{q}}{1-\frac{w_i^2}{q}} \, \frac{\big(\frac{w_i^2}{q};q^2\big)_{\lambda_i}}{\big(w_i^2;q^2\big)_{\lambda_i}} \, \prod_{1\leq i<j\leq \ell(\lambda)} q^{-\lambda_i\lambda_j} \frac{(1-q^{\lambda_i} \frac{w_i w_j}{q})(1-q^{\lambda_j }\frac{w_i w_j}{q})}{(1-\frac{w_i w_j}{q})(1-q^{\lambda_i+\lambda_j}\frac{w_i w_j}{q})}. &
\end{eqnarray*}

The powers of $q$ can be simplified further using ($\lambda_1+\lambda_2+\cdots + \lambda_{\ell(\lambda)}=k$)
\begin{equation*}
\prod_{i=1}^{\ell(\lambda)} q^{-\frac{\lambda_i(\lambda_i-1)}{2}} \prod_{1\leq i<j\leq \ell(\lambda)} q^{-\lambda_i\lambda_j}  = q^{-\frac{k^2}{2}}  q^{\frac{k}{2}}.
\end{equation*}

Using this simplification, the above formula reduces to the right-hand side of (\ref{sublemmaeqn}), thus proving the lemma.
\end{proof}

Combining these two lemmas along with (\ref{ressubeqn}) immediately yields
\begin{lemma}\label{thirdlem}
For all $k\geq 1$, $\lambda\vdash k$ and $q\in (0,1)$, we have that
\begin{eqnarray*}
\Resq{\lambda} \left(\prod_{1\leq i<j\leq k}  \frac{y_i-y_j}{y_i-qy_j} \frac{y_j-y_i}{y_j-qy_i} \frac{1-y_i y_j}{q-y_i y_j} \frac{1- y_i y_j}{1-q y_i y_j}\right)\qquad \qquad \qquad \qquad \qquad \qquad\qquad \qquad \qquad\qquad  &\\
\nonumber = (-1)^k (1-q)^{k} q^{-k^2} \prod_{j=1}^{\ell(\lambda)} w_j^{\lambda_j} q^{\frac{\lambda_j(\lambda_j+1)}{2}}  \frac{1-q^{\lambda_j} \tfrac{w_j^2}{q}}{1-\tfrac{w_j^2}{q}} \, \frac{(\tfrac{w_j^2}{q};q^2)_{\lambda_j}}{(w_j^2;q^2)_{\lambda_j}}\qquad \qquad \qquad\qquad \qquad &\\
\nonumber \times\,\prod_{1\leq i<j\leq \ell(\lambda)} \frac{(w_j-w_i)(w_i q^{\lambda_i} - w_j q^{\lambda_j}) (1-q^{\lambda_i} \tfrac{w_i w_j}{q})(1-q^{\lambda_j} \tfrac{w_i w_j}{q})}{(1- \tfrac{w_i w_j}{q})(1- q^{\lambda_i+\lambda_j}\tfrac{w_i w_j}{q})} \prod_{i,j=1}^{\ell(\lambda)} \frac{1}{w_i q^{\lambda_i} - w_j}.
\end{eqnarray*}
\end{lemma}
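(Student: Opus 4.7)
The plan is to recognize that the integrand inside $\Resq{\lambda}$ splits into two factors of complementary character and then invoke the two preceding lemmas. First I would rewrite
\begin{equation*}
\prod_{1\leq i<j\leq k}\frac{y_i-y_j}{y_i-qy_j}\,\frac{y_j-y_i}{y_j-qy_i}=\prod_{1\leq i\neq j\leq k}\frac{y_i-y_j}{y_i-qy_j},
\end{equation*}
so that the full integrand factors as this expression times
\begin{equation*}
\prod_{1\leq i<j\leq k}\frac{1-y_iy_j}{q-y_iy_j}\,\frac{1-y_iy_j}{1-qy_iy_j}.
\end{equation*}
The first factor carries every simple pole of the form $y_a=qy_b$ that is swept by $\Resq{\lambda}$; the second factor, when evaluated on the residue subspace specified by (\ref{resqvalues}), becomes a rational function of the terminal variables $w_1,\dots,w_{\ell(\lambda)}$ whose remaining denominators $q-y_iy_j$ and $1-qy_iy_j$ do not vanish for generic $w$'s, so it is regular along the residue subspace.

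With this split in hand, I would apply the elementary factorization identity (\ref{ressubeqn}) once per pole along each string of $\lambda$. This iteration pulls the regular factor outside of $\Resq{\lambda}$, converting it into a $\Subq{\lambda}$, and yields
\begin{equation*}
\Resq{\lambda}(\text{integrand})=\Resq{\lambda}\!\Bigl(\prod_{1\leq i\neq j\leq k}\tfrac{y_i-y_j}{y_i-qy_j}\Bigr)\,\Subq{\lambda}\!\Bigl(\prod_{1\leq i<j\leq k}\tfrac{1-y_iy_j}{q-y_iy_j}\tfrac{1-y_iy_j}{1-qy_iy_j}\Bigr).
\end{equation*}
Lemma \ref{reslemma} evaluates the first factor and Lemma \ref{sublemma} evaluates the second, and the product of the two right-hand sides matches the asserted formula termwise in the per-$j$ and per-pair-$(i,j)$ pieces.

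The only step demanding any attention is collecting the $q$-power prefactors. Lemma \ref{reslemma} contributes the overall constant $(-1)^k(1-q)^k q^{-k^2/2}$ together with per-string factors $w_j^{\lambda_j}q^{\lambda_j^2/2}$; Lemma \ref{sublemma} contributes an additional $q^{-k^2/2}\,q^{k/2}$. Using $\sum_j\lambda_j=k$, the two $q^{-k^2/2}$ factors combine to $q^{-k^2}$, and the $q^{k/2}$ distributes as $\prod_j q^{\lambda_j/2}$, promoting each $q^{\lambda_j^2/2}$ into the $q^{\lambda_j(\lambda_j+1)/2}$ that appears in the statement. No real obstacle arises: once the factorization principle (\ref{ressubeqn}) is deployed, the lemma reduces to bookkeeping of scalar prefactors drawn from the two earlier evaluations.
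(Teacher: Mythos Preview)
Your proposal is correct and follows exactly the approach the paper takes: the paper's proof is the single line ``Combining these two lemmas along with (\ref{ressubeqn}) immediately yields,'' and you have simply unpacked that combination, including the factorization of the integrand, the justification that the second factor is regular on the residue subspace so that (\ref{ressubeqn}) applies, and the bookkeeping of the $q$-powers. Nothing is missing or different.
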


Recall the definitions of $\Resc{\lambda}$ and $\Subc{\lambda}$ from Section \ref{resreview}. By taking a limit as $\e\to 0$ of the above results under the change of variables $q\mapsto e^{-\e c}$, $y\mapsto  e^{-\e y}$ and $w\mapsto e^{-\e w}$ we find the following:
\begin{lemma}\label{fourthlemma}
For all $k\geq 1$, $\lambda\vdash k$ and $c\in (0,\infty)$, we have the following
\begin{eqnarray*}
\Resc{\lambda} \left(\prod_{1\leq A<B\leq k} \frac{ y_A-y_B}{y_A-y_B-c} \frac{y_A-y_B}{y_A-y_B+c} \frac{y_A+y_B}{y_A+y_B+c} \frac{y_A+y_B}{y_A+y_B-c}\right) && \\
= c^k \prod_{j=1}^{\ell(\lambda)} \frac{-1}{2c}\, \frac{\left(\tfrac{2w_j+c}{2c}\right)_{\lambda_j-1}}{\left(\tfrac{2w_j}{2c}\right)_{\lambda_j}} \Pf\left[\frac{u_i-u_j}{u_i+u_j}\right]_{i,j=1}^{2\ell(\lambda)},&&
\end{eqnarray*}
where
$$(u_1,\ldots, u_{2\ell(\lambda)}) =\big(-w_1+\tfrac{c}{2}, w_1-\tfrac{c}{2}+c\lambda_1 , -w_2+\tfrac{c}{2},w_2-\tfrac{c}{2}+c\lambda_2, \ldots, -w_{\ell(\lambda)}+\tfrac{c}{2}, w_{\ell(\lambda)}-\tfrac{c}{2} + c\lambda_{\ell(\lambda)} \big).$$
\end{lemma}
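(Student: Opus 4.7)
My plan is to obtain Lemma \ref{fourthlemma} as the $q\to 1$ degeneration of Lemma \ref{thirdlem}, under the substitutions $q = e^{-\epsilon c}$, $y_i = e^{-\epsilon Y_i}$, and $w_j = e^{-\epsilon W_j}$, letting $\epsilon\to 0^+$. I would compute the leading-order asymptotics in $\epsilon$ of both sides of Lemma \ref{thirdlem} and match the finite terms after the common factor $\epsilon^{k-\ell(\lambda)}$ has been cancelled.

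First I would show that the multiplicative residue $\Resq{\lambda}$ converts to the additive residue $\Resc{\lambda}$ up to an explicit Jacobian. Since $y_{i+1} - qy_i = -\epsilon\, e^{-\epsilon(Y_i+c)}(Y_{i+1}-Y_i-c) + O(\epsilon^2)$, a single $\Res{y_{i+1}=qy_i}$ produces a factor of $-\epsilon$ (times an exponential tending to $1$) multiplied by $\Res{Y_{i+1}=Y_i+c}$. Composing the $k-\ell(\lambda)$ simple residues that constitute $\Resq{\lambda}$ yields the overall Jacobian $(-\epsilon)^{k-\ell(\lambda)}$. For the bracketed product of rational functions, each factor admits the elementary limits $\frac{y_i-y_j}{y_i-qy_j}\to \frac{Y_i-Y_j}{Y_i-Y_j-c}$, $\frac{1-y_iy_j}{q-y_iy_j}\to \frac{Y_i+Y_j}{Y_i+Y_j-c}$, and $\frac{1-y_iy_j}{1-qy_iy_j}\to \frac{Y_i+Y_j}{Y_i+Y_j+c}$ (the $\epsilon$'s cancel between numerator and denominator). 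Hence the left-hand side of Lemma \ref{thirdlem} equals $(-\epsilon)^{k-\ell(\lambda)}$ times the left-hand side of Lemma \ref{fourthlemma} at leading order.

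On the right-hand side of Lemma \ref{thirdlem}, I would tally the powers of $\epsilon$: the prefactor $(1-q)^k$ contributes $(\epsilon c)^k$; each $q$-Pochhammer $(\cdot;q^2)_{\lambda_j}$ contributes $\epsilon^{\lambda_j}$ that cancels between the numerator and denominator ratios; and the cross-string factors together with $\det\bigl[1/(w_iq^{\lambda_i}-w_j)\bigr]$ contribute $\epsilon^{-\ell(\lambda)}$. The net power is $\epsilon^{k-\ell(\lambda)}$, matching the Jacobian. Setting $\epsilon$-powers aside, the factor $\frac{1-q^{\lambda_j}w_j^2/q}{1-w_j^2/q}\cdot\frac{(w_j^2/q;q^2)_{\lambda_j}}{(w_j^2;q^2)_{\lambda_j}}$ collapses (after multiplying top and bottom by suitable powers of $2c$, and combining with the $(-1)^k q^{\lambda_j(\lambda_j+1)/2}$ contributions from the prefactor) to the advertised $-\tfrac{1}{2c}\,\bigl(\tfrac{2W_j+c}{2c}\bigr)_{\lambda_j-1}\big/\bigl(\tfrac{2W_j}{2c}\bigr)_{\lambda_j}$ per string.

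The final step is to identify the remaining $\epsilon\to 0$ cross-string product with the stated Pfaffian. Setting $u_{2j-1} = -W_j+c/2$ and $u_{2j}= W_j-c/2+c\lambda_j$, each linear factor produced by the limit is of the form $\pm(u_a\pm u_b)$: for example $W_i+W_j-c = -(u_{2i-1}+u_{2j-1})$, $W_i-W_j+c(\lambda_i-\lambda_j) = u_{2i}-u_{2j}$, $W_j-W_i-c\lambda_i = -(u_{2i}+u_{2j-1})$, and $c\lambda_j+W_i+W_j-c = u_{2j}-u_{2i-1}$. A careful regrouping should show that the cross-string contribution collapses to $\prod_{a<b,\,(a,b)\neq(2j-1,2j)}\frac{u_a-u_b}{u_a+u_b}$; combined with the same-string factors $\frac{u_{2j-1}-u_{2j}}{u_{2j-1}+u_{2j}}$ (which match the $-\tfrac{1}{c\lambda_j}(2W_j+c(\lambda_j-1))$ piece absorbed into the Pochhammer computation above), this yields $\prod_{1\le a<b\le 2\ell(\lambda)}\frac{u_a-u_b}{u_a+u_b}$. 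By the classical Schur Pfaffian identity $\Pf\bigl[(u_a-u_b)/(u_a+u_b)\bigr]_{a,b=1}^{2n}=\prod_{a<b}(u_a-u_b)/(u_a+u_b)$, this is exactly the Pfaffian in Lemma \ref{fourthlemma}. I expect the main obstacle to be the combinatorial bookkeeping of signs and powers of $2c$ across the many factors; as an alternative I would prove Lemma \ref{fourthlemma} directly in the additive setting, mimicking Lemmas \ref{reslemma}--\ref{thirdlem} (same-string residues give the Pochhammer factors by a telescoping computation analogous to the proof of Lemma \ref{reslemma}, and cross-string substitutions yield the Schur product directly).
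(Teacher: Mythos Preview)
Your approach is essentially identical to the paper's: both obtain Lemma \ref{fourthlemma} by taking the $\epsilon\to 0$ limit of Lemma \ref{thirdlem} under $q=e^{-\epsilon c}$, $y=e^{-\epsilon Y}$, $w=e^{-\epsilon W}$, and then recognize the resulting cross-string product as the Schur Pfaffian $\Pf[(u_a-u_b)/(u_a+u_b)]$. Your proposal spells out the Jacobian $(-\epsilon)^{k-\ell(\lambda)}$ and the $\epsilon$-power accounting more explicitly than the paper (which simply records the limiting expression and notes the extra sign $(-1)^{k-\ell(\lambda)}$ from the residue conversion), but the route is the same.
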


\begin{proof}
The immediate consequence of taking the $\e\to 0$ limit of Lemma \ref{thirdlem} (with the change of variables $q\mapsto e^{-\e c}$, $y\mapsto  e^{-\e y}$ and $w\mapsto e^{-\e w}$) is that
\begin{eqnarray}\label{almostthere}
&&\Resc{\lambda} \left(\prod_{1\leq A<B\leq k} \frac{ y_A-y_B}{y_A-y_B-c} \frac{y_A-y_B}{y_A-y_B+c} \frac{y_A+y_B}{y_A+y_B+c} \frac{y_A+y_B}{y_A+y_B-c}\right)\\
&&\nonumber = (-1)^{k-\ell(\lambda)}(-c)^k \prod_{j=1}^{\ell(\lambda)} \frac{ -2 w_i + c - c\lambda_i}{-2 w_i +c} \, \frac{(2w_i-c)(2w_i-c +2c) \cdots (2 w_i -c +2c(\lambda_i-1))}{(2w_i)(2w_i +2c) \cdots (2 w_i +2c(\lambda_i-1))} \\
&&\nonumber \times\,\prod_{1\leq i<j\leq \ell(\lambda)} \frac{ (w_j-w_i) (-w_i-c\lambda_i +w_j + c\lambda_j)(-w_i-w_j+c-c\lambda_i)(-w_i-w_j+c-c\lambda_j)}{(-w_i-w_j+c) (-w_i-w_j+c-c(\lambda_i+\lambda_j))(-w_i-c\lambda_i +w_j)(-w_i+w_j+c\lambda_j)}\\
&& \times\,\prod_{j=1}^{\ell(\lambda)} \frac{1}{-c\lambda_j}.
\end{eqnarray}
Note that the $(-1)^{k-\ell(\lambda)}$ factor arose from the limit of $\Resq{\lambda}$ to $\Resc{\lambda}$.

We can recognize a Pfaffian in this expression. Take  $u$ as in the statement of the lemma and use the Pfaffian identity (cf. \cite[III.8]{M})
$$
\Pf\left[\frac{u_i-u_j}{u_i+u_j}\right]_{i,j=1}^{2\ell(\lambda)} = \prod_{1\leq i<j\leq \ell(\lambda)} \frac{u_i-u_j}{u_i+u_j}.
$$
Then right-hand side of the above expression evaluates to
$$
\prod_{1\leq i<j\leq \ell(\lambda)} \frac{(-w_i+w_j) (-w_i-w_j +c - c\lambda_j) ( w_i+w_j -c + c\lambda_i) (w_i -w_j +c\lambda_i-c\lambda_j)}{(-w_i-w_j +c) (-w_i+w_j + c\lambda_j) (w_i -w_j +c\lambda_i) (w_i+w_j -c +c(\lambda_i+\lambda_j))} \, \prod_{i=1}^{\ell(\lambda)} \frac{ -2w_i +c -c\lambda_i}{c \lambda_i}
$$
which, compared with (\ref{almostthere}), yields the desired result.
\end{proof}

\subsection{Type $A$ and $BC$ symmetry}\label{Asymsec}

The symmetry group of type $A_k$ is identified with the permutation group on $k$ elements, written $S_k$. We write $\sigma\in S_k$ to identify a group element as a permutation. A permutation $\sigma\in S_k$ acts on a vector $\vec{z}=(z_1,\ldots, z_k)$ by permutating indices, yielding $\sigma(\vec{z}) = (z_{\sigma(1)},\ldots, z_{\sigma(k)})$.

A fundamental domain for the action of $S_k$ is a type $A_k$ Weyl chamber, defined here as
\begin{equation*}
W(A_k) = \left\{\vec{x} = (x_1\leq x_2\leq \cdots \leq x_k)\right\}.
\end{equation*}

An example of an $S_k$ invariant function (which is utilized in this work) is
\begin{equation*}
\prod_{1\leq i\neq j\leq k} \frac{z_A-z_B}{z_A-q z_B}.
\end{equation*}

The following symmetrization identity is proved in \cite[III.1]{M} as a special case of computing the Hall-Littlewood polynomial normalization:
\begin{equation*}
\sum_{\sigma\in S_k} \prod_{1\leq B<A\leq k}   \frac{z_{\sigma(A)}-tz_{\sigma(B)}}{z_{\sigma(A)}-z_{\sigma(B)}} = \prod_{j=1}^{k} \frac{1-t^j}{1-t}.
\end{equation*}

Changing variables as $t\mapsto e^{-\e c}$ and $z \mapsto e^{-\e z}$ and taking the limit $\e \to 0$, we find that this becomes
\begin{equation}\label{Akfacsym}
\sum_{\sigma\in S_k} \prod_{1\leq B<A\leq k}   \frac{z_{\sigma(A)}-z_{\sigma(B)}-c}{z_{\sigma(A)}-z_{\sigma(B)}} = k!.
\end{equation}

The symmetry group of type $BC_k$ is called the {\it hyperoctahedral  group} and is identified with the {\it signed} permutation group on $k$ elements (we also denote this group by $BC_k$). The group is sometimes written as $S_k\ltimes \Z_2^k$ as it is a wreath product of $S_k$ and $\Z_2^k$. An element $\sigma\in BC_k$ maps $\{\pm 1,\ldots, \pm k\}$ onto itself with the condition that $\sigma(-i) = -\sigma(i)$. As such, it suffices to specify the images of $\{1,\ldots ,k\}$ under $\sigma$. For example for $k=3$, $\sigma\in BC_3$ could map $1\mapsto \sigma(1)=-2$, $2\mapsto \sigma(2)=3$ and $3\mapsto\sigma(3)=-1$ (and hence $-1\mapsto 2$, $-2\mapsto -3$ and $-3\mapsto 1$). A signed permutation $\sigma\in BC_k$ can act on a vector $\vec{z}=(z_1,\ldots,z_k)$ either {\it multiplicatively} or {\it additively}. If $\sigma$ acts multiplicatively, then
$$\sigma(\vec{z}) = (z_{|\sigma(1)|}^{\sgn(\sigma(1))},\ldots,z_{|\sigma(k)|}^{\sgn(\sigma(k))})$$
and if $\sigma$ acts additively, then
$$\sigma(\vec{z}) = (\sgn(\sigma(1)) z_{|\sigma(1)|},\ldots,\sgn(\sigma(k)) z_{|\sigma(k)|}).$$

A fundamental domain for the additive action of $BC_k$ is a type $BC_k$ Weyl chamber, defined here as
\begin{equation*}
W(BC_k) = \left\{ \vec{x} = (x_1\leq x_2\leq \cdots \leq x_k \leq 0)\right\}.
\end{equation*}

The following identity is proved in \cite{Vidya} as a special case of computing the $q=0$ Koornwinder polynomial normalization (type $BC$ analogs of Hall-Littlewood polynomials). In particular by combining equations (1), (5) and Theorem 2.6 of \cite{Vidya} and choosing $\lambda=\emptyset$ we have that for arbitrary $a$, $b$, $t$,
\begin{equation*}
\sum_{\sigma\in BC_k} \prod_{1\leq A<B\leq k} \frac{1-t\frac{z_{\sigma(B)}}{z_{\sigma(A)}}}{1-\frac{z_{\sigma(B)}}{z_{\sigma(A)}}}\frac{1-t\frac{1}{z_{\sigma(A)}z_{\sigma(B)}}}{1-\frac{1}{z_{\sigma(A)}z_{\sigma(B)}}} \prod_{j=1}^{k} \frac{\left(1-a \frac{1}{z_{\sigma(j)}}\right)\left(1-b \frac{1}{z_{\sigma(j)}}\right)}{1-\frac{1}{z_{\sigma(j)}^2}} = \prod_{j=1}^{k} \frac{1-t^j}{1-t} (1-ab t^{j-1}).
\end{equation*}
In the above, the action of $BC_k$ is taken to be multiplicative. Note also that the product over $1\leq A<B\leq k$ above can also be taken over $1\leq B<A\leq k$, which simply amounts to renaming the variables $z_j\mapsto z_{k+1-j}$, $1\leq j\leq k$.

Changing variables as $a\mapsto e^{-\e a}$, $t\mapsto e^{-\e c}$, $z_j\mapsto e^{-\e z_j}$, and setting $b\equiv 0$, we find that the $\e\to 0$ limit of the above identity yields (now with additive action of $BC_k$)
\begin{equation}\label{Bkfacsym}
\sum_{\sigma\in BC_k} \prod_{1\leq B<A\leq k} \frac{z_{\sigma(A)}-z_{\sigma(B)}-c}{z_{\sigma(A)}-z_{\sigma(B)}}\frac{z_{\sigma(A)}+z_{\sigma(B)}-c}{z_{\sigma(A)}+z_{\sigma(B)}} \prod_{j=1}^{k} \frac{z_{\sigma(j)}-a}{z_{\sigma(j)}} = 2^k k!.
\end{equation}

\section{The case of one string in Claim \ref{claim:Res-are-zero}}\label{sec:one-string}



The goal of this section is to present a proof of a partial result towards Claim \ref{claim:Res-are-zero}. It is given by equation \eqref{12} below. We deal with the case when the element of $I$ consists of one string only.

Let us denote by $BC^{I(\lambda)}_k$ the set of elements of $BC_k$ which come from the diagrams with the partition type $\lambda$. Equation \eqref{ref31prime} can be written as
\begin{eqnarray}\label{ref31prime-2}
&&\sum_{I\in S(\lambda)} \, \int_{-\I \infty}^{\I \infty} \frac{dz_{i_{\mu_1}}}{2\pi \I} \int_{-\I \infty}^{\I \infty} \frac{dz_{j_{\mu_2}}}{2\pi \I} \cdots \Res{I} \left(\prod_{1\leq A<B\leq k} \frac{z_A-z_B}{z_A-z_B-c}\frac{z_A+z_B}{z_A+z_B-c}  F(\vec{z})) \right) =\frac{1}{2^{m_1}} \\
\nonumber && \times \sum_{\sigma\in BC_k^{I(\lambda)}} \int_{ e_1^\sigma c -\I \infty}^{e_1^\sigma c + \I \infty} \frac{dw_{1}}{2\pi \I} \int_{e_2^\sigma c -\I \infty}^{e_2^\sigma c + \I \infty} \frac{dw_{2}}{2\pi \I} \cdots \Resc{\lambda} \left(\prod_{1\leq A<B\leq k} \frac{y_{\sigma(A)}-y_{\sigma(B)}}{y_{\sigma(A)}-y_{\sigma(B)}-c}\frac{y_{\sigma(A)}+y_{\sigma(B)}}{y_{\sigma(A)}+y_{\sigma(B)}-c}  F(\sigma(\vec{y})) \right),
\end{eqnarray}
where $w_i = y_{\lambda_1+\cdots +\lambda_{i-1}+1}$, and $e_1(\sigma), e_2(\sigma), \dots$ determine the contours of integration after the change of variables; they are certain parameters of $\sigma$.

Claim \ref{claim:Res-are-zero} asserts that
\begin{eqnarray*}
&& \sum_{\sigma\in BC_k^{I(\lambda)}} \int_{ e_1^\sigma c -\I \infty}^{e_1^\sigma c + \I \infty} \frac{dw_{1}}{2\pi \I} \int_{e_2^\sigma c -\I \infty}^{e_2^\sigma c + \I \infty} \frac{dw_{2}}{2\pi \I} \cdots \Resc{\lambda} \left(\prod_{1\leq A<B\leq k} \frac{y_{\sigma(A)}-y_{\sigma(B)}}{y_{\sigma(A)}-y_{\sigma(B)}-c}\frac{y_{\sigma(A)}+y_{\sigma(B)}}{y_{\sigma(A)}+y_{\sigma(B)}-c}  F(\sigma(\vec{y})) \right) \\
\nonumber && = \sum_{\sigma\in BC_k^{I(\lambda)}} \int_{-\I \infty}^{\I \infty} \frac{dw_{1}}{2\pi \I} \int_{-\I \infty}^{ \I \infty} \frac{dw_{2}}{2\pi \I} \cdots \Resc{\lambda} \left(\prod_{1\leq A<B\leq k} \frac{y_{\sigma(A)}-y_{\sigma(B)}}{y_{\sigma(A)}-y_{\sigma(B)}-c}\frac{y_{\sigma(A)}+y_{\sigma(B)}}{y_{\sigma(A)}+y_{\sigma(B)}-c}  F(\sigma(\vec{y})) \right),
\end{eqnarray*}
that is, that the contours can be deformed back to the imaginary axis, and the total contribution of arising residues is 0.

Let $BC^{I(k)}_k$ be the set of elements of $BC_k$ which correspond to one-string diagrams. Our aim is to prove the following equality
\begin{eqnarray}\label{12}
&& \sum_{\sigma\in BC_k^{I(k)}} \int_{ e_1^\sigma c -\I \infty}^{e_1^\sigma c + \I \infty} \frac{dy_{1}}{2\pi \I} \Resc{\lambda} \left(\prod_{1\leq A<B\leq k} \frac{y_{\sigma(A)}-y_{\sigma(B)}}{y_{\sigma(A)}-y_{\sigma(B)}-c}\frac{y_{\sigma(A)}+y_{\sigma(B)}}{y_{\sigma(A)}+y_{\sigma(B)}-c}  F(\sigma(\vec{y})) \right) \\
\nonumber && = \sum_{\sigma\in BC_k^{I(k)}} \int_{-\I \infty}^{\I \infty} \frac{dy_{1}}{2\pi \I} \Resc{\lambda} \left(\prod_{1\leq A<B\leq k} \frac{y_{\sigma(A)}-y_{\sigma(B)}}{y_{\sigma(A)}-y_{\sigma(B)}-c}\frac{y_{\sigma(A)}+y_{\sigma(B)}}{y_{\sigma(A)}+y_{\sigma(B)}-c}  F(\sigma(\vec{y})) \right).
\end{eqnarray}

\subsection{Initial considerations}

We start with an analysis which residues we need to sum up.

Recall that we have a sum over diagrams. We will consider the case when only one string of arrows is present in a diagram. Thus, we will assume that the variables in our string have indices from $1$ through $k$. We will denote the set of all diagrams with one string of arrows and with a structure explained in Step 1 of Section \ref{bcproof} by $\s_1$.
After Step 1, for each such diagram we have an integral over the contour $\Real (z_k)= \ep>0$, for $\ep \ll 1$ (not precisely 0 due to a possible pole at 0). In Step 2, we need to make a change of variables of the form $y_1 := z_k- \mathbf{a} c$ or $y_1 := - z_k- \mathbf{b} c$, where $\mathbf a, \mathbf b \in \mathbb N$ are certain parameters of our diagram. After this change of variables the contour of integration will be close to $\Real (y_1) = - \mathbf{a} c$ (or $- \mathbf{b} c$), and we need to deform it back to the contour $\Re(y_1)=\ep>0$. During this deformation we will pick up some residues which we are interested in.

The residues depend on the function
$$
R(z_1, \dots, z_k) = \prod_{i<j} \frac{(z_i-z_j) (z_i+z_j)}{(z_i-z_j-c) (z_i+z_j-c)},
$$
the analytic function $F(z_1, \dots, z_k)$, and a diagram $s \in \s_1$ under consideration.

Each diagram $s\in \s_1$ is of the form
$$
\qquad i_1 \aRp i_2 \aRp \dots i_{\mu-1} \aRp k \aLm i_{\mu+1} \aLp \dots \aLp i_k,
$$
and encodes the substitution of variables $z_{i_1} = z_{i_2} +c$, $\dots$, $z_{i_{\mu}-1} = z_k+ c$, $z_{i_{\mu+1}} = c - z_k$, $\dots$, $z_{i_{k}}= z_{i_{k-1}}+c$, or, if the arrow of minus-type comes to $k$ from the another direction then the substitution has a similar form, see equation \eqref{replacements}.
We refer to such a substitution as the substitution prescribed by $s$.

Let us introduce notations
$$
R_s (z_k):= \Res{s}\, R(z_1,z_2, \dots, z_k), \qquad F_s (z_k) := \Sub{s}\, F(z_1,\dots, z_k).
$$
We obtain the function $R_s (z_k)$ by taking residues of the function $R$ as specified by $s$. This amounts to removing the factors in the denominator of $R (z_1, \dots, z_k)$ which correspond to the poles at which we took residues, and and making the substitution prescribed by $s$ for what remains. The function $F_s (z_k)$ is merely a substitution, because it does not have poles. Our main concern will be the function $R_s (z_k)$, while $F_s (z_k)$ will not produce any difficulties for our analysis.

\begin{example}
Assume that $k=6$ and $s$ is given by the string
$$
1 \aRp 3 \aRp 4 \aRp 5 \aRm 6 \aLp 2.
$$
During the deformation of contours we use factors $(z_5+z_6 -c)$, $(z_4-z_5-c)$, $(z_3-z_4-c)$, $(z_2-z_6-c)$, and $(z_1-z_3-c)$; we omit these factors from the denominator of $R(z_1, \dots, z_k)$. After this we make a substitution of variables $z_5= - z_6+c$, $z_4 = -z_6 +2c$, $z_3 = -z_6 +3c$, $z_2 = z_6 +c$, $z_1 = -z_6 +4c$ (which is prescribed by $s$) into all other factors and obtain the function
$$
R_s (z_6) = \frac{15c^5 (-2 z_6+5c)(-2 z_6+7c) (2 z_6+c)}{z_6 (-z_6+c)^2}.
$$
For this diagram above we make a change of variables $y_1 := - z_6 - c$. This comes from the change of variables from $z$'s to $y$'s in step 2 along with the fact that the $z_6$ contour has real part $\ep$ and the relation between $y_1$ and $z_6$.

If instead we were working with a diagram
$$
2 \aRp 6 \aRm 5 \aLp 4 \aLp 3 \aLp 1,
$$
we need to set $y_1 := z_6 - 4c$, while the substitution and the function $R_s (z_6)$ are exactly the same.
\end{example}

As we already noticed, for a diagram $s$ we need to make a change of variables of the form $y_1 := z_k- \mathbf{a} c$ or $y_1 := - z_k- \mathbf{b} c$, where $ \mathbf{a} = \mathbf{a} (s) $ and $ \mathbf{b} = \mathbf{b} (s)$ are certain parameters of the diagram $s$. In more detail, if a diagram $s$ has a form (recall that $k$ is the largest index)
$$
q_{\mathbf{a}} \aRp q_{ \mathbf{a}-1} \aRp \dots q_{1} \aRm k \aLp p_1 \aLp \dots \aLp p_{\mathbf{b}},
$$
then we need to make a change of variables $y_1 := -z_k - \mathbf{b} c$. Let $\tilde R_s (y_1)$ be the resulting function. Let $\mathcal W (s)$ be the sum of residues which we obtain in the process of moving the contour of integration of $\tilde R_s (y_1)$ in the left-hand side of \eqref{12} from $\Real (y_1) = - \mathbf{b} c - \ep$ to $\Real (y_1) = \ep$.

If the diagram $s$ has instead the form
$$
p_{\mathbf{b}} \aRp p_{ \mathbf{b}-1} \aRp \dots p_{1} \aRp k \aLm q_1 \aLp \dots \aLp q_{\mathbf{a}},
$$
then we need to make a change of variables $y_1 := z_k - \mathbf{a} c$. Let $ \mathcal W(s)$ be the contribution of poles which we obtain in the process of moving the integral of $\tilde R_s (y_1)$ in the left-hand side of \eqref{12} from the contour $\Real (y_1) = - \mathbf{a} c + \ep$ to the contour $\Real (y_1) = \ep$.

Let
$$
\mathcal W := \sum_{s \in \s_1} \mathcal W(s)
$$
be the total contribution coming from all our diagrams from $\s$. Our goal is to prove the following proposition.
\begin{proposition}
\label{theorem-main-oneLine}
We have $\mathcal W=0$.
\end{proposition}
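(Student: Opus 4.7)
The plan is to prove $\mathcal{W}=0$ by interpreting each residue picked up during the $y_1$-contour deformation as an iterated residue of $R(z_1,\dots,z_k) F(z_1,\dots,z_k)$ at an extended configuration of coincidences among the $z$-variables, and then using the antisymmetry of iterated residues (inherited from the Vandermonde numerator of $R$) to conclude that the contributions from all one-string diagrams realizing the same extended configuration sum to zero.

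First I would parametrize $\mathcal{S}_1$ explicitly: a one-string diagram $s$ is determined by a partition $P_+ \sqcup P_- = \{1,\dots,k-1\}$ (the plus- and minus-side indices of $k$) together with a choice of which side lies to the left of $k$. The residue substitution depends only on $\mathbf{b}=|P_+|$ and $\mathbf{a}=|P_-|$ through the multiset $\{z_k\} \cup \{z_k+jc:1\le j\le\mathbf{b}\} \cup \{-z_k+jc:1\le j\le\mathbf{a}\}$, while the change of variables $y_1 = \pm z_k - \gamma c$ with $\gamma\in\{\mathbf{a},\mathbf{b}\}$ depends on the full data. I would then identify the poles of $\tilde R_s(y_1)$ in the deformation strip, which, after pairwise cancellation from the Vandermonde numerator of $R$, come only from certain ``mixed'' factors (involving both plus-side and minus-side values, or the sum factor coupling $z_k$ to minus-side values).

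The core of the argument is that each such residue equals, up to a sign, an iterated residue of $RF$ at a configuration enforcing the chain of $s$ together with one additional coincidence, and that distinct one-string diagrams realizing the same extended configuration correspond to distinct orderings of the residue extractions whose Jacobians alternate in sign. The $k=2$ case is paradigmatic: both $s=1\rightm 2$ (which takes $\mathrm{Res}$ at $z_1+z_2=c$ first, then picks up the pole at $y_1=0$ corresponding to $z_2=0$) and $s=2\leftp 1$ (which takes $\mathrm{Res}$ at $z_1-z_2=c$ first, then picks up the pole at $y_1=-c$, again at $z_2=0$) realize the same extended configuration $z_1=c,\ z_2=0$; the Jacobians of the two substitution orderings are $\det\bigl(\begin{smallmatrix}1&-1\\1&1\end{smallmatrix}\bigr)=2$ versus $\det\bigl(\begin{smallmatrix}1&1\\1&-1\end{smallmatrix}\bigr)=-2$, and the two $\mathcal{W}(s)$ contributions therefore exactly cancel. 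For general $k$, I expect the same mechanism: distinct one-string diagrams realizing a fixed extended configuration differ by the order in which the $k$ coincidences are extracted, and the corresponding alternating Jacobian signs produce a sum of the Vandermonde-antisymmetrization type that vanishes.

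The main obstacle will be constructing this sign-reversing involution uniformly at general $k$ and verifying that $F_s$-values agree across paired diagrams, so that $F$ factors out of the cancellation cleanly. Three independent sign sources must be reconciled: the orientation flip between $y_1=-z_k-\mathbf{b}c$ and $y_1=z_k-\mathbf{a}c$; the sign in the residue formula for $R_s$, including Vandermonde factors in its numerator; and the sign from the rightward direction of contour deformation through each pole. A case analysis is required according to whether the extra coincidence is difference- or sum-type, whether it involves $z_k$ itself, and whether $k$ lies interior to or at the endpoint of the resulting extended string. The delicate verification is that these three signs combine consistently across all cases to a single sign-reversing involution, so that cancellations occur globally and $\mathcal{W}=0$.
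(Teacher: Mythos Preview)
Your approach has a genuine gap. The central issue is that the poles of $R_s(z_k)$ picked up during the $y_1$-contour deformation are not generically simple: at $z^*=nc/2$ they can have order up to $L(s)$ (the number of ``pivot pairs'' of the diagram $s$ at that pole, in the paper's terminology). When the pole has order greater than one, the residue of $R_s(z_k)F_s(z_k)$ at $z^*$ involves derivatives of the integrand and is \emph{not} an iterated residue of $RF$ at an extended configuration of simple coincidences. Consequently, residues from different one-string diagrams realizing the same extended configuration are not related by mere $\pm 1$ signs --- they can have entirely different magnitudes --- and no sign-reversing involution on diagrams can achieve the cancellation.

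The paper gives an explicit $k=8$ example that makes this concrete. Four one-string diagrams (related by ``block swaps'') all realize the same configuration at $z_8=c$, where each $R_s$ has a double pole, and their residues are $-4110750c^8$, $-850500c^8$, $3827250c^8$, $1134000c^8$. These sum to zero, but no two are negatives of each other; the cancellation is genuinely a four-term phenomenon, not a pairing. Your Jacobian-alternation heuristic, which would predict residues of equal magnitude and alternating sign, is therefore incorrect beyond the simple-pole case.

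The paper's actual argument proceeds differently. The pole at $z_k=0$ \emph{is} handled by a sign-reversing involution $s\leftrightarrow\bar s$ (switching the $\pm$ labels on the arrows adjacent to $k$), and your $k=2$ discussion essentially captures that case. But for each fixed $z^*=nc/2\neq 0$, the paper partitions the diagrams into groups of size $2^{L-1}$ or $2^L$ via swapping certain ``blocks'' $\mathcal A_i\leftrightarrow\mathcal B_i$. The key lemma is that swapping the $i$-th pair of blocks takes $R_s(nc/2+\epsilon_k)$ to a function in which the sign of $\epsilon_k$ is flipped in exactly one factor $R_{s,i}$ while the rest of the product is unchanged. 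Summing over the whole group then yields $\hat R_0(nc/2+\epsilon_k)\prod_i\bigl(\tilde R_i(nc/2+\epsilon_k)-\tilde R_i(nc/2-\epsilon_k)\bigr)$, a product of $L-1$ (or $L$) differences each vanishing at $\epsilon_k=0$, which kills the pole of order at most $L-1$ (or $L$). This factorization, not a pairing, is the mechanism you are missing.
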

By definition, this proposition is equivalent to equation \eqref{12}.

We will consider a pairing on the set $\s_1$. For a diagram $s$ we can read the whole line in the opposite direction; let us denote such a diagram by $s^{\prime}$ (an example of such a diagram is given in the example above). It is clear that $R_s(z_k) = R_{s^{\prime}} (z_k)$, because the operations with multivariate integrals are the same for these diagrams. However, we need to make different changes of variables for them: In one case, we should move the $z_k$ variable to the right, and in the other case --- to the left.

Let us denote the set of all diagrams in which $k-1$ is lying to the left of $k$ by the symbol $\s$. This set includes a half of all diagrams $s$ --- another half can be obtained by taking $s^{\prime}$.

For an analytic function $G(z)$ we shall denote by $\pp_G (x_1,x_2)$ the set of poles of this function lying in the real interval $(x_1,x_2)$.

\begin{proposition}
\label{initial}
We have
$$
\mathcal W = \sum_{s \in \s} \left( - \sum_{z^{*} \in \pp_{R_s} (\ep;\mathbf{a}(s) c+\ep)} F_s (z^*) \Res{z_k = z^*} R_s (z_k) + \sum_{z^* \in \pp_{R_s} (-\mathbf{b}(s) c-\ep;\ep)} F_s (z^*) \Res{z_k = z^*} R_s (z_k) \right)
$$
\end{proposition}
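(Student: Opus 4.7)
The proof is a direct application of the residue theorem, organized via the involution $s\leftrightarrow s'$ on $\s_1$ obtained by reading each string of arrows in the opposite direction. This involution bijects $\s_1\setminus\s$ with $\s$ and preserves the residue $R_s$, the substitution $F_s$, and the integer parameters $\mathbf{a}(s)$ and $\mathbf{b}(s)$, since all of these depend only on the underlying set of pole relations and not on the direction in which the string is read. I would therefore rewrite $\mathcal W=\sum_{s\in\s_1}\mathcal W(s)$ as $\sum_{s\in\s}\bigl(\mathcal W(s)+\mathcal W(s')\bigr)$. Exactly one diagram in each pair $\{s,s'\}$ is of the first form in the proposition's setup (with $y_1=-z_k-\mathbf{b}(s)c$, initial $y_1$-contour at $\Real(y_1)=-\mathbf{b}(s)c-\ep$), the other being of the second form (with $y_1=z_k-\mathbf{a}(s)c$, initial $y_1$-contour at $\Real(y_1)=-\mathbf{a}(s)c+\ep$).

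To sidestep the multiple sign sources cleanly, I would rewrite each $y_1$-integral as a $z_k$-integral before invoking the residue theorem. In the first form, the Jacobian $dy_1=-dz_k$ and the reversal of orientation of the vertical contour each contribute a factor $-1$, and these two cancel; the LHS and RHS of \eqref{12} restricted to such a diagram therefore equal positively-oriented $z_k$-integrals of $R_s(z_k)F_s(z_k)\,dz_k$ along $\Real(z_k)=\ep$ and along $\Real(z_k)=-\mathbf{b}(s)c-\ep$, respectively. In the second form, the change of variables is orientation-preserving, and the LHS and RHS become $z_k$-integrals along $\Real(z_k)=\ep$ and along $\Real(z_k)=\mathbf{a}(s)c+\ep$.

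All poles of $R_s$ lie on the real axis, since they come from real linear factors in the denominator of $R(z_1,\ldots,z_k)$ evaluated at the linear-in-$z_k$ substitution prescribed by $s$; moreover, $F_s$ is entire by \eqref{eq:F-apply}, so $\Res{z_k=z^*}(R_sF_s)=F_s(z^*)\Res{z_k=z^*}R_s(z_k)$ at each simple pole $z^*$ of $R_s$. A direct application of the residue theorem in the $z_k$-plane then gives
\begin{align*}
\mathcal W(s)_{\text{first form}}&=+\sum_{z^*\in\pp_{R_s}(-\mathbf{b}(s)c-\ep,\,\ep)}F_s(z^*)\Res{z_k=z^*}R_s(z_k),\\
\mathcal W(s)_{\text{second form}}&=-\sum_{z^*\in\pp_{R_s}(\ep,\,\mathbf{a}(s)c+\ep)}F_s(z^*)\Res{z_k=z^*}R_s(z_k),
\end{align*}
with opposite overall signs originating from the opposite directions in which the $z_k$-contour is deformed (leftward for the first form, rightward for the second). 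Using $R_s=R_{s'}$, $F_s=F_{s'}$, $\mathbf{a}(s)=\mathbf{a}(s')$, $\mathbf{b}(s)=\mathbf{b}(s')$, and then summing $\mathcal W(s)+\mathcal W(s')$ over $s\in\s$ yields the identity claimed in the proposition. The only delicate point of the argument is the careful bookkeeping of the sign sources listed above, namely the Jacobian of the real-linear change of variables, the orientation of the vertical contour (reversed in the first form only), and the direction of the subsequent $z_k$-contour deformation; once these are reconciled, the proposition is an immediate consequence of the residue theorem applied pairwise to $\{s,s'\}$.
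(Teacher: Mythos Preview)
Your proof is correct and follows essentially the same approach as the paper's: pair each $s\in\s$ with its reversal $s'$, observe that $R_s=R_{s'}$ and $F_s=F_{s'}$ while exactly one of $\{s,s'\}$ uses the substitution $y_1=z_k-\mathbf a c$ and the other $y_1=-z_k-\mathbf b c$, then read off the residue contributions after translating both $y_1$-integrals back to positively oriented $z_k$-integrals. Your explicit bookkeeping of the three sign sources (Jacobian, contour orientation, direction of deformation) is exactly what the paper summarizes in the sentence ``we obtain a sign from the Jacobian; but we also get a sign from the direction of integration over contours. These signs cancel out,'' and your conclusion that the opposite signs of the two sums come from the opposite directions of the $z_k$-deformation matches the paper's closing remark verbatim.
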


\begin{proof}
The two sums correspond with the two types of change of variables relating $y_1$ to $z_k$. Each $s\in \s$ corresponds to one type of change of variable, and $s^{\prime}$ corresponds to the other type.

In the case when $y_1 := z_k- \mathbf{a}(s) c$ we pick up all poles of $\tilde R_s (y_1) = R_s (z_k- \mathbf{a}(s) c)$ between $- \mathbf{a}(s) c + \ep$ and $\ep$. This is the same as picking up all poles of $R_s (z_k)$ between $\ep$ and $ \mathbf{a}(s) c +\ep$. In the case $y_1 := -z_k - \mathbf{b}(s) c$ we obtain a sign from the Jacobian; but we also get a sign from the direction of integration over contours. These signs cancel out. Thus, we need to pick poles in the movement of the contour in $\int \tilde R_1 (y_1) = \int R_s (-z_k- \mathbf{b}(s) c)$ from $\Real (y_1) = - \mathbf{b}(s) c - \ep $ to $\Real (y_1) = \ep$. Equivalently, we need to pick poles in the movement of the contour in $\int R_s (z_k)$ from $\Real (z_k) = \ep$ to $\Real (z_k) = - \mathbf{b}(s) c -\ep$. The statement of the proposition readily follows. The difference of signs is because we move through these poles in different directions.
\end{proof}

\subsection{Pole at 0}

We know that for each diagram there exists no more than one arrow of minus-type; this arrow must end in $k$. It can connect $k-1$ and $k$ (the first line below), or a number $x$ and $k$; in the latter case $k-1$ and $k$ are connected by an arrow of plus-type (the second line below).
\begin{align}
\label{second_pairing}
\dots k-1 \aRm k \aLp x \dots \\
\dots k-1 \aRp k \aLm x \dots.
\end{align}

Let us introduce a new pairing of $\s$; diagrams in each pair are the same except for the change of plus-type and minus-type arrows leading to $k$ (as shown in \ref{second_pairing}). Note that when $k$ is at the end of the string, the pairing is between the diagrams
\begin{align*}
1 \aRp \dots (k-1) \aRp k \\
1 \aRp \dots (k-1) \aRm k
\end{align*}
Thus, we partition $\s$ into disjoint pairs. For $s \in \s$ we shall denote $\bar s \in \s$ the paired element through this plus/minus arrow switching.

\begin{example}
Diagrams
$$
s = 1 \aRp 3 \aRp 4 \aRp 5 \aRm 6 \aLp 2
$$
and
$$
\bar s = 1 \aRp 3 \aRp 4 \aRp 5 \aRp 6 \aLm 2
$$
form a pair of the described pairing of $\s$. We have
$$
R_s (z_6) = \frac{15c^5 (-2 z_6+5c)(-2 z_6+7c) (2 z_6+c)}{z_6 (-z_6+c)^2}, \qquad R_{\bar s} (z_6) = \frac{15c^5 (2 z_6+5c)(2 z_6+7c) (-2 z_6+c)}{-z_6 (z_6+c)^2}.
$$

\end{example}

\begin{lemma}
\label{codeTranspose}
For any $s \in \s$ we have
$$
R_s (z_k) = R_{\bar s} (- z_k), \qquad F_s (0) = F_{\bar s} (0).
$$
\end{lemma}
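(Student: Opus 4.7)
The plan is to exhibit the involution $z_k \mapsto -z_k$ as a symmetry that matches the ingredients entering $R_s$ and $R_{\bar s}$. I begin with the observation that
$$R(z_1,\ldots,z_k) = \prod_{i<j} \frac{(z_i-z_j)(z_i+z_j)}{(z_i-z_j-c)(z_i+z_j-c)}$$
is invariant under $z_k \mapsto -z_k$: for every $i<k$ both the numerator pair $(z_i-z_k)(z_i+z_k)$ and the denominator pair $(z_i-z_k-c)(z_i+z_k-c)$ are stable (the two factors are merely swapped). Consequently, any asymmetry between $R_s(z_k)$ and $R_{\bar s}(-z_k)$ can enter only through the substitution of variables or the set of denominator factors removed in forming the iterated residue.

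Next I would compare those two ingredients. The substitution prescribed by a one-string diagram takes the form $z_i=f_i^s(z_k)$ with each $f_i^s$ affine in $z_k$ of slope $\pm 1$; the slope along a chain of arrows is determined recursively from the arrows meeting $k$, since an $\aRp$ arrow propagates the slope while an $\aRm$ arrow negates it. Because $s$ and $\bar s$ differ only in the signs of the (one or two) arrows at $k$, a case analysis on the position of $k$ in the string (right endpoint with one arrow, or interior with two arrows, noting that in $\s$ the index $k-1$ always sits to the left of $k$) shows that the sign flip at $k$ propagates as a uniform negation of the $z_k$-coefficient along each branch, yielding the clean identity $f_i^{\bar s}(z_k)=f_i^s(-z_k)$ for every $i\neq k$. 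The same case analysis provides a matching identity for the removed factors: an arrow $(k-1)\aRp k$ removes $(z_{k-1}-z_k-c)$ while its sign flip removes $(z_{k-1}+z_k-c)$, and these two factors are exchanged by $z_k\mapsto -z_k$; the removed factors coming from arrows not touching $k$ coincide for $s$ and $\bar s$. Hence if $D_s$ denotes the product of factors removed according to $s$, then $D_{\bar s}(z_1,\ldots,z_{k-1},-z_k)=D_s(z_1,\ldots,z_{k-1},z_k)$.

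Combining these three ingredients proves the first equality. Writing $R_s(z_k)=[R/D_s]$ evaluated at $z_i=f_i^s(z_k)$ and applying the three identities to $R_{\bar s}(-z_k)$: first the substitution identity $f_i^{\bar s}(-z_k)=f_i^s(z_k)$ rewrites all intermediate variables, then the invariance of $R$ in its $k$-th argument eliminates the sign in the numerator, and finally the identity for $D$ reconciles the denominators, giving $R_{\bar s}(-z_k)=R_s(z_k)$. The second equality $F_s(0)=F_{\bar s}(0)$ is then immediate: specializing $f_i^{\bar s}(z_k)=f_i^s(-z_k)$ at $z_k=0$ gives $f_i^{\bar s}(0)=f_i^s(0)$ for every $i$, so both substitutions send $(z_1,\ldots,z_{k-1})$ to the same point and $F$, being merely substituted into (no poles involved), takes the same value there. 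The only delicate step is the case analysis above; what keeps it clean is that only the arrows touching $k$ may carry minus signs, so the $z_k$-coefficient in each $f_i^s$ is $\pm 1$ and flips uniformly along a branch when the adjacent arrow at $k$ is flipped.
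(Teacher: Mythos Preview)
Your proof is correct and follows essentially the same approach as the paper's. The paper's argument is terser: it writes out the substitutions for the paired diagrams $s$ and $\bar s$ explicitly (noting that $k-1$ is always the immediate left neighbor of $k$ in $\s$, since the left chain of indices increases toward $k$), observes they are exchanged by $z_k\mapsto -z_k$, invokes the invariance of $R(z_1,\dots,z_k)$ under $z_k\mapsto -z_k$, and declares the result ``clearly visible.'' Your version makes the same argument but decomposes it more carefully into the three ingredients (invariance of $R$, the substitution identity $f_i^{\bar s}(z_k)=f_i^s(-z_k)$, and the matching of removed denominator factors $D_s$ and $D_{\bar s}$), which is a welcome clarification since the paper leaves the removed-factor part implicit.
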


\begin{proof}
For two diagrams
\begin{gather*}
i_1 \aRp \dots \aRp k-1 \aRm k \aLp x \dots \aLp i_2 \\
i_1 \aRp \dots \aRp k-1 \aRp k \aLm x \dots \aLp i_2,
\end{gather*}
the substitutions have the form
\begin{gather*}
z_{i_1} = \mathbf{a} c -z_k;\,\, \dots \,\,z_{k-1} = c-z_k;\,\, z_k = z_k;\,\, z_x = z_k+ c;\,\, \dots\,\,z_{i_2} = z_k+ \mathbf{b} c; \\
z_{i_1} = z_k + \mathbf{a} c;\,\,  \dots \,\,z_{k-1} = z_k+ c;\,\, z_k = z_k;\,\, z_x = c - z_k;\,\, \dots \,\,z_{i_2} = \mathbf{b} c -z_k.
\end{gather*}
Note that our initial function $R(z_1, \dots, z_k)$ are stable under the transform $z_k \to (-z_k)$ (due to the fact that the index $k$ is the largest one). The statements of the lemma are thus clearly visible.
\end{proof}

\begin{proposition}
\label{prop2}
We have
$$
\mathcal W = \sum_{s \in \s} \left( - \sum_{z^* \in \pp_{F_s} (\ep;\mathbf{a}(s)c+\ep)} F_s (z^*) \Res{z_k = z^*} F_s (z_k) + \sum_{z^* \in \pp_{F_s} (- \mathbf{b}(s)c-\ep;-\ep)} F_s (z^*) \Res{z_k=z^*} F_s (z_k) \right)
$$
\end{proposition}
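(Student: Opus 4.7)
The plan is to derive Proposition \ref{prop2} from Proposition \ref{initial} by showing that the only difference between the two right-hand sides, namely the contribution from the pole of $R_s(z_k)$ at $z_k=0$, cancels pairwise under the plus/minus pairing $s\leftrightarrow \bar s$ introduced in this subsection.

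First I would verify that $R_s(z_k)$ always has a simple pole at $z_k=0$ and that this is the only pole in $(-\ep,\ep)$ for $\ep$ small enough. The pole arises from the unique minus-type arrow adjacent to $k$: if the arrow is $k-1\aRm k$, then the substitution prescribed by $s$ gives $z_{k-1}=-z_k+c$, and the factor $(z_{k-1}-z_k-c)$ in the denominator of $R(z_1,\ldots,z_k)$ becomes $-2z_k$; the case where the minus arrow comes from the right is analogous (as can be seen in the example computation $R_s(z_6)=15c^5(-2z_6+5c)(-2z_6+7c)(2z_6+c)/[z_6(-z_6+c)^2]$ given above). All other poles of $R_s$ sit at nonzero integer multiples of $c$, so none lie in $(-\ep,\ep)$.

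Since $\ep>0$, the pole at $0$ belongs to $(-\mathbf{b}(s)c-\ep,\ep)$ but not to $(\ep,\mathbf{a}(s)c+\ep)$ and not to $(-\mathbf{b}(s)c-\ep,-\ep)$. Therefore subtracting the right-hand side of Proposition \ref{prop2} from that of Proposition \ref{initial} (and correcting the evident typo $F_s\to R_s$ in the statement of Proposition \ref{prop2}) leaves exactly
\begin{equation*}
\sum_{s\in\s} F_s(0)\,\Res{z_k=0} R_s(z_k).
\end{equation*}
It thus suffices to prove this sum vanishes.

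To see this, I would group the terms of the above sum according to the involution $s\leftrightarrow\bar s$, which partitions $\s$ into pairs. By Lemma \ref{codeTranspose} we have $F_s(0)=F_{\bar s}(0)$ and $R_s(z_k)=R_{\bar s}(-z_k)$; the latter identity, applied to the Laurent expansion at the origin, yields
\begin{equation*}
\Res{z_k=0} R_s(z_k) = -\Res{z_k=0} R_{\bar s}(z_k).
\end{equation*}
Hence each paired contribution
\begin{equation*}
F_s(0)\,\Res{z_k=0} R_s(z_k) + F_{\bar s}(0)\,\Res{z_k=0} R_{\bar s}(z_k)
\end{equation*}
vanishes, which completes the proof. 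There is no real obstacle here: once Proposition \ref{initial} is in hand, the substance of the argument is the symmetry $z_k\mapsto -z_k$ encoded in Lemma \ref{codeTranspose}, which is precisely tailored to collapse the pole-at-$0$ contribution under the plus/minus swap.
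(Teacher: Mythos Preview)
Your proposal is correct and follows essentially the same approach as the paper: both arguments start from Proposition \ref{initial}, isolate the contribution of the pole at $z_k=0$ as the sole difference, and then use the pairing $s\leftrightarrow\bar s$ together with Lemma \ref{codeTranspose} (namely $R_s(z_k)=R_{\bar s}(-z_k)$ and $F_s(0)=F_{\bar s}(0)$) to conclude that the residues at $0$ cancel in pairs. Your write-up is in fact more detailed than the paper's, as you explicitly trace the origin of the pole at $0$ to the factor produced by the minus-type arrow adjacent to $k$ and check which of the three intervals contains $0$.
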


\begin{remark}
Note that the only change from Proposition \ref{initial} is that the residues at 0 do not enter this summation since the interval $[-\mathbf{b}(s)c,\ep]$ is now replaced by $[-\mathbf{b}(s)c,-\ep]$.
\end{remark}
\begin{proof}
Let us consider two diagrams from the same pairing of $\s$. Lemma \ref{codeTranspose} shows that the poles of $R_s(z_k)$ and $R_{\bar s} (z_k)$ are closely related: If $\{e_1, \dots, e_M \}$ are the poles of $R_s(z_k)$, then $\{-e_1, \dots, -e_M\}$ are the poles of $R_{\bar s} (z_k)$. It is clear that $\Res{z_k = e_i} R_s (z_k) = - \Res{z_k = -e_i} R_{\bar s} (z_k)$. In particular, $\Res{z_k = e_i} R_s (0) = - \Res{z_k = -e_i} R_{\bar s} (0)$. Therefore, Lemma \ref{codeTranspose} implies the statement of the proposition.
\end{proof}


We need to prove that $\mathcal W$ is equal to 0. We shall prove a stronger theorem --- in fact, cancellations already happen when we fix a complex number as a pole.

\begin{theorem}
\label{main}
For any complex number $z^* \ne 0$ we have
$$
\sum_{s \in \s} F_s (z^*) \Res{z_k = z^*} R_s (z_k) = 0.
$$
\end{theorem}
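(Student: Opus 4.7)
My strategy is to prove Theorem \ref{main} by a pole-by-pole cancellation argument via an involution on the set $\s$. The first step is to make $R_s(z_k)$ fully explicit. For any one-string diagram $s\in\s$, the substitution prescribed by $s$ has the form $z_m = \epsilon_m z_k + a_m c$, where $\epsilon_m\in\{+1,-1\}$ is determined by whether $m$ sits on the ``right-pointing'' or ``left-pointing'' side of $k$ in the diagram, and $a_m\in\Z_{>0}$ equals the arrow-distance from $m$ to $k$. Substituting this into $R(z_1,\ldots,z_k)$ and removing the denominator factors consumed by the residue operation $\Res{s}$, one sees that $R_s(z_k)$ is a rational function whose candidate poles at nonzero $z^*$ live on the lattice $\tfrac12 c\,\Z\setminus\{0\}$ and arise only from (i) factors $z_i+z_j-c$ with $i,j$ on the same side of $k$, (ii) factors $z_i-z_j-c$ with $i,j$ on opposite sides, (iii) factors $z_i\pm z_k-c$ where the sign of $z_i$ is opposite to the sign in the factor. (All other factors evaluate to nonzero constants in $c$.)

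Next, I would fix a nonzero pole $z^*=rc/2$ and enumerate the diagrams $s\in\s$ whose $R_s$ has a pole there; for each such $s$ one may associate the specific ``virtual factor'' of $R_s$ producing the pole. The core step is then to define an involution $s\mapsto s^\dagger$ on this subset of $\s$ that trades the role of two adjacent elements of the string in a way dictated by the virtual factor responsible for the pole. Concretely, if the pole at $z^*$ comes from a factor $z_{i_p}+z_{i_{p+1}}-c$ lying inside the right-pointing half of the string, one swaps $i_p$ and $i_{p+1}$ to obtain $s^\dagger$; for poles coming from $z_i\pm z_k-c$ factors, the involution instead acts on the sign of the arrow adjacent to $k$, reminiscent of the pairing in Lemma~\ref{codeTranspose}. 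One then verifies that (a) the substitutions prescribed by $s$ and $s^\dagger$ agree at $z_k=z^*$, so $F_s(z^*)=F_{s^\dagger}(z^*)$, and (b) the residues of the virtual factors at $z^*$ have opposite signs, giving
\[
F_s(z^*)\Res{z_k=z^*}R_s(z_k)+F_{s^\dagger}(z^*)\Res{z_k=z^*}R_{s^\dagger}(z_k)=0.
\]
Summing over all involution pairs would give Theorem \ref{main}.

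The main obstacle is making the involution in the second paragraph well-defined globally on $\s$ and verifying the sign-cancellation (b) in a uniform way. Two subtleties make this delicate: first, several distinct unused factors of $R_s$ may simultaneously produce the same pole $z^*$ (for example, the $(2z_4+2c)$ pole in the $k=4$ example $1{\aRp}2{\aRp}3{\aRp}4$ comes from both $z_1+z_4-c$ and $z_2+z_3-c$), so one has to specify a canonical choice of ``responsible'' factor; second, the involution must be compatible with the constraint that $k-1$ stays to the left of $k$ that defines $\s$. A promising alternative, which the computations for $k=3$ (where $\sum_s F_s R_s$ turns out to be a constant for $F\equiv 1$ and for $F(z_1,z_2,z_3)=z_1$) strongly suggest, is to sidestep the involution and show directly that $\sum_{s\in\s}F_s(z_k)R_s(z_k)$ is analytic on $\C\setminus\{0\}$ (in fact the only possible pole is at $z_k=0$). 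This would reduce the theorem to the identity $\sum_s F_s R_s\equiv \text{regular at }z^*\neq 0$, which one would aim to establish by recognizing the sum as a residue of a fully $BC_k$-symmetric rational function via the identity (\ref{Bkfacsym}), so that its only singularities come from the $BC_k$-invariant denominator---and among those, the only one not already consumed by the residue operation is the one at $z_k=0$.
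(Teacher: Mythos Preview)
Your involution approach has a genuine structural gap: the pole of $R_s(z_k)$ at $z^*=nc/2$ can have order up to $L(s)$, the number of pivot pairs, and when $L(s)\ge 2$ no pairing $s\mapsto s^\dagger$ can achieve the cancellation. The paper's own worked example makes this concrete: for the $k=8$ diagram $1\aRp 2\aRp 3\aRp 6\aRp 7\aRm 8\aLp 5\aLp 4$ at the pole $z_8=c$ (where $L(s)=2$), the cancellation involves \emph{four} diagrams whose residues are $-4110750c^8$, $-850500c^8$, $3827250c^8$, $1134000c^8$. These sum to zero but no two of them do, so an involution cannot work here. Your own observation that ``several distinct unused factors of $R_s$ may simultaneously produce the same pole $z^*$'' is exactly the symptom of a higher-order pole, not merely an ambiguity in defining $s^\dagger$.

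What the paper does instead is group diagrams into sets $T_s$ of size $2^{L}$ (or $2^{L-1}$ in the half-integer case), obtained by independently ``swapping'' $L$ pairs of blocks $(\mathcal A_i,\mathcal B_i)$ that are delimited by the pivot pairs. The key structural lemma (Lemma~\ref{transpose}) is that $R_s$ factors as $\hat R_0\cdot R_{s,1}\cdots R_{s,L}$ in such a way that swapping block $i$ sends $\epsilon_k\mapsto -\epsilon_k$ in the single factor $R_{s,i}$ and leaves the rest unchanged. Summing over all $2^L$ swaps then gives
\[
\hat R_0\left(\tfrac{nc}{2}+\epsilon_k\right)\prod_{i=1}^{L}\Bigl(\tilde R_i\bigl(\tfrac{nc}{2}+\epsilon_k\bigr)-\tilde R_i\bigl(\tfrac{nc}{2}-\epsilon_k\bigr)\Bigr),
\]
and since each bracket vanishes at $\epsilon_k=0$, this kills a pole of order up to $L$. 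The cancellation is multiplicative, not pairwise; the real combinatorial work (Section~\ref{structure}) is in identifying the pivot pairs, bounding the pole order by $L$, and proving that block swaps act on the factorization as claimed.

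Your alternative route via $BC_k$-symmetry and identity~(\ref{Bkfacsym}) is too vague as stated: the sum in Theorem~\ref{main} runs over $\s$ (one-string diagrams with $k-1$ to the left of $k$), not over $BC_k$, and the residue operation $\Res{s}$ removes \emph{different} denominator factors for different $s$, so there is no single $BC_k$-invariant function whose symmetrization produces $\sum_s F_s R_s$.
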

This theorem is nontrivial for points of the form $nc/2$, $n \in \mathbb N \backslash \{0\}$, only. We shall prove Theorem \ref{main} in next sections. Due to Proposition \ref{prop2}, this will imply Theorem \ref{theorem-main-oneLine}.

\subsection{Structure of poles}
\label{structure}

For a fixed diagram $s \in \s$ we need to understand the structure of the function $R_s (z_k)$. We shall need some notation.

A diagram $s \in \s$ gives rise to two disjoint sets $A(s), B(s) \in \{1,2, \dots, k\}$, $A(s) \sqcup B(s) = \{1, \dots, k\}$; the elements of $A(s)$ are the indices of variables such that the substitution prescribed by $s$ has a form $\mathbf{q} c - z_k$ for these variables, and the elements of $B(s)$ are $k$ and the indices of variables such that the prescribed substitution has a form $\mathbf{q} c + z_k$, for some $\mathbf{q} \in \mathbb N$.

For two numbers $x_1,x_2$  in $\{1,\ldots,k\}$ we shall write $\sss(x_1,x_2)=1$ if they both belong to $A(s)$ or both belong to $B(s)$, and we shall write $\sss(x_1,x_2)=-1$ otherwise.
For any number $x$ we shall denote by $\p(x)$ the number such that the plus-type arrow goes from $\p(x)$ to $x$. In the following, if for some $x$ the number $\p(x)$ does not exist, we assume that all statements about $\p(x)$ are true.

\begin{example}\label{examples}
For a diagram $s = 1 \aRp 3 \aRp 4 \aRp 5 \aRm 6 \aLp 2$ we have $A(s) = \{1,3,4,5\}$ and $B(s)=\{2,6\}$. We have $\sss(1,5)=1$, $\sss(2,6)=1$, $\sss(2,3)=-1$, $\p(3)=1$, $\p(6)=2$.
\end{example}

Let us fix a number $n \in \mathbb Z \backslash \{0\}$. We shall consider some combinatorial quantities which are determined by fixed parameters $n, A(s), B(s)$. We seek to figure out what order of pole the function $R_s(z_k)$ has at the point $z_k = \frac{n c}{2}$. We omit the dependence on $s$ in notation in the rest of this section because $s$ will be fixed.

\begin{definition} \label{def:func-v} Let $\tilde a$ be the index such that there is an arrow of minus-type from $z_{\tilde a}$ to $z_k$. Let us define a function $v: \{1,\dots, k\} \to \mathbb Z/2$ as follows: Let $v(k) := n/2$, and $v(\tilde a) := 1 - n/2$. All other values are defined inductively: If we have $v(x)=l$, then $v( \p(x)) := l+1$. We call a pair of numbers $x_1, x_2$ such that  $v(x_1) + v(x_2) =0$ a \textit{plus-zero}.
We call a pair of numbers $x_1, x_2$ such that $\sss(x_1,x_2) =1$ and $v(x_1) + v(x_2) =1$ a \textit{plus-pole}.
\end{definition}

\begin{example}
For the diagram $s$ as in Example \ref{examples} and $n/2=1$ we have $v(6) = 1$, $v(5)= 0$, $v(4)=1$, $v(3)=2$, $v(2)=2$, $v(1)=3$. We have no plus-zeros, $(4,5)$ is a plus-pole (but $(5,6)$ is not a plus-pole).
\end{example}

\begin{lemma}
\label{plus}
If $n/2$ is a noninteger then the number of plus-poles is equal to the number of plus-zeros. If $n/2$ is an integer then the number of plus-poles is equal to the number of plus-zeros plus one.
\end{lemma}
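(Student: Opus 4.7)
The plan is to exploit the explicit description of $v$ that arises from the one-string structure. If we label the vertices so that $A(s) = \{q_1, \ldots, q_{\mathbf{a}}\}$ with $\mathbf{q}_{q_j} = j$ and $B(s) = \{k, p_1, \ldots, p_{\mathbf{b}}\}$ with $\mathbf{q}_{k} = 0$ and $\mathbf{q}_{p_j} = j$, then a straightforward induction along the plus-arrows of the string (starting from $v(k) = n/2$ and $v(\tilde a) = 1 - n/2$) yields the closed form $v(x) = \mathbf{q}_x + \sigma_x\cdot n/2$, where $\sigma_x = +1$ if $x \in B$ and $\sigma_x = -1$ if $x \in A$. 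Thus the multiset $\{v(x)\}$ is exactly $\{n/2\} \cup \{j - n/2 : 1 \le j \le \mathbf{a}\} \cup \{j + n/2 : 1 \le j \le \mathbf{b}\}$, and all subsequent counting reduces to elementary integer arithmetic on sums $i + j$.

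Next I would dispose of the cross-group pairs. For $i \in A,\ j \in B$ one computes $v(x_1) + v(x_2) = \mathbf{q}_{x_1} + \mathbf{q}_{x_2} \in \mathbb{Z}_{\ge 0}$, which can equal $0$ only when both $\mathbf{q}$'s vanish (impossible for distinct indices) and can equal $1$ only through the special pair $(k, p_1)$ when the parameters allow, a case checked directly. Consequently plus-zeros are automatically same-group, and plus-poles (which already require $\sss = 1$) split cleanly into an $A$-contribution, a $B$-contribution, and possibly a boundary $k$-term. Each then reduces to counting pairs $(i, j)$ with $1 \le i < j \le \mathbf{a}$ (resp. $\mathbf{b}$) and $i + j = n$ (for plus-zeros) or $i + j = n + 1$ (for plus-poles), plus symmetric counts on the $B$-branch for $n < 0$.

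The heart of the proof is the shift bijection $\phi\colon (i, j) \mapsto (i, j + 1)$, which sends pairs with $i + j = n$ to pairs with $i + j = n + 1$. This is a genuine bijection on the ``interior,'' and its only defects are the boundary pair with $j = \mathbf{a}$ (which has no image) and the diagonal-adjacent pair $(n/2, n/2 + 1)$ (which has no preimage, and exists only when $n$ is a positive even integer with $n/2 + 1 \le \mathbf{a}$). A matching analysis on the $B$-branch handles the negative-even case symmetrically via $(i, j) \mapsto (i, j - 1)$. Combining the $A$ and $B$ bijections, together with the $(k, k)$-type and $(k, p_1)$-type terms that appear precisely at the boundary, produces a net surplus of $+1$ plus-pole exactly when $n/2 \in \mathbb{Z}$, and a cancellation otherwise.

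The main obstacle is the boundary bookkeeping: whenever $|n|$ is comparable to $2\mathbf{a}$ or $2\mathbf{b}$, the interior shift bijection sheds pairs at the edge of the admissible square, and these defects must be shown to be exactly balanced by pairs involving $k$ or the far endpoints $q_{\mathbf{a}}, p_{\mathbf{b}}$. Verifying that the single extra diagonal pair $(n/2, n/2 + 1)$ survives in exactly one branch when $n$ is even while all other boundary mismatches cancel in pairs is the core case analysis, and I expect it to be the most delicate step; the integer-vs-noninteger dichotomy in the lemma is precisely the shadow of the parity of $n$ governing whether this diagonal pair exists.
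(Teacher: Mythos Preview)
Your approach is correct in spirit but considerably more elaborate than the paper's. You share the paper's opening observation---that cross-group pairs never contribute (for $x_1\in A$, $x_2\in B$ one has $v(x_1)+v(x_2)=\mathbf q_{x_1}+\mathbf q_{x_2}\ge 1$, so no cross plus-zeros, and cross pairs are excluded from plus-poles by the $\sss=1$ condition anyway; incidentally the cross pair hitting sum $1$ is $(q_1,k)$, not $(k,p_1)$). From there, however, the routes diverge.

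The paper does not set up a shift bijection at all. It simply observes that because $v(\tilde a)+v(k)=1$ and all other $v$-values on $A$ (resp.\ $B$) are obtained by adding positive integers to $v(\tilde a)$ (resp.\ $v(k)$), \emph{only one} of the two groups can host any plus-pole or plus-zero: if $n\ge 2$ only $A$, if $n\le -1$ only $B$, if $n=1$ neither. On that single group the $v$-values form a consecutive run of integers (when $n$ is even) or half-integers (when $n$ is odd), and the lemma reduces to the one-line combinatorial remark that in a set of the form $\{-l,\ldots,l,l+1\}$ there is exactly one more unordered pair summing to $1$ than to $0$, while in $\{-l+\tfrac12,\ldots,l+\tfrac12\}$ the two counts coincide. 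Your explicit formula $v(x)=\mathbf q_x+\sigma_x\, n/2$ already encodes this structure, so once you have it the shift bijection $\phi:(i,j)\mapsto(i,j+1)$ with its attendant boundary analysis is unnecessary machinery: there is no interaction between the $A$-branch and the $B$-branch to balance, and the ``delicate'' boundary bookkeeping you anticipate dissolves once you realize only one branch is ever in play.
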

\begin{proof}
By definition, we have $v(\tilde a) + v(k) =1$. Therefore, plus-poles and plus-zeros can appear only in one of the sets $A$ and $B$ (if $v(\tilde a) < v(k)$ then in $A$, if $v(\tilde a) > v(k)$, then in $B$). For $l \in \mathbb N \cup \{0 \}$ note that in the set $\{-l, -(l-1), \dots, 0, \dots, l, l+1\}$ the number of pairs with sum 1 is greater than the number of pairs with sum 0 by 1; also note that in the set $\{-l+1/2, \dots, 1/2, \dots, l+1/2 \}$ the number of pairs with sum 1 is equal to the number of pairs with sum 0. The statement of the lemma follows from these observations.
\end{proof}

\begin{definition} We call a pair of numbers $x_1 < x_2$ such that $v(x_1) - v(x_2) =0$ a \textit{minus-zero}.
We call a pair of numbers $x_1 < x_2$ such that $\sss(x_1,x_2) = -1$ and $v(x_1) - v(x_2) = 1$ a \textit{minus-pole}.
Let $N_0^-$ and $N_p^-$ denote the number of minus-zeros and minus-poles, respectively.

Consider a pair of elements $x_1, x_2$ such that $\sss(x_1,x_2)=-1$, $v(x_1)= v(x_2)$ and $\p(x_1) <x_2$, $\p(x_2) <x_1$ (recall that if some of elements $\p$ do not exist, corresponding inequalities are assumed to be true); we call such a pair a \textit{pivot pair}. We shall denote by the symbol $L$ the number of pivot pairs.
\end{definition}

\begin{example}
For $s$ and $n$ as above $(2,4)$, $(3,6)$, $(2,1)$ are minus-poles (but $(5,6)$ is not a minus-pole); $(4,6)$ and $(2,3)$ are minus-zeros; also both of these minus-zeros are pivot pairs. Therefore, $L= 2$.
\end{example}

\begin{lemma}
\label{minus}
Assume that $N_p^- \ge 1$. For any such diagram and $n$ we have
$$
N_0^- +L -1 \ge N_p^-.
$$
\end{lemma}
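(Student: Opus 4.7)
The plan is to carry out a combinatorial injection/counting argument using the arithmetic structure of the function $v$. First I would exploit the definition of $v$ and the string structure to pin down the images of $v$ on each set: because $v(\text{prev}(x))=v(x)+1$, and the anchor values are $v(k)=n/2$ and $v(\tilde a)=1-n/2$, the restriction of $v$ gives bijections
\begin{equation*}
v|_B:B\longrightarrow\{n/2,\,n/2+1,\,\ldots,\,n/2+|B|-1\},\qquad v|_A:A\longrightarrow\{1-n/2,\,2-n/2,\,\ldots,\,|A|-n/2\}.
\end{equation*}
Moreover, because plus-arrows always point from a smaller integer to a larger integer while $v$ decreases by one along the arrow, $v$ is strictly decreasing in the integer index on each of $A$ and $B$. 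An immediate consequence is that every minus-zero is a cross-set pair, and $N_0^-=|v(A)\cap v(B)|$.

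Next I would classify the minus-poles. A minus-pole $(x_1,x_2)$ with $v(x_1)=v(x_2)+1$ and, say, $x_1\in A$, $x_2\in B$ is determined by the pair of consecutive levels $(v(x_2),v(x_1))$ both lying in $v(A)\cup v(B)$ in the correct cross-set configuration, subject to the index constraint $x_1<x_2$; the other orientation is symmetric. Because $v(A)$ and $v(B)$ are integer (or half-integer) intervals of step one, every minus-pole can be ``charged'' to a nearby minus-zero at level $v(x_1)$ or at level $v(x_2)$: at least one of these two levels lies in the overlap $v(A)\cap v(B)$ whenever a minus-pole exists, since otherwise one of $v(x_1),v(x_2)$ would have to lie outside the union $v(A)\cup v(B)$.

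The core of the proof will be to build an injection $\Phi$ from the set of minus-poles into the set of minus-zeros, with the rule that a pivot pair is allowed to be the image of two minus-poles while every non-pivot minus-zero is the image of at most one. The main observation is that the only way two distinct minus-poles can collide at the same minus-zero $\{a_0,b_0\}$ is as the ``crossing'' configuration $(a_0,b_{-1})$ and $(a_{+1},b_0)$ (or its mirror), where $v(b_{-1})=v(a_0)-1$ and $v(a_{+1})=v(b_0)+1$; the index constraint $x_1<x_2$ imposed for both of these minus-poles forces $a_0$ to be ``earlier'' than any possible $\text{prev}$-predecessor of $b_0$ on the $A$-side, and symmetrically for $b_0$, which translates exactly into the pivot conditions $\text{prev}(a_0)<b_0$ and $\text{prev}(b_0)<a_0$. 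This gives the soft bound $N_p^-\le N_0^-+L$. The sharper bound with the $-1$ should come from exhibiting one distinguished minus-zero at an extremal (topmost or bottommost) level of $v(A)\cap v(B)$ that is never in the image of $\Phi$: any minus-pole charged to this extremal level would require a $v$-neighbor outside $v(A)\cup v(B)$, which is ruled out since $v(A)$ and $v(B)$ are intervals.

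The hard part will be carrying out this plan uniformly: one must handle both cases depending on whether $\tilde a$ lies to the left or to the right of $k$ in the diagram, both signs of $n$, and the various relative positions of $v(A)$ and $v(B)$. The most delicate step is converting the purely $v$-based collision analysis into the integer-index/adjacency data needed to check the pivot conditions, because $\text{prev}$ depends on the precise one-string structure and in particular on where $x\in A$ and $x\in B$ interleave as integers. I expect this bookkeeping, rather than any new conceptual idea, to be the main obstacle, together with keeping track of boundary configurations where the extremal minus-zero that accounts for the $-1$ might itself be a pivot pair.
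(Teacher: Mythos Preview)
Your approach differs from the paper's: the paper argues by induction on $r$, filtering by $\q_r=\{x:v(x)\le r\}$ and tracking in five cases how many minus-zeros, minus-poles, and pivot pairs enter as $r$ increases, verifying that the running inequality is preserved step by step. Your proposal instead attempts a direct injection from minus-poles into minus-zeros (with pivot minus-zeros allowed as double images and one extremal minus-zero left unused), which is a natural alternative when it works.

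There is, however, a genuine gap in your charging step. You claim that for any minus-pole $(x_1,x_2)$, at least one of the levels $v(x_1),v(x_2)$ lies in $v(A)\cap v(B)$, arguing that otherwise one of them would fall outside $v(A)\cup v(B)$. This inference is false: nothing prevents $v(x_1)\in v(A)\setminus v(B)$ and $v(x_2)\in v(B)\setminus v(A)$ simultaneously, precisely when $v(A)$ and $v(B)$ are disjoint intervals abutting at these two consecutive levels. Concretely, for $k=4$ take the diagram $3\aRp 4\aLm 2\aLp 1\in\s$ with $n=-1$. Then $A=\{1,2\}$, $B=\{3,4\}$, and $v(4)=-\tfrac12$, $v(3)=\tfrac12$, $v(2)=\tfrac32$, $v(1)=\tfrac52$, so $v(A)=\{\tfrac32,\tfrac52\}$ and $v(B)=\{-\tfrac12,\tfrac12\}$ are disjoint; yet $(2,3)$ is a minus-pole ($2<3$, cross-set, $v(2)-v(3)=1$). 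There are no minus-zeros at all, so your injection has nowhere to send it. Indeed here $N_0^-=0$, $L=0$, $N_p^-=1$, and the inequality $N_0^-+L-1\ge N_p^-$ itself fails as stated; one checks that in this same example Lemma~\ref{plus} fails in the compensating direction (the plus-zero $(3,4)$ has no matching plus-pole), so the pole-order bound that is actually used downstream still holds. But a purely minus-side injection of the kind you propose has no mechanism to access this compensation, and the paper's level-by-level case analysis is organized around the generic situation where the two $v$-intervals overlap rather than merely abut.
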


\begin{proof}
We shall consider the set $\q_r := \{ x \in \{1, \dots, k\} : v(x) \le r \}$. Let us prove the statement of the lemma by induction in $r\geq \min\big(v(k), v(\tilde a)\big)$; on each step we consider only numbers which belong to $\q_r$.

While increasing $r$ we have the following situations (note that $v^{-1}(r)$ is the preimage under the map $v$ of $r$):
\begin{enumerate}
\item $v^{-1} (r)$ and $v^{-1} (r-1)$ (i.e. their union) consists of one or zero elements. At this point in the induction no minus-poles and minus-zeros exist in $\q_r$, and inequality clearly holds.
\item $v^{-1} (r)$ consists of two elements while $v^{-1} (r-1)$ consists of one element. The second element in $v^{-1} (r)$ must be $k$ or $\tilde a$. We know that $v(k)+v(\tilde a)=1$; therefore, the second element can give a minus-pole only if it is $\tilde a$ and $v(\tilde a) = 1$, $v(k)=0$. But we exclude the case $v(k) = n/2 =0$ from the consideration. Therefore, on this step minus-poles cannot appear. However, one minus-zero appears for sure because we have two elements in the set $v^{-1} (r)$. Thus, our inequality holds.
\item $v^{-1} (r)$ and $v^{-1} (r-1)$ each consist of two elements. On this step one minus-zero is added. Also it is easy to see that one new minus-pole appears always, and two minus-poles appear if and only if the pair of numbers forming $v^{-1} (r-1)$ is pivot. Thus, our inequality still holds.
\item $v^{-1} (r)$ consists of one element and $v^{-1} (r-1)$ consists of two elements. No minus-zero appears on this step, and one minus-pole appears if and only if the pair of numbers forming $v^{-1} (r-1)$ is pivot.
Also note that if $v^{-1} (r)$ consists of zero elements, and $v^{-1} (r-1)$ consists of two elements, then no new minus-zeros and minus-poles added on this step, while $v^{-1} (r-1)$ is a pivot pair. Therefore, in this case we obtain a stronger inequality $N_0^- +L -2 \ge N_p^-$.
\item $v^{-1} (r)$ and $v^{-1} (r-1)$ consist of one element. On these steps new minus-poles and minus-zeros do not appear.
\end{enumerate}
In light of having checked all of these case, we verify the statement of the lemma holds.
\end{proof}

\begin{proposition}
Assume that a diagram $s \in \s$ has $L= L(s)$ pivot pairs.
\begin{itemize}
\item If $n/2$ is a noninteger, then $R_s(z_k)$ has a pole of order not greater than $L-1$ at the point $z_k = z^* = n \frac{c}{2}$.
\item If $n/2$ is an integer, then $R_s(z_k)$ has a pole of order not greater than $L$ at the point $z_k = z^* = n \frac{c}{2}$.
\end{itemize}
\end{proposition}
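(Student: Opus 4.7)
The plan is a direct counting argument. I will identify exactly which factors in the denominator and numerator of $R_s(z_k)$ contribute a pole (resp.\ zero) at $z_k = nc/2$, classify each such contribution as a plus- or minus-pole (resp.\ zero) in the sense of Section~\ref{structure}, and then apply Lemmas~\ref{plus} and~\ref{minus} to obtain the stated bound.

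First I would set up coordinates. After the substitution prescribed by the diagram $s$, every variable has the form $z_i = \varepsilon_i z_k + \mathbf{q}_i c$ with $\varepsilon_i = -1$ if $i \in A(s)$ and $\varepsilon_i = +1$ if $i \in B(s)$, and $\mathbf{q}_i \in \mathbb{Z}$ determined by $s$. A brief induction along the string of arrows confirms that the definition $v(i) := \mathbf{q}_i - \varepsilon_i\, n/2$ agrees with Definition~\ref{def:func-v}, in particular recovering the base values $v(k) = n/2$ and $v(\tilde a) = 1 - n/2$ (note that $\tilde a \in A$ because the unique minus arrow puts $z_{\tilde a} = c - z_k$). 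With this normalization, $z_i\big|_{z_k = nc/2} = v(i)\,c$.

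Next I would classify the surviving factors of $R_s(z_k)$. The residue substitution removes exactly those denominator factors that correspond to arrows of $s$, so only the complementary pairs $\{i,j\}$ remain in the product. For each such pair, the four factors $(z_i \pm z_j)$ and $(z_i \pm z_j - c)$ become, after substitution, either a generically nonzero constant in $z_k$ or a nonconstant affine function of $z_k$; which of the two occurs is determined by the pair of signs $(\varepsilon_i,\varepsilon_j)$. A short case analysis then shows that the surviving factors which vanish at $z_k = nc/2$ are exactly: the denominator factor $(z_i + z_j - c)$ when $\varepsilon_i = \varepsilon_j$ and $v(i) + v(j) = 1$ (a plus-pole); the denominator factor $(z_i - z_j - c)$ when $\varepsilon_i \neq \varepsilon_j$ and the relevant $v$-difference equals $1$ (a minus-pole); and, analogously, the numerator factors that vanish are indexed by plus-zeros and minus-zeros respectively. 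One also has to briefly check that no constant factor accidentally equals zero, which follows from the fact that the combinatorial structure of a valid diagram makes the $\mathbf{q}_i$-values distinct within each of $A(s)$ and $B(s)$ (any equality would force a previously-taken residue).

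Consequently the order of the pole of $R_s(z_k)$ at $z_k = nc/2$ equals $(N_p^+ + N_p^-) - (N_0^+ + N_0^-)$, where $N_p^{\pm}$ and $N_0^{\pm}$ count plus/minus-poles and plus/minus-zeros, respectively. Applying Lemma~\ref{plus} gives $N_p^+ - N_0^+$ equal to $0$ when $n/2 \notin \mathbb{Z}$ and equal to $1$ when $n/2 \in \mathbb{Z}$, while Lemma~\ref{minus} gives $N_p^- - N_0^- \leq L - 1$ when $N_p^- \geq 1$ (and trivially $N_p^- - N_0^- \leq 0 \leq L$ when $N_p^- = 0$). Adding the two estimates yields the bounds $L-1$ and $L$ in the noninteger and integer cases, respectively. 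The main obstacle, in my view, is not any single deep step but rather the careful bookkeeping of the case analysis relating the signs $(\varepsilon_i,\varepsilon_j)$ to the four types of combinatorial objects; one must be systematic to avoid double counting pairs or mishandling the boundary factors involving $k$ and $\tilde a$.
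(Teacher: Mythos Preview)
Your proposal is correct and follows essentially the same route as the paper: identify which surviving factors of $R_s(z_k)$ vanish at $z_k=nc/2$, match them to plus/minus-poles and plus/minus-zeros, and then invoke Lemmas~\ref{plus} and~\ref{minus}. One small slip to fix: your formula should read $v(i)=\mathbf{q}_i+\varepsilon_i\,n/2$ (not $-\varepsilon_i\,n/2$), as your own check $v(k)=n/2$ with $\mathbf{q}_k=0$, $\varepsilon_k=+1$ shows.
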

\begin{proof}
During the transformation of
$$
R(z_1, \dots, z_k) = \prod_{i<j} \frac{(z_i-z_j) (z_i+z_j)}{(z_i-z_j- c) (z_i+ z_j -c )}
$$
into one-dimensional integral over $z_k$ encoded by a diagram $s \in \s$ some poles disappear. In more detail, in this transformation we use the poles $z_i - z_j-c$, where $i<j$ and $\sss(i,j) = 1$; also we use the pole in $z_{k-1} + z_k -c$. Therefore, the poles and zeros of the function $R_s (z_k)$ at $z^*$ should come from expressions of the other form. It is easy to see that the factor $(z_i-z_j-c)$ gives rise to a pole of $R_s (z_k)$ at $z^*=nc/2$ if and only if $(i,j)$ is a minus-pole in our terminology; in the similar vein, the same is true for factors of the form $(z_i+z_j-c)$ and plus-poles, factors $(z_i-z_j)$ and minus-zeros, factors $(z_i+z_j)$ and plus-zeros. Combining Lemmas \ref{plus} and \ref{minus} we obtain the statement of the proposition.
\end{proof}

\begin{remark}
\label{emptyCaseRemark}
If $s$ and $n$ are such that there are no new indices involved in steps 4) and 5) in the proof of Lemma \ref{minus}, then the order of the pole also does not exceed $L-1$, see step 4) of the proof of Lemma \ref{minus}.
\end{remark}

\subsection{Proof of Theorem \ref{main}}
Let us fix $z^* = nc/2$, $n \in \mathbb Z \backslash \{0\}$. Our goal is to prove that the function
$$
\sum_{s \in \s} R_s (z) F_s (z)
$$
has no pole at $z^*$. We shall give a partition of $\s$ into several disjoint sets such that the sum of functions over each set has no pole at $z^*$. Note that these sets depend on a fixed $z^*$. Moreover, the value $F_s( z^*)$ will be the same for diagrams $s$ from the same set, so our difficulties will come from the sum of the functions $R_s(z)$.

Let us choose $s \in \s$ and let us describe which diagrams are in the same set with $s$. Recall that $L(s)$ is the number of pivot pairs in $s$, and that the order of the pole at $z^*$ of the function $R_s(z_k)$ does not exceed $L(s)$.

\begin{definition} \label{def:pivot-pairs}
For the set $A(s)$ let us consider all elements of pivot pairs in this set: $w_1 < w_2 < \dots < w_{L(s)}$. For $1 \le i \le L(s)-1$ we consider the set $\mathcal A_i = \{ \p(w_i), \p (\p (w_i)), \dots, w_{i+1} \}$; we call $\mathcal A_i$ a \textit{block}. We also give the same definition for blocks $\mathcal B_i$ in $B(s)$. Note that for each $i$, $\mathcal A_i$ and $\mathcal B_i$ start and end with the elements of the same pivot pairs and have the same values of the function $v_s$ on its elements (see the definition of the function $v = v_s$ in Definition \ref{def:func-v}).
\end{definition}

We obtain the function $R_s (z_k)$ after certain substitution of variables prescribed by the diagram $s$. After this all variables $z_i$ are expressed through the variable $z_k$; let us denote by $z_i^s$ the value of variable $z_i$ after the substitution prescribed by $s$.

Let us also make a change of variables $\ep_k := z_k - nc/2$. It is easy to see that other variables can be written in the following way: if $i \in A(s)$, then $z_i^s = v(i) c -\ep_k$; if $i \in B(s)$, then $z_i^s = v(i) c + \ep_k$.

Our key transformation is a \textit{``swap''} of blocks $\mathcal A_i$ and $\mathcal B_i$. For a diagram $s \in \s$ we can consider the sets $\hat A_i := (A \backslash \mathcal A_i) \cup \mathcal B_i$ and $\hat B_i := (B \backslash \mathcal B_i) \cup \mathcal A_i$. Note that for any $x \in \{\mathcal A_i, \mathcal B_i\}$ and $y \in \{1, \dots, k\} \backslash \{\mathcal A_i, \mathcal B_i\}$ inequalities $x<y$ and $v(x) > v(y)$ are true or false simultaneously; this follows from the definition of pivot pairs and the fact that our blocks start and end with elements of the same pivot pairs. We obtain that the sets $\hat A_i$ and $\hat B_i$ give rise to the new diagram $s^i$.

\begin{example}
For a diagram
$$
1 \aRp 2 \aRp 3 \aRp 6 \aRp 7 \aRm 8 \aLp 5 \aLp 4
$$
and the pole at $z_k=c$ we have $n=2$, $z_8= c + \ep_k$, $z_7 = 0 - \ep_k$, $z_6 = c-\ep_k$, $z_5= 2c + \ep_k$, $z_4= 3c + \ep_k$, $z_3= 2c - \ep_k$, $z_2= 3c - \ep_k$, $z_1= 4c - \ep_k$. It is convenient to depict this diagram and $n$ as a diagram
\begin{gather*}
7 \ \ \ 6 \ \ \ 3 \ \ \ 2 \ \ \ 1  \\
8 \ \ \ 5 \ \ \ 4
\end{gather*}
In such a diagram the numbers with the same value of $v$-function are on the same vertical.

The pairs $(6,8)$ and $(2,4)$ are pivot pairs, and $(3,5)$ is not a pivot pair because $4$ is to the right of $3$ on the diagram above. $\mathcal A_1$ is $\{3,2\}$, and $\mathcal B_1$ is $\{5,4\}$.

The diagram $s^1$ is
$$
1 \aRp 4 \aRp 5 \aRp 6 \aRp 7 \aRm 8 \aLp 3 \aLp 2.
$$

\end{example}

It turns out that we can control how the functions $R_s (z_k)$ and $R_{s^i} (z_k)$ are related. Let us describe this relation.

Recall that the function $R(z_1, \dots, z_k)$ is obtained as a product of factors $\frac{(z_i-z_j) (z_i+z_j)}{(z_i-z_j-c) (z_i+z_j-c)}$ over all pairs of integers; when we consider a one-dimensional integral some factors disappear, and we should substitute $z_r^s = v_s (r) + \ep_k$ into all other factors.

Note that the values of variables from the blocks $\mathcal A_i$ and $\mathcal B_i$ change as $v_s (z_j)+ \ep_k \to v_s (z_j) - \ep_k$ and vice versa in $s$ and $s^i$. All other variables remain the same.

Let $I_i$ be the subset of all pairs of numbers from $1$ to $k$ such that both elements of the pair belong to $\mathcal A_i \cup \mathcal B_i$. Let
$$
\FF_{I_i} (z_1, \dots, z_k) := \prod_{(a,b) \in I_i} \frac{z_a - z_b}{z_a - z_b -c},
$$
and let $R_{s,i} (\ep_k)$ be the function obtained from $\FF_{I_i}$ after all substitutions prescribed by $s$. Note that we can write
$$
R_s \left(\frac{nc}{2}+\ep_k\right) = R_{s,0} \left(\frac{nc}{2} + \ep_k\right) R_{s,i} \left(\frac{nc}{2}+\ep_k\right),
$$
where the function $R_{s,0}$ comes from the product of all other factors.

\begin{lemma}
\label{transpose}
We have
$$
R_{s^i} \left( \frac{nc}{2} + \ep_k \right) = R_{s,0} \left( \frac{nc}{2} + \ep_k \right) R_{s,i} \left( \frac{nc}{2} - \ep_k \right).
$$
Therefore, in order to obtain the function $R_{s^i} (\ep_k)$ from $R_s (\ep_k)$ we need to change the sign of the variable $\ep_k$ in some part of the expression.
\end{lemma}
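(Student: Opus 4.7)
The plan is to verify the identity by a direct factor-by-factor comparison of $R_s(z_k)$ and $R_{s^i}(z_k)$, organized around the mirror structure of the blocks. The essential combinatorial input is that, by the construction in Definition \ref{def:pivot-pairs}, $\mathcal{A}_i$ and $\mathcal{B}_i$ span exactly the same set of values of the function $v_s$: to every $a\in\mathcal{A}_i$ corresponds a unique mirror $a^{*}\in\mathcal{B}_i$ with $v_s(a^{*})=v_s(a)$. Under the swap $s\to s^i$ the two blocks exchange roles, so $z_a^{s^i}=v_s(a)c+\ep_k$ for $a\in\mathcal{A}_i$ and $z_{a^{*}}^{s^i}=v_s(a^{*})c-\ep_k$ for $a^{*}\in\mathcal{B}_i$; equivalently, $z_a^{s^i}=z_{a^{*}}^{s}$ and $z_{a^{*}}^{s^i}=z_a^{s}$, while the variables outside $\mathcal{A}_i\cup\mathcal{B}_i$ are unaffected.

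Next, I would classify the surviving pair-factors of the original product $R(z_1,\ldots,z_k)$ by how they sit relative to the block: (O) both variables outside $\mathcal{A}_i\cup\mathcal{B}_i$; (IO) one inside, one outside; (II-s) both inside, of ``sum'' type $z_x+z_y$ or $z_x+z_y-c$; (II-d) both inside, of ``difference'' type $z_x-z_y$ or $z_x-z_y-c$. For (O) the two substitutions agree and factors are literally unchanged. For (IO), pair the factor for $(a,b)$ (with $a\in\mathcal{A}_i$, $b$ outside) with the factor for $(a^{*},b)$; the mirror identity above immediately shows that the values of these two factors interchange under $s\leftrightarrow s^i$, so their product is swap-invariant. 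For (II-s), pair $(a,b)\in\mathcal{A}_i^2$ with $(a^{*},b^{*})\in\mathcal{B}_i^2$: both products equal $((v_s(a)+v_s(b))c)^2-4\ep_k^2$ (and analogously for the factor with $c$ subtracted); cross sum factors with $a\in\mathcal{A}_i$, $b\in\mathcal{B}_i$ simplify to $(v_s(a)+v_s(b))c$, manifestly independent of $\ep_k$.

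The only factors which genuinely change under the swap sit in class (II-d): a within-$\mathcal{A}_i$ or within-$\mathcal{B}_i$ difference evaluates to $(v_s(a)-v_s(b))c$, independent of $\ep_k$ and unchanged by the swap, whereas a cross difference with $a\in\mathcal{A}_i$, $b\in\mathcal{B}_i$ evaluates to $(v_s(a)-v_s(b))c-2\ep_k$ under $s$ and to $(v_s(a)-v_s(b))c+2\ep_k$ under $s^i$, exactly realising $\ep_k\mapsto-\ep_k$. Collecting contributions, the classes (O), (IO), (II-s), together with the $\ep_k$-independent same-side part of (II-d), assemble into the factor $R_{s,0}(\tfrac{nc}{2}+\ep_k)$, invariant under the swap; the remaining cross (II-d) factors assemble into $R_{s,i}$ and transform as $\ep_k\mapsto-\ep_k$. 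This is the desired identity.

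The main obstacle will be the bookkeeping of residue-consumed denominator factors: the plus-arrow sets of $s$ and $s^i$ lie on different pairs of indices, so a priori different factors of the original product survive into $R_s$ versus $R_{s^i}$. I expect this to be handled by observing that the arrow graph of $s^i$ is obtained from that of $s$ by relabelling each block index by its mirror, so that the set of residue-consumed factors of $s$ maps bijectively, under this mirror map on pairs, to the set of residue-consumed factors of $s^i$; the classification above then applies uniformly to the surviving factors. A secondary subtlety is to track within-block denominators of the form $z_a+z_b-c$ with $v_s(a)+v_s(b)=1$, which are the source of the actual pole at $z_k=nc/2$ and which must be checked to remain properly balanced across the mirror pairing in (II-s).
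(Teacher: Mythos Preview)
Your approach is essentially the same as the paper's, with a slightly finer decomposition. The paper organizes the argument as follows: (i) all sum-type factors $(z_i+z_j)$, $(z_i+z_j-c)$ are swap-invariant simply because the multiset $\{z_1^s,\ldots,z_k^s\}$ coincides with $\{z_1^{s^i},\ldots,z_k^{s^i}\}$; (ii) all difference-type factors with both indices in $\mathcal A_i\cup\mathcal B_i$ transform via $\ep_k\to -\ep_k$, since both variables flip sign; (iii) difference-type factors with one index outside are handled by exactly your mirror-pairing argument. Your treatment of (II-s) via explicit mirror pairs and your split of (II-d) into same-side (invariant) and cross (flipping) are more granular but equivalent. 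One point of alignment to fix: the paper's $R_{s,i}$ is defined as the product over \emph{all} pairs in $I_i$ of difference-type factors, not just the cross ones, so your final assembly statement should place the same-side (II-d) factors into $R_{s,i}$ rather than $R_{s,0}$; since those factors are $\ep_k$-independent, this is immaterial to the identity but matters for matching the stated decomposition. The paper's multiset shortcut for sum-type factors is worth noting, as it dispatches (II-s) and the sum-type part of (IO) in one stroke with no pairing required. Your concern about the bookkeeping of residue-consumed denominators is legitimate (the paper is also implicit on this point); your proposed resolution via the mirror relabelling of block-internal and boundary arrows is correct and is precisely what underlies the (IO) pairing in both arguments.
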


\begin{proof}
Note that the whole set of variables takes exactly the same values in the cases of $s$ and $s^i$. Therefore, the product of factors of the form $(z_i+ z_j)$ and $(z_i + z_j -c)$ over all pairs gives exactly the same in both cases.

If both indices are inside $\mathcal A_i \cup \mathcal B_i$, then the value of $(z_i- z_j)$ can be obtained by the change $\ep_i \to (-\ep_i)$, because in both variables this change happens.

Let us consider the case when one variable $\bj$ is outside of blocks, while the other variable is inside blocks. Assume that $\bj$ is less than all numbers inside the blocks (the opposite case, when $\bj$ is greater than all numbers inside the blocks, can be considered in the same way). We can group all these factors into pairs
$$
\frac{(z_{\bj} - z_a) (z_{\bj} - z_{\bar a})}{(z_{\bj} - z_a - c) (z_{\bj} - z_{\bar a} - c)},
$$
where $(a,\bar a)$ are such that $a \in \mathcal A_i$, $\bar a \in \mathcal B_i$, and $v_s(a) = v_s (\bar a)$.
Note that in both cases ($s$ and $s^i$) the expression above is equal to
$$
\frac{(z_{\bj}^s - v_s(a) - \ep_k) (z_{\bj}^s - v_s(a)+ \ep_k)}{(z_{\bj}^s - v_s(a) - c - \ep_k) (z_{\bj}^s - v_s(a) + \ep_k - c)},
$$
therefore, the product over the pairs with fixed index $\bj$ outside the blocks and another index from the blocks gives the same expression for the diagrams $s$ and $s^i$.

\end{proof}

Now we are able to prove Theorem \ref{main} in the case of non-integer $n/2$. We know that in this case the order of the pole does not exceed $L(s)-1$. We also know that the diagram $s$ contains $L(s)-1$ pairs of blocks $(\mathcal A_1, \mathcal B_1)$, $\dots$, $(\mathcal A_{L-1}, \mathcal B_{L-1})$. Let us ``swap'' these blocks in all possible ways. We obtain $2^{L(s)-1}$ different diagrams; denote this set by $T_s$.

It is enough to prove that the sum of $R_s (z_k)$ over $T_s$ does not have a pole at $z^*$. Indeed, the value of $F_s (z_k)$ is the same for any diagram $s$ from $T_s$, because all variables take the same values for diagrams from $T_s$.

Note that the sets of pairs of indices in which the transformation $\ep_k \to (-\ep_k)$ happens do not intersect for different parameters $i$ of blocks $(\mathcal A_i$, $\mathcal B_i)$. Therefore, one can write
$$
R_s (nc/2+\ep_k) = \hat R_{s,0} (nc/2 + \ep_k) R_{s,1} (nc/2 + \ep_k) \dots R_{s, L(s)-1} (nc/2 + \ep_k),
$$
where $\hat R_{s,0}$ does not change under our ``swaps'', and $R_{s,i}$ corresponds to the ``swap'' of $(\mathcal A_i, \mathcal B_i)$. With the use of Lemma \ref{transpose} we see that the sum of all functions corresponding to these $2^{L(s)} -1$ diagrams can be written as
$$
\hat R_0 \left( \frac{nc}{2} + \ep_k \right) \sum_{s \in T_s} R_1 \left( \frac{nc}{2} \pm \ep_k \right) \dots R_{L-1} \left( \frac{nc}{2} \pm \ep_k \right).
$$
Now we need to consider a pole of this sum at $\ep_k=0$. Consider the pairs of indices such that one of them belongs to $\mathcal A_i$ and the other one belongs to $\mathcal B_i$. It is easy to see that there are exactly $|\mathcal A_i|$ minus-zeroes among these pairs, and exactly $|\mathcal A_i|-1$ minus-poles. Therefore, there are exactly $2 |\mathcal A_i|-1$ changes of signs in the factors which give $\ep_k$ or $\ep_k^{-1}$. Thus, the sum in the previous formula can be written as

\begin{equation*}
\frac{\hat R_0 \left( \frac{nc}{2} + \ep_k \right)}{\ep_k^{deg}} \sum_{s \in T_s} (-1)^{\mbox{number of minuses}} \tilde R_1 \left( \frac{nc}{2} \pm \ep_k \right) \dots \tilde R_{L-1} \left( \frac{nc}{2} \pm \ep_k \right),
\end{equation*}
where $\hat R_0$, $\tilde R_i$ has no singularities at $\ep_k =0$ and the degree $deg$ does not exceed $L(s)-1$. But one can write such a sum in the form
$$
\frac{\hat R_0 \left( \frac{nc}{2} + \ep_k \right)}{\ep_k^{deg}} \left( \tilde R_1 \left( \frac{nc}{2} + \ep_k \right) - \tilde R_1 \left( \frac{nc}{2} - \ep_k \right) \right) \dots \left( \tilde R_{L-1} \left( \frac{nc}{2} + \ep_k \right) - \tilde R_{L-1} \left( \frac{nc}{2} - \ep_k \right) \right).
$$
It is clear that such an expression does not have a pole at $\ep_k =0$.

One can readily see that all diagrams from $\s$ can be split into such disjoint groups $T_s$. This completes the proof of Theorem \ref{main} in the case of non-integer $n/2$.

Now we need to consider the case when $n/2$ is an integer. In this case the order of the pole at $\ep_k=0$ can be equal to $L(s)$. The proof uses the same mechanism, but we need to consider one more allowed transform of the diagram $s$ (in addition to ``swaps'' of blocks $\mathcal A_i$ and $\mathcal B_i$ for $1 \le i \le L(s)-1$). Let us define new blocks $\mathcal A_L = \{ x \in A(s) : x< w_L \}$ and $\mathcal B_L = \{ x \in B(s) : x < w'_L \}$ (see the definition of $w_{L}, w'_{L}$ in Definition \ref{def:pivot-pairs}).

First, let us consider the case when both new blocks are empty. Then the results of Section \ref{structure} (see Remark \ref{emptyCaseRemark}) show that the order of the pole at $\ep_k=0$ does not exceed $L(s)-1$; therefore, it is enough to consider ``swaps'' of previously defined blocks in order to obtain a cancelation of residues.

In the general case, the difference of this pair of blocks from the previous ones is that in this pair the blocks can have different number of elements. However, we still can ``swap'' these blocks as we did above and obtain a new diagram which we denote $s^L$ (again, we obtain a correctly defined diagram due to the definition of pivot pairs).

Let us describe how the functions $R_s (z_k)$ and $R_{s^L} (z_k)$ are related.

\begin{definition}
We call the elements of $\mathcal A_L \cup \mathcal B_L$ which belong to some minus-zero \textit{regular}, and we call other numbers from $\mathcal A_L \cup \mathcal B_L$ \textit{irregular}. Let us denote by $\mathcal P$ the elements from $\{1,\dots,k\} \backslash (\mathcal A_L \cup \mathcal B_L)$ which belong to some minus-zero; the other elements from $\{1,\dots,k\} \backslash (\mathcal A_L \cup \mathcal B_L)$ we denote by $\mathcal{NP}$.
\end{definition}

\begin{example}
Consider a diagram with $n/2=-1$ represented by the diagram
\begin{gather*}
15 \ \ \ 13 \ \ \ 12 \ \ \ 10 \ \ \ \ 9 \ \ \ \ 8 \ \ \ 6 \ \ \ 5 \ \ \ 3 \ \ \ 1 \\
\ \ \ \ \ \ \ \ \ 14 \ \ \ 11 \ \ \ 7 \ \ \ 4 \ \ \ 2
\end{gather*}
For this diagram we have 1 plus-zero $(12,15)$, 2 plus-poles $(10,15)$ and $(12,13)$, 5 minus-zeros, 6 minus-poles, two pivot pairs $(7,8)$ and $(9,11)$ (other minus-zeros are not pivot pairs), $L(s)=2$. We have $\mathcal A_1=\{7 \}$ and $\mathcal B_1 = \{8\}$, $\mathcal A_2 = \{2,4 \}$ and $\mathcal B_2 = \{1,3,5,6\}$. The indices $\{1,3\}$ are irregular, the indices $\{2,5,4,6\}$ are regular. The set $\mathcal{NP}$ is $\{12,13,15\}$ and the set $\mathcal P$ is $\{7,8,9,11,10,14\}$.
\end{example}

Let $J_1$ be the set of pairs of indices such that one of the indices is irregular and another belongs to $\mathcal{NP}$. Let $J_2$ be the set of pairs of indices such that one of the indices is irregular and another belongs to $\mathcal{P}$.
Let $J_3$ be the set of pairs of indices such that both indices belong to $\mathcal A_L \cup \mathcal B_L$.

Let
$$
R_{J_1,L} (z_1, \dots, z_k) := \prod_{(i,j) \in J_1} \frac{z_i+z_j}{z_i+z_j-c} \frac{z_i-z_j}{z_i-z_j-c},
$$

$$
R_{J_2,L} (z_1, \dots, z_k) := \prod_{(i,j) \in J_2} \frac{z_i+z_j}{z_i+z_j-c} \frac{z_i-z_j}{z_i-z_j-c},
$$

$$
R_{J_3,L} (z_1, \dots, z_k) := \prod_{(i,j) \in J_3} \frac{z_i+z_j}{z_i+z_j-c} \frac{z_i-z_j}{z_i-z_j-c},
$$
let $R^1_s (nc/2+ \ep_k)$, $R^2_s (nc/2+ \ep_k)$, and $R^3_s (nc/2+ \ep_k)$ be the functions which are obtained from the functions above after the substitution of variables prescribed by $s$, and let
$$
R_{s,L} \left( \frac{nc}{2} + \ep_k \right) := R^1_s \left( \frac{nc}{2} + \ep_k \right) R^2_s \left( \frac{nc}{2} + \ep_k \right) R^3_s \left( \frac{nc}{2} + \ep_k \right).
$$
Note that we can write
$$
R_s \left( \frac{nc}{2} + \ep_k \right) = \hat R_{s,0} \left( \frac{nc}{2} + \ep_k \right) R_{s,L} \left( \frac{nc}{2} + \ep_k \right),
$$
where $\hat R_0$ comes as a product over other pairs of indices.

\begin{lemma}
We have for $j=1,2,3$
$$
R^j_{s^L} \left( \frac{nc}{2}+ \ep_k \right) = R^j_s \left( \frac{nc}{2} - \ep_k \right).
$$
We also have the equality
$$
R_{s^L} \left( \frac{nc}{2}+ \ep_k \right) = \hat R_0 \left( \frac{nc}{2} + \ep_k \right) R_L \left( \frac{nc}{2} - \ep_k \right),
$$
similar to the case of other blocks.
\end{lemma}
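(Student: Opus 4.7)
The plan is to prove both equalities by closely mirroring the blueprint of Lemma~\ref{transpose}, with the added complication that the two blocks $\mathcal A_L$ and $\mathcal B_L$ may have different sizes. I split the product $R_{s,L}$ into its three constituent pieces $R^1_s, R^2_s, R^3_s$, establish the symmetry $R^j_{s^L}(nc/2+\epsilon_k)=R^j_s(nc/2-\epsilon_k)$ for each $j$ separately, and then multiply by the "untouched" factor $\hat R_{s,0}$, whose invariance under $s\mapsto s^L$ must also be checked.

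The easy part is $J_3$ together with $\hat R_{s,0}$. For pairs in $J_3$, both indices lie in $\mathcal A_L\cup\mathcal B_L$, so the swap $s\mapsto s^L$ flips the $\epsilon_k$-sign of both variables; substituting $\epsilon_k\to -\epsilon_k$ into $R^3_s$ produces exactly $R^3_{s^L}(+\epsilon_k)$. For $\hat R_{s,0}$, which consists of all pairs not in $J_1\cup J_2\cup J_3$, I argue invariance by pairing variables at the same $v$-value. Any regular index in $\mathcal A_L$ has a unique partner (again regular) in $\mathcal B_L$ sitting at the same $v$-value, so for any outside index $\bj$ one can group the factors as in the proof of Lemma~\ref{transpose}: the two factors with $\bj$ fixed and the paired $(a,\bar a)$ combine into an expression that is manifestly symmetric in $(z_a,z_{\bar a})$, hence unchanged by the swap. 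Pairs with both indices outside $\mathcal A_L\cup\mathcal B_L$ do not see the swap at all.

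The main obstacle lies in $J_1$ and $J_2$, where the lack of an internal partner for irregular indices prevents a direct application of the Lemma~\ref{transpose} pairing. Here I would proceed by pairing \emph{outside} indices rather than inside ones. For $J_2$, each element $p\in\mathcal P$ participates in some minus-zero with a second outside element $p'$ having $v(p)=v(p')$; when $\sss(p,p')=-1$ the variables $z_p^s$ and $z_{p'}^s$ have opposite $\epsilon_k$-signs, and for every fixed irregular $i$ the two factors $(z_i\pm z_p)(z_i\pm z_p-c)^{-1}\cdot(z_i\pm z_{p'})(z_i\pm z_{p'}-c)^{-1}$ combine into a product symmetric in $(z_p,z_{p'})$. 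This symmetry is exactly what is needed to convert the swap on $z_i$ alone into a full $\epsilon_k\to -\epsilon_k$. The remaining minus-zeros in $\mathcal P$ (those with $\sss=1$) have to be handled by noting that they cannot pair with irregular indices at all after making the substitution prescribed by $s$. For $J_1$, since $\mathcal{NP}$-elements do not share $v$-values with any other outside variable, I would exploit the structural fact that every irregular index sits at a $v$-value not shared with any regular block-element, so each factor in $J_1$ is individually unaffected by the transformation—with the expected sign flip coming only from the $z_i$ variable.

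Once the three identities $R^j_{s^L}(nc/2+\epsilon_k)=R^j_s(nc/2-\epsilon_k)$ are in hand, multiplying by the invariant $\hat R_{s,0}(nc/2+\epsilon_k)$ yields the second equality in the lemma. The hard part I anticipate is the detailed bookkeeping in the $J_1$ case: verifying that the interplay between the $v$-values of $\mathcal{NP}$ and of irregular indices really does make each factor individually symmetric, or, failing that, finding an alternative telescoping across the whole irregular set. Once this technical point is settled, the lemma plugs into the pairing argument of the preceding subsection exactly as in the non-integer $n/2$ case: the paired diagrams $s$ and $s^L$ contribute expressions whose sum is of the form $\hat R_0(+\epsilon_k)\big(R_L(+\epsilon_k)+R_L(-\epsilon_k)\big)$ times the invariant factor $F_s(z^*)$, which reduces the pole order by one and enables completing Theorem~\ref{main} in the integer $n/2$ case.
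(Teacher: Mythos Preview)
Your treatment of $J_3$, $J_2$, and $\hat R_{s,0}$ is essentially the same as the paper's and is correct. (One small point: your worry about minus-zeros in $\mathcal P$ with $\sss=1$ is unnecessary---in a single string, two distinct indices with the same $v$-value must lie in opposite parts $A(s),B(s)$, so $\sss=-1$ automatically.)

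The genuine gap is in $J_1$. Your claim that ``each factor in $J_1$ is individually unaffected by the transformation---with the expected sign flip coming only from the $z_i$ variable'' does not hold. Under the swap $s\mapsto s^L$ only the irregular $z_i$ flips its $\epsilon_k$-sign, while $z_j$ (for $j\in\mathcal{NP}$) does not; hence an individual factor like $(z_i-z_j)/(z_i-z_j-c)$ is \emph{not} sent to its image under the global substitution $\epsilon_k\to-\epsilon_k$, because $z_j$ also depends on $\epsilon_k$. No amount of looking at $v$-values of irregular indices alone will repair this.

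The paper's key observation, which you are missing, is structural: the variables from $\mathcal{NP}$ take the symmetric set of values $\{-Nc+\epsilon_k,\dots,0+\epsilon_k,\dots,Nc+\epsilon_k\}$ (all with the same $\epsilon_k$-sign), where $N=|n|/2$. This gives an involution $j\leftrightarrow\bar j$ on $\mathcal{NP}$ with $v_s(\bar j)=-v_s(j)$. For a fixed irregular $i$ one then checks directly that the unordered pair $\{z_i^s-z_j^s,\,z_i^s+z_j^s\}$ goes, under $\epsilon_k\to-\epsilon_k$, to the pair $\{z_i^{s^L}-z_{\bar j}^{s^L},\,z_i^{s^L}+z_{\bar j}^{s^L}\}$ (and similarly with $j$ and $\bar j$ interchanged). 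In other words, the pairing for $J_1$ is not among factors with the same outside index---it is across outside indices with opposite $v$-values, and it simultaneously exchanges the plus- and minus-type factors. Since the product $R_{J_1,L}$ already contains both plus and minus factors for every $(i,j)\in J_1$, and since $j\mapsto\bar j$ is a bijection of $\mathcal{NP}$, this yields $R^1_{s^L}(nc/2+\epsilon_k)=R^1_s(nc/2-\epsilon_k)$.
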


\begin{proof}

Let $i$ be an irregular index; it is less than indices from $\mathcal P$ and $\mathcal{NP}$ due to the definition of a pivot pair. Then $z^{s^L}_i = v_s(i) c \mp \ep_k$.

Let $N := \max (n/2, - n/2)$; after the substitution prescribed by $s$ the variables from $\mathcal{NP}$ take values $\{-Nc+\ep_k, \dots, 0+\ep_k, \dots, Nc+\ep_k\}$ or the same expressions with $(-\ep_k)$ instead of $(+\ep_k)$. Assume that the $+$ sign appears here; the opposite case can be considered in the same way. For an index $j \in \mathcal{NP}$ with $z^s_j = v_s (j) c + \ep_k$ let $\bar j$ denote the index such that $z^s_{\bar j} = - v_s(j) c +\ep_k$.

For a fixed irregular index $i$ and for any $j \in \mathcal{NP}$ one can directly verify that the sets $\{ z_i^s - z_j^s, z_i^s+z_j^s\}$ and $\{z_i^{s^L} - z_{\bar j}^{s^L}, z_i^{s^L} + z_{\bar j}^{s^L} \}$ are obtained from each other by $\ep_k \to (-\ep_k)$. Also the same is true for the sets $\{ z_i^{s^L} - z_j^{s^L}, z_i^{s^L}+z_j^{s^L} \}$ and $\{z_i^{s} - z_{\bar j}^{s}, z_i^{s} + z_{\bar j}^{s} \}$. The statement of the lemma about the function $R^1_{s^L}$ follows from this.

For any $a \in \mathcal P$ denote by $b$ the index such that $(a,b)$ is a minus-zero. We have $z_a^s = v_s(a) + \ep_k$ and $z_b^s = v_s(a) - \ep_k$ (or vice versa). One can directly verify that the sets $\{ z_i^s - z_a^s, z_i^s- z_b^s\}$ and $\{z_i^{s^L} - z_a^{s^L}, z_i^{s^L} - z_b^{s^L} \}$ are obtained from each other by $\ep_k \to (-\ep_k)$. Also the same is true for the sets $\{ z_i^s + z_a^s, z_i^s + z_b^s\}$ and $\{z_i^{s^L} + z_a^{s^L}, z_i^{s^L} + z_b^{s^L} \}$. The statement of the lemma about the function $R^2_{s^L}$ follows from this.

If both indices belong to $\mathcal A_{L(s)} \cup \mathcal B_{L(s)}$, then in both indices the change $\ep_k \to (-\ep_k)$ happens. The statement of the lemma about the function $R^3_{s^L}$ follows from this.

If both indices do not belong to $\mathcal A_{L(s)} \cup \mathcal B_{L(s)}$, then obviously corresponding factors do not change.

The only remaining case is that one index is regular, and the other is from $\{1, \dots, k\} \backslash \left( \mathcal A_L \cup \mathcal B_L \right)$; the product over corresponding factors does not change --- this can be shown in the same way as in the proof of Lemma \ref{transpose}.

\end{proof}

Note that the sets of pairs of indices which give rise to the change $\ep_k \to (-\ep_k)$ in corresponding factors do not intersect for the pairs of blocks $(\mathcal A_1, \mathcal B_1)$, $\dots$, $(\mathcal A_L, \mathcal B_L)$. Let us swap these blocks in all possible ways, obtaining $2^{L}$ different diagrams. Recall that the order of the pole at $\ep_k=0$ does not exceed $L$. In exactly the same way as before we can show that the sum over these diagrams does not have a pole at $\ep_k=0$ which implies Theorem \ref{main} in the general case.

\begin{example}
Consider the diagram $s$
$$
1 \aRp 2 \aRp 3 \aRp 6 \aRp 7 \aRm 8 \aLp 5 \aLp 4.
$$
One can compute that
$$
R_s (z_8) = \frac{315(-2 z_8+5c)(2 z_8+c) c^7 (-2 z_8+9c)(-2 z_8+7c)(-4z_8^2+9c^2)}{(-2z_8+c)^2 (-z8+c)^2 z_8^2}.
$$
and the residue at point $z_8= c$ is equal to $-4110750 c^8$. In this case our blocks are $\mathcal A_1 = \{2,3\}$, $\mathcal B_1 = \{4,5\}$, $\mathcal A_2 = \{1\}$, and the block $\mathcal B_2$ is empty. Transposing these blocks in all possible ways we obtain three new diagrams:
\begin{align*}
1 \aRp 4 \aRp 5 \aRp 6 \aRp 7 \aRm 8 \aLp 3 \aLp 2, \\
2 \aRp 3 \aRp 6 \aRp 7 \aRm 8 \aLp 5 \aLp 4 \aLp 1, \\
4 \aRp 5 \aRp 6 \aRp 7 \aRm 8 \aLp 3 \aLp 2 \aLp 1.
\end{align*}
Computations of residues of functions corresponding to these diagrams at the point $z_8=c$ give numbers $-850500 c^8$, $3827250 c^8$, and $1134000 c^8$; we have
$$
-4110750 -850500 + 3827250 + 1134000 =0.
$$
\end{example}

\end{document}